\providecommand{\U}[1]{\protect\rule{.1in}{.1in}}
\newtheorem{theorem}{Theorem}
\newtheorem{corollary}[theorem]{Corollary}
\newtheorem{definition}[theorem]{Definition}
\newtheorem{example}[theorem]{Example}
\newtheorem{lemma}[theorem]{Lemma}
\newtheorem{proposition}[theorem]{Proposition}
\newtheorem{remark}[theorem]{Remark}
\newenvironment{proof}[1][Proof]{\noindent\textbf{#1.} }{\ \rule{0.5em}{0.5em}}
\newcommand*{\ket}
[1]{\mathopen{|}#1\mathclose{\rangle}}
\definecolor{nblue}{rgb}{0.2,0.2,0.7}
\definecolor{ngreen}{rgb}{0.2,0.6,0.2}
\definecolor{nred}{rgb}{0.7,0.2,0.2}
\definecolor{nblack}{rgb}{0,0,0}
\newcommand{\blk}{\color{nblack}}
\begin{document}
%
%
%
%
%
%

\title{The theory of manipulations of pure state asymmetry I: basic tools, equivalence classes, and single copy transformations}
\author{Iman Marvian}
\affiliation{Perimeter Institute for Theoretical Physics, 31 Caroline St. N, Waterloo, \\
Ontario, Canada N2L 2Y5}
\affiliation{Institute for Quantum Computing, University of Waterloo, 200 University Ave. W, Waterloo, Ontario, Canada N2L 3G1}

\author{Robert W. Spekkens}
\affiliation{Perimeter Institute for Theoretical Physics, 31 Caroline St. N, Waterloo, \\
Ontario, Canada N2L 2Y5}
\date{March 18, 2011}

\begin{abstract}

 If a system undergoes symmetric dynamics, then the final state of the system can only break the symmetry in ways in which it was broken by the initial state, and its measure of asymmetry can be no greater than that of the initial state. It follows that for the purpose of understanding the consequences of symmetries of dynamics, in particular, complicated and open-system dynamics, it is useful to introduce the notion of a state's \emph{asymmetry properties}, which includes the type and measure of its asymmetry.  We demonstrate and exploit the fact that the asymmetry properties of a state can also be understood in terms of information-theoretic concepts, for instance in terms of the state's ability to encode information about an element of the symmetry group. We show that the asymmetry properties of a pure state $\psi$ relative to the symmetry group $G$ are completely specified by the characteristic function of the state, defined as $\chi_{\psi}(g)\equiv \langle \psi|U(g)|\psi\rangle$ where $g\in G$ and $U$ is the unitary representation of interest. For a symmetry described by a compact Lie group $G$, we show that two pure states can be reversibly interconverted one to the other by symmetric operations if and only if their characteristic functions are equal up to a 1-dimensional representation of the group.  Characteristic functions also allow us to easily identify the conditions for one pure state to be converted to another by symmetric operations (in general irreversibly) for the various paradigms of single-copy transformations: deterministic, state-to-ensemble, stochastic and catalyzed.
\end{abstract}

\maketitle

\tableofcontents

\section{Introduction}

Symmetry arguments are ubiquitous in physics.  Their prominence stems from the fact that for many systems of interest, the dynamics are sufficiently complicated that one cannot hope to characterize their evolution completely, whereas by appealing to the symmetries of the dynamical laws one can easily infer many useful results.
One of the best known examples of such a result is Noether's theorem, according to which a differentiable symmetry of the Hamiltonian or action entails a conservation law (See, e.g. \cite{Goldstein}).  But there are innumerable results of this sort; symmetry arguments have broad applicability across many fields of physics.

We are interested in determining all the consequences of a symmetry of the dynamics in quantum theory. To find these consequences we ask the following question: Given two quantum states, $\rho$ and $\sigma$, does there exist a time evolution with the given symmetry such that under this time evolution the first state evolves to the second?

Suppose, for instance, that the symmetry under consideration is rotational symmetry. Clearly, rotationally-invariant time evolutions cannot take a rotationally-symmetric state to one that breaks the rotational symmetry.  So to answer these types of questions we need to know the extent to which each of the two states breaks the rotational symmetry. It is intuitively clear that there are many different ways in which a quantum state may be asymmetric.  For instance, consider a spin-1/2 particle with spin in the $\hat{z}$ direction and another with spin in the $\hat{x}$ direction.  Neither is invariant under the full rotation group, but because they point in different directions, they break the rotational symmetry differently.  Furthermore, it is intuitively clear that asymmetry must be quantifiable.  For instance, the precision with which one can specify a direction in space, a measure of rotational asymmetry, varies with the quantum state one uses to do so.

We will say that two states have exactly the same \emph{asymmetry properties} (with respect to a given symmetry group) if there exists a symmetric time evolution which transforms the first state to the second and a symmetric time evolution which transforms the second state to the first.  Thus, the symmetric operations define equivalence classes of states and the asymmetry properties of a state are precisely those that are necessary and sufficient to determine its equivalence class.  If the symmetry in question is associated with a representation of the group G, we call the equivalence relation \emph{G-equivalence}.  We will consider G-equivalence classes of pure states for the case of arbitrary compact Lie groups and finite groups.


The above definition of asymmetry properties  is based on the intuition that asymmetry is something which cannot be generated by symmetric time evolutions.  We call this the \emph{constrained-dynamical} perspective.

 However, one can also take an \emph{information-theoretic} perspective on how to define the asymmetry properties of a state.  Recall that a quantum state breaks a symmetry, say rotational symmetry, if for some non-trivial rotations, the rotated version of the state is not the same as the state itself, i.e. they are distinguishable.   In this case, the ensemble of states corresponding to the orbit of the state under rotations can act as an encoding when the message to be encoded is an element of the rotation group.


To understand better the information-theoretic point of view, consider the following scenario:  Suppose Alice  wants to inform Bob about  a randomly chosen direction in space. She can  prepare a quantum system specifying the direction and send it to Bob. For example, to send a direction in a plane she may prepare a number of photons  polarized  in that direction. Clearly to transmit more information about this direction, Alice should prepare the quantum system in a state which sharply specifies the chosen direction. Such a state should break the rotational symmetry as much as possible.  Again the relevant property of the state which determines its quality as a pointer can be called its asymmetry. This example suggests that the information-theoretic point of view should be relevant for the study of asymmetry.

We will show that these two approaches to the notion of asymmetry, the constrained-dynamical and the information-theoretic, provide equivalent characterizations of asymmetry.  It follows that one can exploit the machinery of information theory for the study of asymmetry and for finding the consequences of symmetry of the dynamics.
In this paper, we will find the characterization of the G-equivalence classes of pure states using both the constrained-dynamical and the information-theoretic approaches and we will show how these two characterizations are in fact equivalent via the Fourier transform.


 In the above scenario the quantum system which is sent to Bob to transfer information about direction is called a \emph{quantum reference frame} (See \cite{BRS07} for a review of this topic). The  theory of quantum reference frames deals with the problem of  using quantum systems to transfer information, such as a direction in space, which is \emph{unspeakable}, i.e. cannot be transferred by sending a sequence of $0$s and $1$s if two agents do not have access to some shared background reference frame.  In other words, unspeakable information can only be encoded in particular degrees of freedom. For example, information about a direction in space cannot be encoded in degrees of freedom that transform trivially under rotations.

  Therefore this example suggests that the  study of asymmetry is not only useful to learn about the consequences of symmetries of dynamics but it  is also useful for the study of quantum reference frames.  The relevant property of the state which specifies how well it can act as a quantum reference frame is the asymmetry of the state. Indeed, in previous work, the asymmetry has been called the \emph{frameness} of the state \cite {GS07,GMS09}. Therefore all the results about the manipulation of reference frames and their frameness are in fact results about the asymmetry of states. In particular \cite {GS07} presents a systematic study of the manipulation of  pure state  asymmetry  for groups $U(1)$ and $Z_{2}$ and also presents some partial  results for the case of $SO(3)$. In the present paper, using a different approach based on  characterizing  the equivalence classes of asymmetries of pure states, we are able to generalize the results in \cite {GS07} significantly and to extend their scope from a few particular groups to arbitrary compact Lie groups and finite groups.

The main focus of this paper is to characterize the asymmetry of pure states. Another interesting aspect of asymmetry which has been studied previously is the problem of finding \emph{measures of asymmetry} or \emph{asymmetry monotones} \cite{Vac-Wise-Jac, Skot-Gour, Tol-Gour-Sand}. An asymmetry monotone is a function from states to real numbers which quantifies the amount of asymmetry of a state relative to a given symmetry group. This notion is mainly inspired by the notion of entanglement monotones in entanglement theory.  \footnote{Also, earlier related work has considered state-interconversion in the context of bipartite systems where two distant parties are under a U(1)-superselection rule motivated by a particle number conservation law \cite{SVC04_PRA,SVC04_PRL}.}


\subsection*{The resource theory point of view}

We can think of the study of asymmetry as \emph{a resource theory}.  Any resource theory is specified by a convex set of free states and a semi-group of free  transformations (which must  map the set of free states to itself). Any non-free state is called a \emph{resource}. The resource theory is the study of manipulations of resources under the free transformations.  As we will explain,  there are several types of questions and arguments that are relevant for all resource theories and so this point of view can help to achieve a better understanding of a specific resource theory by emphasizing its analogies with other resource theories.

A well-known example of  a resource theory is the theory of entanglement. The free transformations in this case are those which can be implemented by local operations and classical communications (LOCC).  The set of free states is the set of unentangled states. This set is closed under LOCC, i.e. an unentangled state cannot be transformed to an entangled one via LOCC \cite{Nie00}. More generally, given two quantum states one cannot necessarily transform the first one to the second with LOCC. Here the relevant properties of the states which determine whether such a transformation is possible or not are their entanglement properties.  In the case of pure bipartite states it is a well-known fact that the entanglement properties of a state are uniquely specified by its Schmidt coefficients \cite{Nie00}.  For example, Nielsen's theorem provides the necessary and sufficient condition for the existence of LOCC operations which transform one given state to another in terms of their Schmidt coefficients \cite{Nielsen-Ent}. Entangled states are also a resource in the sense that they can be used to implement tasks that are impossible by LOCC and unentangled states alone. For example, one can use entangled states for teleportation, which can be interpreted as consuming a resource (entanglement) to simulate a non-free transformation (a quantum channel) via free transformations (LOCC).

Similarly, we can think of  the study of asymmetry relative to a given representation of a group $G$ as a resource theory. In this resource theory the time evolutions which respect the symmetry (G-covariant time evolutions) are free transformations and the states which do not break the symmetry (G-invariant states) are the free states. This is a consistent choice because G-covariant time evolutions  form a semi-group under which the set of G-invariant  states is mapped to itself.  Similarly to  entanglement theory, a resource (an asymmetric state) can be used to simulate a non-free transformation (non-G-covariant time evolution)   via a free transformation (G-covariant time evolution).

In the resource theory of asymmetry,  we seek to classify different types of resources and to find the rules governing their manipulations. For every question in entanglement theory, it is useful to ask whether there is an analogous question in the resource theory of asymmetry.  In this paper, we will show that all the asymmetry properties of a pure state $\psi$ relative to the group $G$ and the unitary representation $\{U(g),g\in G\}$ are specified by its \emph{characteristic function} $\chi_{\psi}(g)\equiv\langle\psi|U(g)|\psi\rangle$. This is analogous to how all the entanglement properties of a pure bipartite state are specified by its Schmidt coefficients.


We then proceed to find the complete set of selection rules for pure states under deterministic and stochastic single-copy operations, that is, the necessary and sufficient conditions under which one pure state can be converted to another by a G-covariant operation either deterministically or nondeterministically.  These results are the analogues within the resource theory of asymmetry of, respectively, Nielsen's theorem \cite{Nielsen-Ent} and Vidal's theorem in entanglement theory.  Finally, we consider the case of catalysis of asymmetry transformations, wherein a state with asymmetry can be used to assist in the conversion but must be returned intact at the end of the protocol. We show that a finite catalyst is useless in the case of compact connected Lie groups, while in the case of a finite group, there exists for any state interconversion problem a finite catalyst that makes it possible.

\subsection*{Outline}

 We now summarize the structure of the paper.
In section \ref{sec-Prelim} we review some elementary concepts.  We also formally define G-equivalence classes of states.  Appendix \ref{app:proj} includes a short review of projective unitary representations and Appendix \ref{embeding} includes a discussion about the situations where the input and output Hilbert space of a time evolution are different. In section \ref{sec-duality},  we introduce the idea of two dual points of view to asymmetry, constrained-dynamical and information-theoretic. We also show how these two dual points of view arise naturally in the study of quantum reference frames.  In section \ref{unitary-G-equivalence}, we define  the notion of \emph{unitary G-equivalence}, another equivalence relation over states that is slightly stronger than G-equivalence. Using the  constrained-dynamical and information-theoretic perspectives, we find two different ways to characterize the unitary G-equivalence classes of states: the characteristic function and the reduction to the irreps.  Section \ref{sec-approximate} extends these considerations to the case of \emph{approximate} unitary G-equivalence, in which one state should be transformed to a state that is close to (but not necessarily exactly equal to) a second. The proofs for this section are presented in appendix \ref{app:proofofapproxGequivalence}.

In section \ref{Sec-Reduction}, we show that the two different characterizations of the unitary G-equivalence classes are in fact two different representations of the same object, the reduction of the state to the associative algebra and that these representations can be transformed one to the other via Fourier and inverse Fourier transforms. \blk
We further outline several nice mathematical properties of the characteristic function of a state, properties which make it particularly useful for the study of the asymmetry of pure states. We also show, in appendix \ref{app:distinguish}, that both the  amplitude and the phase of the characteristic function are important for specifying the asymmetry of a state, while in appendix \ref{app:charfuncs} we explain more about characteristic functions and their connection with the classical characteristic function of probability distributions.

In section \ref{G-equiv}, we present our main result, the characterization of the G-equivalence classes.  Specifically, we show that for compact Lie groups, the G-equivalence class of a state is uniquely specified by its characteristic function up to a 1-dimensional representation of the group.  In the important case of semi-simple Lie groups, we show that it is uniquely specified by the characteristic function alone.

Finally, the results on single-copy transformations are presented in the three short sections: deterministic transformations in section \ref{sec:deterministic}, state-to-ensemble transformations and stochastic transformations in section \ref{sec:stochastic}, and catalysis in section \ref{sec:catalysis}. We end with a general discussion. \blk


\section{Preliminaries}\label{sec-Prelim}

A \emph{symmetry transformation}  is a transformation which leaves the physical  objects, structures or dynamics unchanged. Group theory provides the mathematical language to describe symmetries. One can easily see that the set of symmetries of an object form a group: they are closed because if one takes a symmetry of the object, and then applies another symmetry, the total transformation will still leave the object unchanged and so is a symmetry. Furthermore, the identity transformation always leaves the object unchanged and so is a symmetry of the object. The associativity is a result of the fact that symmetries can be thought of as maps on a space, and composition of maps is associative. Finally, if a transformation leaves the object unchanged, undoing that transformation also leaves  it unchanged and so the inverse of a symmetry is also a symmetry.

In quantum theory the action of any symmetry transformation should be described by a unitary or anti-unitary acting on the Hilbert space of the system. This follows from the fact that a symmetry transformation can always be interpreted as  a change of reference frame or convention and this change should not affect the physically observable properties. In particular, it should not affect the distinguishability of states. Then, it follows from  a well-known theorem \footnote{ \textbf{Theorem:}
Let $T$ be a surjective map from a complex Hilbert space to itself such that $|\langle T\phi|T\psi\rangle|=|\langle\phi|\psi\rangle|$ for all pure states $\psi$ and $\phi$. Then $T$ has the form of  $T\psi=e^{i\theta(\psi)}V\psi$ where $\theta(\psi)$ is an arbitrary real function and $V$ is either a unitary or anti-unitary operator.}\label{Wigner}
 by Wigner  \cite{Wigner}  that any such  transformation is represented by a unitary or an anti-unitary operator on the Hilbert space of the system such that an arbitrary density operator $\rho$ is mapped by the symmetry transformation to the density operator $V\rho V^{\dag}$ for some unitary or anti-unitary operator $V$. 
In this paper we do not consider symmetry transformations, such as time-reversal, that are represented by anti-unitary operators. Therefore,  any symmetry we consider here is represented by a unitary acting on the Hilbert space of the system.

Let $G$  be a group describing  a set of symmetry transformations or a \emph{symmetry} for short. Then the action of each group element $g\in G$ should be described by a unitary $U(g)$. It follows that for consistency it should hold that for any pair of group elements $g_{1}$ and $g_{2}$ in group $G$
\begin{equation}
U(g_{2}g_{1}) \rho U^{\dag}(g_{2}g_{1})=U(g_{2})\left(U(g_{1})\rho U^{\dag}(g_{1})\right)U^{\dag}(g_{2})
\end{equation}
Since this  should hold  for any arbitrary state $\rho$ one can conclude that
\begin{equation}\label{Eq.cocycle}
U(g_{2}g_{1})=\omega(g_{2},g_{1})U(g_{2})U(g_{1})
\end{equation}
where $\omega(g_{2},g_{1})$ is a phase factor, i.e. $\left|\omega(g_{2},g_{1})\right|=1$. This means that a symmetry described by group G  should be represented  by \emph{a projective unitary representation of group G}.   The phase factor $\omega(g_{1},g_{2})$ is called the \emph{cocycle} of the representation.   We denote a specific projective unitary representation of $G$ by the set of unitaries $\{U(g), g \in G\}$ or by the map $g\rightarrow U(g)$. In the specific case where the cocycle $\omega(g_{1},g_{2})$ is constant  and equal to one, the representation is called a \emph{(non-projective) Unitary representation}.  We provide a short list of some useful properties of projective unitary representations of compact Lie groups and finite groups in  appendix \ref{app:proj}. For a helpful review of this topic we refer to chapter 2 of Giulio Chiribella's thesis \cite{Chiribella}.

We will frequently use the unitary super-operator notation to represent the action of  groups.  For any group G and any  projective unitary representation $g\rightarrow U(g)$ we define  the super-operators
\begin{equation}
\mathcal{U}_{g}(X)=U(g)X U^{\dag}(g).
\end{equation}
So under the symmetry transformation $g\in G$ the state $\rho$ will be mapped to $\mathcal{U}_{g}(\rho)$.

The representation of the fundamental symmetries of nature, such as the symmetries of space-time, are part of the specification of a physical system.
 For example, on a system with a two-dimensional Hilbert space the group of all rotations in the three-dimensional real space $\mathbb{R}^{3}$, i.e. the group SO(3), can have two different representations: the trivial representation where the action of symmetry transformations leaves all states unchanged and the non-trivial representation corresponding to the spin-half representation of SO(3). These two representations of SO(3) describe systems with different physical properties.

For most symmetries, such as the fundamental symmetries of space-time, the representation of the symmetry on a composite system is the \emph{collective representation}: if the projective unitary representations of a symmetry transformation $g\in G$ on systems with Hilbert spaces $\mathcal{H}_A$ and $\mathcal{H}_B$ are $U_A(g)$ and $U_B(g)$ respectively, then the projective unitary representation of that symmetry transformation on the Hilbert space of the composite system with Hilbert space $\mathcal{H}_A\otimes\mathcal{H}_B$ is $U_A(g)\otimes U_B(g)$. In this paper we always assume that the representation of the symmetry on the joint system  is the collective representation.


\subsection{Symmetries of states}

For any given symmetry group, there are some states which are invariant under some or all symmetry transformations in the group. For example, for any symmetry and for any representation of the symmetry, the completely mixed state is invariant under all symmetry transformations. 

\begin{definition} \label{SymSubgroup}
The \emph{symmetry subgroup} of a state $\rho$ relative to the group G, denoted $\textnormal{Sym}_G(\rho)$, is the subgroup of G under which $\rho$ is invariant,
\begin{equation}
\textnormal{Sym}_G(\rho)\equiv \{ g\in G: \mathcal{U}_g[\rho] = \rho \}.
\end{equation}
\end{definition}

If the symmetry subgroup contains only the identity element, it is said to be trivial.  In this case, it is often said that the state has \emph{no symmetries} (meaning no nontrivial symmetries). If the symmetry subgroup of a state $\rho$ is the entire group G, so that it is invariant under all symmetry transformations $g\in G$, i.e.
\begin{equation}
\forall g\in G: \mathcal{U}_{g} (\rho) = \rho,
\end{equation}
then we say that the state is \emph{G-invariant}  \footnote{ Because a symmetry transformation is defined not only by a group G but also by a representation $U$ of that group, it would be more precise to call the symmetric states ``\{G,U\}-invariant'', however, for ease of readability, we do not specify the representation explicitly. }.

\subsection{G-covariant operations} \label{sec:G-cov2}

We say that a time evolution is \emph{G-covariant} if it commutes with all symmetry transformations in the group G, that is, for any initial state and any symmetry transformation, the final state is independent of the order in which the symmetry transformation and the time evolution are applied \footnote{Again, it would be more precise to call the symmetric operators ``\{G,U\}-covariant'', however, for ease of readability, we do not specify the representation explicitly. }.  We will sometimes refer to an operation that is G-covariant as a \emph{symmetric} operation.  (It is important not to confuse symmetr\emph{y} transformations, which correspond to a particular group action, with symmetr\emph{ic} transformations, which commute with all group actions.)  We provide the rigorous form of the notion of G-covariance first for closed system evolutions and then for open system evolutions.

\begin{figure}[h!]
 \center{   \includegraphics[width=7cm]{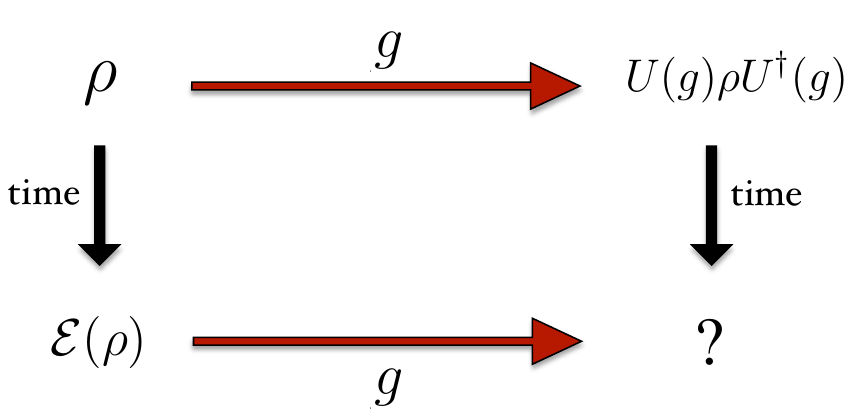}}
    \caption{\label{Fig:G-Equivalence}
   A time evolution is called G-covariant if the above transformations commute for all group elements $g\in G$.
    }
\end{figure}

Closed system dynamics are described by unitary operators over the Hilbert space.  However, noting that the global phase of a vector in Hilbert space has no physical significance, it is useful to describe the dynamics in terms of its effect on density operators (every parameter of which has physical significance).  Closed system dynamics are then described by linear maps $\mathcal{V}$ on the operator space that are of the form $\mathcal{V}[\rho]=V\rho V^{\dag}$, where $V$ is a unitary operator.  
A closed system dynamics associated with the unitary $V$ is G-covariant if
\begin{equation}
\forall g\in G,\forall \rho: V U(g)\rho U^{\dag}(g) V^{\dag} = U(g) V  \rho V^{\dag} U^{\dag}(g),
\end{equation}
or equivalently,
\begin{equation}
\forall g\in G:[\mathcal{V},\mathcal{U}_g]=0,
\end{equation}
where  $[\mathcal{V},\mathcal{U}_g]:=\mathcal{V}\circ \mathcal{U}_g-\mathcal{U}_g\circ \mathcal{V}$.  In other words, the map $\mathcal{V}$ commutes with every element of the (superoperator) representation of the group $\{ \mathcal{U}_g:g \in G\}$.  This implies that
\begin{equation}\label{G-covariant_unitary}
\forall g\in G: V U(g)= U(g) V \omega(g),
\end{equation}
where $\omega(g)$ is a phase factor that can easily be shown to be a 1-dimensional representation of the group. In the case of finite-dimensional Hilbert spaces (which is the case under consideration in this paper), we can argue that $\omega(g)=1$ if the closed system dynamics is required to be  continuous and symmetric at all times (in contrast to requiring only that the effective operation from initial to final time be symmetric) \cite{preparation}.

This argument justifies the common definition in the literature  of when a closed system dynamics respects the symmetry, namely, when
\begin{equation} \label{G-invariant_unitary}
\forall g\in G: [V, U(g)]=0.
\end{equation}
We call any unitary $V$ which satisfies this property a \emph{G-invariant unitary}
 because $\forall g\in G: U(g) V U^{\dag}(g) = V$. More generally, any operator  which commutes with the representation of group G on the Hilbert space of the system will be called G-invariant.
Clearly, if a Hamiltonian is G-invariant then all the unitaries it generates are G-invariant.  Finally, note that if $V$ is an isometry rather than a unitary, then it is said to be G-invariant if $\forall g\in G: V U_{\textnormal{in}}(g)= U_{\textnormal{out}}(g) V$, where $U_{\textnormal{in}}(g)$ and $U_{\textnormal{out}}(g)$ are the representations of the group on the input and output spaces of the isometry.
\color{black}

In general, a system might be \emph{open}, i.e., it may interact with an environment. In this case, the time evolution cannot be described by the Hamiltonian of the system alone.  Rather, to describe the time evolution we need the Hamiltonian of system and environment together. In the study of open systems we usually restrict our attention to the situations where the initial state of the system and environment are uncorrelated,
  in which case we can describe the evolution by a deterministic quantum channel  $\mathcal{E}$, that is, a
  \emph{completely positive}\footnote{Let $\mathcal{K}$  be an arbitrary Hilbert space, $\mathcal{B}(\mathcal{K})$ be the space of bounded linear operators on $\mathcal{K}$ and $\mathbb{I}_{\mathcal{B}(\mathcal{K})}$ be the identity map on  $\mathcal{B}(\mathcal{K})$. A map $\mathcal{E}$ from $\mathcal{B}(\mathcal{H}_{\textnormal{in}})$ to $\mathcal{B}(\mathcal{H}_{\textnormal{out}})$ is called \emph{completely positive} if for any Hilbert space $\mathcal{K}$, $\mathcal{E}\otimes \mathbb{I}_{\mathcal{B}(\mathcal{K})}$ is a \emph{positive} map, i.e. it maps positive operators in $\mathcal{B}(\mathcal{H}_{\textnormal{in}})\otimes  \mathcal{B}(\mathcal{K})$ to positive operators in $\mathcal{B}(\mathcal{H}_{\textnormal{out}})\otimes  \mathcal{B}(\mathcal{K})$.}, trace-preserving, linear map from $\mathcal{B}(\mathcal{H}_{\textnormal{in}})$ to $\mathcal{B}(\mathcal{H}_{\textnormal{out}})$ where $\mathcal{H}_{\textnormal{in}}$  and $\mathcal{H}_{\textnormal{out}}$ are the input and output Hilbert spaces and $\mathcal{B}(\mathcal{H})$ are the bounded operators on $\mathcal{H}$.
  After a time evolution described by quantum channel $\mathcal{E}$, the initial state $\rho$ evolves to the final state $\mathcal{E}(\rho)$.
 Note that a general quantum channel may have input and output spaces that are distinct. This possibility is useful for describing transformations wherein the system of interest may grow (by incorporating into its definition parts of the environment) or shrink (by having some of its parts incorporated into the environment).

We now state the conditions for a general quantum operation (which may represent open or closed system dynamics) to be G-covariant.
\begin{definition}[G-covariant operation] The quantum operation $\mathcal{E}$ is said to be \emph{G-covariant} if
\begin{equation}
\forall g\in G:\  \mathcal{E}\left(U_{\textnormal{in}}(g)(\cdot)U_{\textnormal{in}}^\dag(g)\right)=U_{\textnormal{out}}(g)\mathcal{E}\left(\cdot\right)U_{\textnormal{out}}^\dag(g),
\end{equation}
where $\{U_{\textnormal{in}}(g): g\in G \}$ and $\{U_{\textnormal{out}}(g):g\in G\}$ are the representations of $G$ on the input and output Hilbert spaces of $\mathcal{E}$.
\end{definition}

If the input and output spaces are equivalent
then the condition of G-covariance can be expressed as
\begin{equation}\label{covariance}
\forall g\in G:\  \mathcal{E}\left(U(g)(\cdot)U^\dag(g)\right)=U(g)\mathcal{E}\left(\cdot\right)U^\dag(g),
\end{equation}
or equivalently,
\begin{equation}
\forall g\in G:[\mathcal{E},\mathcal{U}_g]=0,
\end{equation}
where $\mathcal{U}_g[\cdot]= U(g)(\cdot)U^{\dag}(g)$.

 As we demonstrate in appendix \ref{embeding}, any G-covariant operation for which the input and output Hilbert spaces are different can always be modeled by one wherein the input and output Hilbert spaces are the same. The reason is that the input and output Hilbert spaces can always be taken to be two different sectors of a single larger Hilbert space, $\mathcal{H}_{\textnormal{in}}\bigoplus \mathcal{H}_{\textnormal{out}}$, and any operation from $\mathcal{B}(\mathcal{H}_{\textnormal{in}})$ to $\mathcal{B}(\mathcal{H}_{\textnormal{out}})$ that is G-covariant relative to the representations $\{U_{\textnormal{in}}(g)\}$ and $\{U_{\textnormal{out}}(g)\}$
can always be extended to an operation on $\mathcal{B}(\mathcal{H}_{\textnormal{in}}\bigoplus \mathcal{H}_{\textnormal{out}})$ that is G-covariant relative to the representation $\{U_{\textnormal{in}}(g)\bigoplus U_{\textnormal{out}}(g)\}$.
Similarly, any G-invariant isometry (a reversible operation where the input and output Hilbert spaces may differ) can always be modelled by a G-invariant unitary (where the input and output Hilbert spaces are the same).  Again, this is shown in appendix \ref{embeding}. It follows that without loss of generality, we can restrict our attention in the rest of this paper to G-covariant operations where the input and output spaces are the same.


Clearly, $G$-covariant quantum operations include those induced by $G$-invariant unitaries, that is, operations of the form $\mathcal{V}(\cdot)=V(\cdot)V^{\dag}$ where $\forall g\in G:[V,U(g)]=0$.
As another example, consider a channel of the form
\begin{equation}
\mathcal{K}\equiv\int_{K}dk\ \mathcal{U}_k,
\end{equation}
where $K$ is a subgroup of $G$ and $dk$ is the uniform measure over $K.$  We refer to this as the \emph{uniform twirling over }$K$~\footnote{Note that we can implement the time evolution described by the channel $\mathcal{K}$ by choosing one of the unitaries from the set  $\{U(k), k\in K\}$ uniformly at random and applying it to the system.
}.  The uniform twirling over any normal subgroup of G is a G-covariant operation.
\ First, recall that if $K$ is a normal subgroup of $G$ then
$\forall g\in G:gKg^{-1}=K,$
where $gKg^{-1}\equiv\left\{  gkg^{-1}:k\in K\right\}  .$  It follows that
\begin{equation}
\forall g\in G:\mathcal{U}_g\circ\mathcal{K}\circ\ \mathcal{U}_{g^{-1}}=\int
_{K}dk\ \mathcal{U}_{gkg^{-1}}=\mathcal{K},
\end{equation}
and consequently that $\mathcal{K}$ is G-covariant. In particular any group is the normal subgroup of itself, therefore uniform twirling over any  group $G$ is a G-covariant channel.


Furthermore, if we couple the object system to an environment using a Hamiltonian which has the symmetry $G$ and if the environment is initially uncorrelated with the system and prepared in a state that is G-invariant, and finally some \emph{proper} subsystem is discarded, then the total effect of this time evolution is described by a G-covariant quantum operation. (Intuitively this is clear, because there is nothing in such a dynamics that can break the symmetry.) Here by \emph{proper} subsystem we mean a subsystem  which is closed under the action of the symmetry transformations, i.e., under this action any vector in that subsystem is mapped to a vector in the same subsystem.

As it turns out, \emph{every} $G$-covariant quantum operation can in fact be realized in this way, i.e. by first coupling the system to an uncorrelated environment in a G-invariant state via a G-invariant unitary and secondly discarding a proper subsystem of the total system. This is sometimes called  the Stinespring dilation theorem for G-covariant channels  and was first proved in \cite{Werner}.~\footnote{A different proof of this is provided in \cite{thesis:Marvian}. } This result provides an operational prescription for realizing every such operation.

In the theory of asymmetry we  study the consequences of the fact that a (possibly open) dynamics  has a symmetry.  In particular, we are interested to know,  for a given initial state of a G-covariant dynamics, which kind of constraints one can put on the possible final states based on the symmetries of dynamics.
Equivalently,  we are interested to know, for a given pair of states $\rho$ and $\sigma$, whether  there exists a G-covariant dynamics which transforms $\rho$ to $\sigma$ or not.  We use the notation $\rho\xrightarrow{\textnormal{G-cov}} \sigma$ to denote that state $\rho$ can be transformed to state $\sigma$ under a G-covariant time evolution.

For instance, a simple consequence of the symmetry of dynamics is that every symmetry of the initial state is a symmetry of the final state, i.e.,
\begin{proposition} \label{prop_one}
If $\rho$ transforms to $\sigma$ by a G-covariant quantum operation ( $\rho\xrightarrow{\textnormal{G-cov}} \sigma$), then  $\textnormal{Sym}_G(\rho)\subseteq \textnormal{Sym}_G(\sigma)$.
\end{proposition}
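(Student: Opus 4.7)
The plan is to unfold the definitions and exploit the commutation property that G-covariance provides. Let $\mathcal{E}$ be the G-covariant operation witnessing $\rho\xrightarrow{G-\mathrm{cov}}\sigma$, so $\mathcal{E}(\rho)=\sigma$ and $\mathcal{E}\circ\mathcal{U}(g)=\mathcal{U}(g)\circ\mathcal{E}$ for every $g\in G$ (using the automorphism form of covariance, which is without loss of generality by the embedding discussion in Appendix \ref{embeding}). Pick an arbitrary $g\in\textrm{Sym}_G(\rho)$; by definition this means $\mathcal{U}(g)[\rho]=\rho$. The goal is to show $g\in\textrm{Sym}_G(\sigma)$, i.e.\ $\mathcal{U}(g)[\sigma]=\sigma$.

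The key calculation is a three-step chain: $\mathcal{U}(g)[\sigma]=\mathcal{U}(g)\circ\mathcal{E}[\rho]=\mathcal{E}\circ\mathcal{U}(g)[\rho]=\mathcal{E}[\rho]=\sigma$, where the middle equality is G-covariance of $\mathcal{E}$ and the third uses that $g$ is a symmetry of $\rho$. This shows $g\in\textrm{Sym}_G(\sigma)$, hence the inclusion $\textrm{Sym}_G(\rho)\subseteq\textrm{Sym}_G(\sigma)$.

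There is essentially no obstacle here; the only point deserving attention is the possibility that the input and output Hilbert spaces of $\mathcal{E}$ differ, in which case one should read $\mathcal{U}$ as $\mathcal{U}_{\mathrm{in}}$ on the $\rho$ side and as $\mathcal{U}_{\mathrm{out}}$ on the $\sigma$ side, with the covariance condition $\mathcal{E}\circ\mathcal{U}_{\mathrm{in}}(g)=\mathcal{U}_{\mathrm{out}}(g)\circ\mathcal{E}$ making the same chain of equalities go through. Since the excerpt has already argued that one may restrict to the automorphism case without loss of generality, the one-line argument above suffices.
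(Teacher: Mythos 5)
Your proof is correct and is essentially the same argument as the paper's: both use the G-covariance of $\mathcal{E}$ to interchange it with $\mathcal{U}(g)$ and then invoke $\mathcal{U}(g)[\rho]=\rho$; you merely start from $\mathcal{U}(g)[\sigma]$ whereas the paper starts from $\mathcal{E}(\rho)$ and substitutes $\rho=U(g)\rho U^{\dag}(g)$. The remark about distinct input and output representations is a sensible extra precaution but does not change the substance.
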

\begin{proof}
If $g_{s}\in G$ is a symmetry of $\rho$ then $\mathcal{U}_{g_{s}}(\rho)=\rho $. Since the operation $\mathcal{E}$ taking $\rho$ to $\sigma$ is G-covariant,
it follows that
$$\mathcal{E}(\rho)=\mathcal{E}\circ \mathcal{U}_{g_{s}}(\rho)= \mathcal{U}_{g_{s}}\circ\mathcal{E}(\rho)$$
So $\mathcal{U}_{g_{s}} (\sigma)=\sigma$.	
\end{proof}

In particular, therefore, one cannot generate an asymmetric state starting from a symmetric one.  This proposition highlights a simple example of restrictions one can put on the final states of a possibly open system dynamics based on the initial state of the system and symmetry of dynamics.  For instance, it implies that under rotationally-covariant time evolutions, a spin pointing along $\hat{z}$ cannot evolve to one pointing along $\hat{x}$ because the first state is invariant under the group of rotations around $\hat{z}$ while the second one is not. This result can be understood as a cognate of Curie's principle, which states that symmetric causes cannot have asymmetric effects  \cite{Curie}. Also,  note that  this proposition  suggests a simple characterization of the asymmetry of states relative to a group $G$ by characterizing the largest subgroup of $G$ which leaves each state invariant. Indeed, this simple characterization is very useful, for example, in condensed matter theory. However,  finding a more fine-grained characterization of asymmetry of states can also be useful, for example, to study the consequences of symmetry of an open system dynamics.

On the other hand, for any arbitrary pair of G-invariant states $\rho$ and $\sigma$ there always exist G-covariant channels which transform one to the other. A trivial instance of these G-covariant channels, is the one which discards the input state and generates the G-invariant state $\sigma$ as the output, i.e. the channel described by
\begin{equation}
\mathcal{E}_{\sigma}(X)=\textnormal{tr}(X) \sigma
\end{equation}

Finding the necessary and sufficient condition to determine  for any given pair of states $\rho$ and $\sigma$ whether   $\rho\xrightarrow{G-\textnormal{cov}}\sigma$ or not,  turns out to be a hard problem and is still open.  However, in this paper, we will answer this question for the special case where both $\rho$ and $\sigma$ are pure states. In the rest of this section we present two physical examples of channels which are covariant with respect to the group U(1), the group formed by all phases $\{e^{i\theta}:\theta\in(0,2\pi]\}$.

\subsection{Example: U(1)-covariant channels} \label{sec-q-opt-ex}

For concreteness, it is worth examining a specific example of symmetric operations, namely, those that are covariant under a unitary representation of the U(1) group.
Here, we present two different physical scenarios in which a restriction to U(1)-covariant channels is natural.

\subsubsection{Axially symmetric channels}
U(1)-covariant quantum operations are relevant for describing a dynamics which has rotational  symmetry around some axis, or \emph{axially symmetric} dynamics. The set of all rotations around a fixed axis forms the group called SO(2) which is isomorphic to the group U(1). So the unitary representation of the rotations around a fixed axis forms a representation of U(1), e.g. if $L_{z}$ is the operator of angular momentum in the $z$ direction then
 $$e^{i\theta}\rightarrow e^{i\theta L_{z}}$$
 is a representation of the group $U(1)$.
 In general the eigenvalues of $L_{z}$ are  degenerate. But to simplify the notation here we assume $L_{z}$ has no degeneracy. So $\{|m\rangle: m\in \{-j,-j+1,\cdots  ,j\} \}$, the eigenbasis of $L_z$,  is a basis for the Hilbert space of the system, where $j$ is the angular momentum of the system and so is either half integer or integer and where $L_{z}|m\rangle=m|m\rangle$ (taking $\hbar=1$). Note that in the case of half-integer spins the representation  $e^{i\theta}\rightarrow e^{i\theta L_{z}}$ is a projective representation, i.e. the cocycle of the representation is non-trivial.

First, we consider the symmetries of a few different states. The state $(|0\rangle + |1\rangle)/\sqrt{2}$ has no symmetries, while the state $(|0\rangle + |2\rangle)/\sqrt{2}$ has a nontrivial symmetry subgroup because it is invariant under a $\pi$ phase shift.  Meanwhile, all the elements of the basis $\{|m\rangle: m\in \{-j,-j+1,\cdots  ,j\} \}$ are U(1)-invariant states. The set of all states (pure and mixed) that are U(1)-invariant are those which commute with all elements of the set $\{\exp{(i \theta {L_{z}})}:\theta\in (0,2\pi]\}$ and so commute with ${L_{z}}$ and are therefore diagonal in the $\{|m\rangle: m\in \{-j,-j+1,\cdots  j\} \}$ basis.

Next we consider symmetric operations.  First note that the \textnormal{U}(1)-invariant unitaries are those that are diagonal in the $\{|m\rangle: m\in \{-j,-j+1,\cdots  j\} \}$ basis and are therefore of the form
\begin{equation}
V_{\textnormal{U(1)-inv}}=\sum_{m=-j}^{j} e^{i\beta_{m}} |m\rangle\langle m|
\end{equation}
These unitaries all commute with each other.  (Note, however, that if there is multiplicity in the representations, then the U(1)-invariant unitaries have a more complicated structure and do not necessarily commute with each other.) \blk

Now one can easily see that using U(1)-invariant unitaries we cannot transform one arbitrary state to another. For example, we cannot transform $|0\rangle$ to $(|0\rangle+|1\rangle)/\sqrt{2}$: The first state is a symmetric state while the second has some asymmetry. Similarly we can easily see that $(|0\rangle+|1\rangle)/\sqrt{2}$ cannot be transformed to $(|2\rangle+|3\rangle)/\sqrt{2}$ using U(1)-invariant unitaries. However, this transformation \emph{is} possible using a U(1)-covariant channel. Consider the quantum operation $\mathcal{E}$ described by the following Kraus operators:
$$K_{0}=\sum_{m=-j}^{j-1} |m+1\rangle\langle m|,\ \  \textnormal{and}\ \ K_{1}= |-j\rangle\langle j|$$
where  $K_{0}^{\dag} K_{0}+K_{1}^{\dag}K_{1}=I$. One can easily check that this quantum operation is covariant under rotations around $\hat{z}$, i.e.
\begin{equation}
\forall\theta\in (0,2\pi]:\ \mathcal{E}(e^{i \theta L_{z}} \rho e^{-i \theta L_{z}}  )=e^{i \theta L_{z}}  \mathcal{E}(\rho ) e^{-i \theta L_{z}}.
\end{equation}
Furthermore, it maps the state $ (|m-1\rangle+ |m\rangle)/\sqrt{2}$ to $(|m\rangle+ |m+1\rangle)/\sqrt{2}$ for all $m<j$.
So, although the transformation is not possible via U(1)-invariant unitaries, it can be done by U(1)-covariant quantum operations. Similarly we can show that there is a U(1)-covariant quantum operation which transforms $(|m\rangle+ |m+1\rangle)/\sqrt{2}$ to $ (|m-1\rangle+ |m\rangle)/\sqrt{2}$.


\subsubsection{Phase-covariant channels in quantum optics}\label{q-optics-example}

Another physical example of $U(1)$-covariant quantum operations comes from quantum optics (for more discussion see \cite{BRS07} ).
Consider a harmonic oscillator whose Hilbert space is spanned by the orthonormal basis $\{|n,\alpha\rangle:  n\in\mathbb{N}\}$ with the number operator ${N}$ such that  ${N}|n,\alpha\rangle=n |n,\alpha\rangle$ where $n$ is  a nonnegative integer and $\alpha$ labels possible degeneracies. Then the operator which shifts this oscillator in its cycle by phase $\theta$ is $\exp{(i \theta {N})}$. For example, this operator transforms the coherent state $|\gamma\rangle$ to $|e^{{i\theta}}\gamma\rangle$.

Now a quantum operation $\mathcal{E}$ is phase-covariant if
 \begin{equation}
\forall\theta\in (0,2\pi]:\ \mathcal{E}(e^{i \theta {N}} \rho e^{-i \theta {N}}  )=e^{i \theta {N}}  \mathcal{E}(\rho ) e^{-i \theta {N}}.
\end{equation}

For a particular physical scenario, there may be additional constraints on the accessible states and unitaries beyond those that are implied by the symmetry.  For instance, here in this example, unlike the previous example, there is no invariant state which under the action of the symmetry group transforms as $e^{i N\theta}|\psi\rangle=e^{-i\theta}|\psi\rangle$; all eigenvalues of the number operator are non-negative. This is a restriction relative to what occurs for our first example where to realize  a particular axially symmetric operation an experimenter can couple the system to an ancilla in state $\{|m\rangle\}$ for arbitrary  positive or negative $m$.

However, it turns out that a restriction of the accessible irreps of U(1) to the nonnegative does not have any impact on the set of operations one can implement -- all U(1)-covariant operations are still physically accessible \cite{thesis:Marvian}. In other words, any phase-invariant quantum operation can be realized by coupling the system to another ancillary system which is initially in $|n\rangle$ for some non-negative $n$ and the coupling can be chosen to be a phase-invariant unitary.\footnote{This follows from the constructive proof of Stinespring dilations of G-covariant channels presented in \cite{thesis:Marvian}.}  For the rest of this paper, we will assume that all G-covariant operations are physically accessible (including in the quantum optics examples).

\section{Asymmetry of quantum states}\label{sec-duality}

The \emph{asymmetry properties} of a state relative to some symmetry group specify how and to what extent the given symmetry is broken by the state.   Characterizing these is found to be surprisingly useful for addressing a very common problem: to determine what follows from a system's dynamics (possibly open) having that symmetry.  In this section we formally define the notion of asymmetry of a state and demonstrate  that the asymmetry properties of a state can be understood in terms of information-theoretic concepts.



The first step in characterizing asymmetry is to specify when two states have the same asymmetry.  We stipulate that this is the case when
the pair of states can be \emph{reversibly interconverted} one to the other by symmetric operations.  This defines an equivalence relation among states.
\begin{definition}[G-equivalence of states]\label{Gequiv}
Two states, $\rho$ and $\sigma$, are said to be \emph{$G$-equivalent} if and only if they are reversibly interconvertible by $G$-covariant operations, i.e.,
there exists a quantum operation $\mathcal{E}$ such that
\begin{equation}
\forall g\in G:[\mathcal{E},\mathcal{U}_g]=0 , \;\textnormal{ and }\; \mathcal{E}[\rho]=\sigma,
\end{equation}
and there exists a quantum operation $\mathcal{F}$ such that
\begin{equation}
\forall g\in G:[\mathcal{F}, \mathcal{U}_g]=0, \;\textnormal{ and }\; \mathcal{F}[\sigma]=\rho.
\end{equation}
\end{definition}


\noindent(Using the notation we introduced in section \ref{sec:G-cov2}, $\rho$ and $\sigma$ are G-equivalent iff $\rho\xrightarrow{\textnormal{G-cov}} \sigma$ and $\sigma\xrightarrow{\textnormal{G-cov}} \rho$.)

\begin{figure}[h!]
 \center{   \includegraphics[width=8cm]{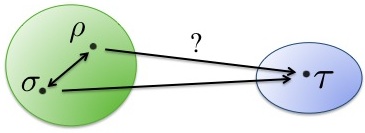}}
    \caption{\label{Fig:G-Equivalence}
   A depiction of two G-equivalence classes in the space of all states. Because both $\rho\xrightarrow{\textnormal{G-cov}} \sigma$ and $\sigma\xrightarrow{\textnormal{G-cov}} \rho$ are possible, $\rho$ and $\sigma$ are  in the same class. It follows that if $\rho\xrightarrow{\textnormal{G-cov}} \tau$ then $\sigma\xrightarrow{\textnormal{G-cov}} \tau$.
    }
\end{figure}

 A complete specification of the G-asymmetry properties of a state is achieved by specifying its G-equivalence class.
 So, for example specifying the G-equivalence class of a state should include a specification of the state's symmetries (indeed, this can be considered to be a condition that must be satisfied by any proposed specification of the asymmetry properties).  To see this first note that, as it is highlighted in proposition \ref{prop_one}, if $\rho$ can be transformed to  $\sigma$ by a G-covariant quantum operation ( $\rho\xrightarrow{\textnormal{G-cov}} \sigma$), then  $\textnormal{Sym}_G(\rho)\subseteq \textnormal{Sym}_G(\sigma)$ where $\textnormal{Sym}_G(\rho)$ is the subgroup of $G$ which leaves $\rho$ invariant (see Definition~\ref{SymSubgroup}).  So if $\rho$ and $\sigma$ are G-equivalent, i.e., $\rho\xrightarrow{\textnormal{G-cov}} \sigma$ and $\sigma\xrightarrow{\textnormal{G-cov}} \rho$, then $\textnormal{Sym}_G(\rho)= \textnormal{Sym}_G(\sigma)$.

As another example, if we want to know   whether there exists a one-way (deterministic or stochastic) symmetric transformation from one given state to another, all we need to know is the G-equivalence class of the two states; if there exists a symmetric transformation from one member of class I to one member of class II, then there exists a symmetric transformation from every member of class I to every member of class II. So to answer the question of whether a given state can evolve to another state under a G-covariant dynamics, the only properties of the two states which are relevant are their G-asymmetry properties.



The above definition of asymmetry properties  is based on the intuition that asymmetry is something which cannot be generated by symmetric time evolutions.  We call this the \emph{constrained-dynamical} perspective.

In the constrained-dynamical point of view, we characterized the asymmetry properties of a state as those features that are required to determine whether any pair of states are reversibly interconvertible by symmetric operations.

It seems natural in this point of view, to use dynamical concepts to describe and study asymmetry. For example if the symmetry group  under consideration is the rotation group, then we may use angular momentum to describe asymmetry: we know that if the expectation value of any component of the angular momentum is nonzero then the state  necessarily breaks the rotational symmetry and so is asymmetric. Moreover according to Noether's theorem, in an isotropic closed time evolution
 every component of the angular momentum is conserved.  We can generalize this result to symmetric reversible transformations on open systems using a Carnot style of argument --- in a reversible transformation the environment cannot be a source of angular momentum and therefore if a transformation can be achieved reversibly on the system alone, then it must conserve all components of angular momentum (on pain of allowing a cycle that generates arbitrary amounts of angular momentum).   It follows that the expectation value of angular momentum is a function of the $G$-equivalence class, i.e. it is the same for all states in the same G-equivalence class.

So clearly, dynamical concepts provide a useful framework for describing asymmetry. In the next section we show that information-theoretic concepts are also useful for the study of asymmetry.

\subsection{Information-theoretic point of view to asymmetry}

In this section we introduce another perspective to the notion of asymmetry of states which we call the  \emph{information-theoretic} perspective. \footnote{ Recently, a similar information-theoretic argument has been used in \cite{Vac2012} to study the duality between the particle and wave natures of quantum systems from the point of view of symmetry and asymmetry.} Recall that a quantum state breaks a symmetry, say rotational symmetry, if for some non-trivial rotations, the rotated version of the state is not the same as the state itself, i.e. they are distinguishable.   In this case, the ensemble of states corresponding to the orbit of the state under rotations can act as an encoding when the message to be encoded is an element of the rotation group. This suggests that   information-theoretic concepts are also useful for the study of asymmetry.



  Consider a set of communication protocols in which one chooses a message $g\in G$ according to a measure over the group and then sends the state $\mathcal{U}_g[\rho]$ where $\rho$ is some fixed state. The goal of the sender is to inform the receiver about the specific chosen group element. We claim that the asymmetry properties of a state $\rho$ can be defined as those that determine  the effectiveness of using the signal states $\{ \mathcal{U}_g[\rho] : g\in G \}$ to communicate a message $g\in G$. To get an intuition for this, note that if $\rho$ is invariant under the effect of some specific group element $h$ then the state used for encoding $h$ would be the same as the state used for encoding the identity element $e$, ($\mathcal{U}(h)[\rho]=\mathcal{U}(e)[\rho]=\rho$), such that the message $h$ cannot be distinguished from $e$.
  In the extreme case where $\rho$ is invariant under all group elements this encoding does not transfer any information. 

 So from this point of view, the asymmetry properties of $\rho$ can be inferred from the information-theoretic properties of the encoding $\{\mathcal{U}_g[\rho]: g\in G\}$.  To compare the asymmetry properties of two arbitrary states  $\rho$ and $\sigma$, we have to compare the information content of two different encodings: $\{\mathcal{U}_g[\rho]: g\in G\}$ (encoding I) and $\{\mathcal{U}_g[\sigma]: g\in G\}$ (encoding II).   If each state $\mathcal{U}_g[\rho]$ can be converted to   $\mathcal{U}_g[\sigma]$ for all $g\in G$, then encoding I has as much or more information about $g$ than encoding II.  If the opposite conversion can also be made, then the two encodings have precisely the same information about $g$.  Consequently, in an information-theoretic characterization of the asymmetry properties, it is the reversible interconvertibility of the sets (defined by the two states) that defines equivalence of their asymmetry properties.

As it turns out, our two different approaches lead to the same definition of asymmetry properties, as the following lemmas imply.

\begin{lemma}\label{lemma1}
The following statements are equivalent:\\
\textbf{A})\ There exists a G-covariant quantum operation $\mathcal{E}_{\textnormal{G-cov}}$ [as defined in Eq.~(\ref{covariance})] which maps $\rho$ to $\sigma$,
i.e., $\mathcal{E}_{\textnormal{G-cov}}(\rho)=\sigma$\\
\textbf{B})\  There exists a  quantum operation $\mathcal{E}$ which maps $\mathcal{U}_g[\rho]$ to $\mathcal{U}_g[\sigma]$ for all $g\in \textnormal{G}$,
i.e.,
\begin{equation} \label{Bpoint}
\forall g\in G:\ \ \   \mathcal{E}(\mathcal{U}_g[\rho])=\mathcal{U}_g[\sigma].
\end{equation}
\end{lemma}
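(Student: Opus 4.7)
The plan is to prove the two directions separately, with one direction being essentially immediate and the other relying on the standard group-averaging (``twirling'') trick.

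For the direction \textbf{A} $\Rightarrow$ \textbf{B}, I would simply take $\mathcal{E} := \mathcal{E}_{\textrm{G-cov}}$. By the G-covariance condition (\ref{covariance}), $\mathcal{E}_{\textrm{G-cov}}$ commutes with $\mathcal{U}(g)$ for every $g\in G$, so
\begin{equation}
\mathcal{E}_{\textrm{G-cov}}(\mathcal{U}(g)[\rho]) = \mathcal{U}(g)[\mathcal{E}_{\textrm{G-cov}}(\rho)] = \mathcal{U}(g)[\sigma],
\end{equation}
as required.

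For the nontrivial direction \textbf{B} $\Rightarrow$ \textbf{A}, given $\mathcal{E}$ satisfying (\ref{Bpoint}), I would define the ``twirled'' operation
\begin{equation}
\mathcal{E}_{\textrm{G-cov}} := \int_G dg\ \mathcal{U}(g^{-1}) \circ \mathcal{E} \circ \mathcal{U}(g),
\end{equation}
where $dg$ denotes the normalized Haar measure on $G$ (replaced by normalized counting measure if $G$ is finite). There are then three things to verify. First, $\mathcal{E}_{\textrm{G-cov}}$ is a quantum operation: it is a convex combination (via the probability measure $dg$) of completely positive, trace-preserving maps $\mathcal{U}(g^{-1}) \circ \mathcal{E} \circ \mathcal{U}(g)$, and the set of quantum operations is closed under such combinations. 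Second, $\mathcal{E}_{\textrm{G-cov}}$ maps $\rho$ to $\sigma$: using assumption (B) and the normalization $\int_G dg = 1$, one has $\mathcal{U}(g)[\rho] \mapsto \mathcal{U}(g)[\sigma]$ inside the integral, so
\begin{equation}
\mathcal{E}_{\textrm{G-cov}}(\rho) = \int_G dg\ \mathcal{U}(g^{-1})[\mathcal{U}(g)[\sigma]] = \int_G dg\ \sigma = \sigma.
\end{equation}

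The main step, and the only place where any care is needed, is verifying G-covariance. Here I would use the left/right invariance of the Haar measure: for any fixed $h\in G$,
\begin{equation}
\mathcal{U}(h) \circ \mathcal{E}_{\textrm{G-cov}} \circ \mathcal{U}(h^{-1}) = \int_G dg\ \mathcal{U}(hg^{-1}) \circ \mathcal{E} \circ \mathcal{U}(gh^{-1}),
\end{equation}
and then perform the change of variables $g' := gh^{-1}$, under which $hg^{-1} = g'^{-1}$ and $dg = dg'$ by invariance of the Haar measure. The integral becomes $\int_G dg'\ \mathcal{U}(g'^{-1}) \circ \mathcal{E} \circ \mathcal{U}(g') = \mathcal{E}_{\textrm{G-cov}}$, establishing $[\mathcal{E}_{\textrm{G-cov}},\mathcal{U}(h)]=0$ for every $h\in G$. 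I do not expect any genuine obstacle: the entire argument is a routine but essential application of twirling, and the only technical prerequisite is the existence of a normalized invariant measure, which is guaranteed for compact Lie groups and finite groups, the settings considered in this paper.
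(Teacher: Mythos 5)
Your proof is correct and follows essentially the same route as the paper's: the trivial direction by taking $\mathcal{E}=\mathcal{E}_{\textrm{G-cov}}$, and the converse by twirling, $\mathcal{E}' = \int dg\, \mathcal{U}^\dag(g)\circ\mathcal{E}\circ\mathcal{U}(g)$, which the paper writes down identically in Eq.~(\ref{twirling}). You merely spell out the Haar-measure change of variables that the paper leaves as "one can easily check," which is a fine addition but not a different argument.
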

For pure states, we have
\begin{lemma}\label{unitary}
The following statements are equivalent:\\
\textbf{A})\ There exists a G-invariant unitary  $V_{\textnormal{\textnormal{G-inv}}}$ (i.e. $\forall g\in G:\ [V_{\textnormal{G-inv}},U(g)]=0$) which maps $\ket{\psi}$ to $\ket{\phi}$, i.e. $V_{\textnormal{G-inv}}|{\psi}\rangle=|{\phi}\rangle$. \\
\textbf{B})\ There exists a unitary operation $V$ which  maps $U(g) \ket{\psi}$ to $U(g)\ket{\phi}$ for all $g\in G$, i.e.,
\begin{equation}
\forall g\in G:\ \ \   V U(g)|{\psi}\rangle=U(g)|{\phi}\rangle.
\end{equation}
\end{lemma}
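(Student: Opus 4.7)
The plan is to dispatch the two directions separately. The direction A$\Rightarrow$B is immediate: set $V:=V_{\textrm{G-inv}}$; then $VU(g)\ket{\psi}=U(g)V\ket{\psi}=U(g)\ket{\phi}$, using G-invariance of $V_{\textrm{G-inv}}$ together with the hypothesis $V_{\textrm{G-inv}}\ket{\psi}=\ket{\phi}$.

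For B$\Rightarrow$A, suppose a unitary $V$ satisfies $VU(g)\ket{\psi}=U(g)\ket{\phi}$ for every $g\in G$. First I would introduce the G-invariant subspaces $\mathcal{H}_\psi:=\mathrm{span}\{U(g)\ket{\psi}:g\in G\}$ and $\mathcal{H}_\phi:=\mathrm{span}\{U(g)\ket{\phi}:g\in G\}$. Unitarity of $V$ forces the Gram-matrix identity
\[
\bra{\psi}U(g_1)^\dagger U(g_2)\ket{\psi}=\bra{\phi}U(g_1)^\dagger U(g_2)\ket{\phi},\qquad\forall g_1,g_2\in G,
\]
obtained by writing $\langle VU(g_1)\psi|VU(g_2)\psi\rangle=\langle U(g_1)\phi|U(g_2)\phi\rangle$. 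I would then define $W:\mathcal{H}_\psi\to\mathcal{H}_\phi$ by linearly extending $U(g)\ket{\psi}\mapsto U(g)\ket{\phi}$. The Gram-matrix identity ensures both well-definedness (if a linear combination of orbit vectors vanishes on the $\psi$ side, its image has the same, zero, norm and so vanishes on the $\phi$ side) and that $W$ is a surjective isometry. By construction $WU(g)=U(g)W$ on $\mathcal{H}_\psi$, so $W$ is a G-equivariant unitary between subrepresentations.

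The final step is to extend $W$ to a G-invariant unitary on the whole Hilbert space. The orthogonal complements $\mathcal{H}_\psi^\perp$ and $\mathcal{H}_\phi^\perp$ are G-invariant subspaces of $\mathcal{H}$. Since $W$ realizes an isomorphism of $\mathcal{H}_\psi$ and $\mathcal{H}_\phi$ as unitary representations of $G$, their isotypic decompositions coincide, and uniqueness of the isotypic decomposition of $\mathcal{H}$ forces the complements to have matching isotypic decompositions as well. I would then pick any G-invariant unitary $W':\mathcal{H}_\psi^\perp\to\mathcal{H}_\phi^\perp$ (obtained isotypic-block by isotypic-block) and set $V_{\textrm{G-inv}}:=W\oplus W'$ on $\mathcal{H}=\mathcal{H}_\psi\oplus\mathcal{H}_\psi^\perp$. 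By construction this operator is unitary, commutes with every $U(g)$, and satisfies $V_{\textrm{G-inv}}\ket{\psi}=W\ket{\psi}=\ket{\phi}$.

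The main obstacle is the extension step: one must verify that matching isotypic contents on two subrepresentations propagate to matching contents on their complements, and then assemble a G-invariant unitary between those complements. In the finite-dimensional setting assumed throughout this paper this is a routine consequence of the uniqueness of irrep multiplicities in $\mathcal{H}$, so no serious technicality is hidden there; the genuinely informative part of the argument is the well-definedness of $W$ via the Gram-matrix identity forced by unitarity of $V$.
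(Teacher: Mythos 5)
Your proof is correct and follows essentially the same route as the paper's: both reduce the problem to an equivariant isometry on the cyclic subspace $\mathrm{span}\{U(g)\ket{\psi}\}$ and then extend it to a G-invariant unitary on all of $\mathcal{H}$ using complete reducibility (the paper packages the extension step as a separate lemma on the $\bigoplus_\mu \mathcal{M}_\mu\otimes\mathcal{N}_\mu$ decomposition, which is exactly your isotypic-matching argument). The only cosmetic difference is that you obtain the equivariant isometry via the Gram-matrix identity and linear extension, whereas the paper observes directly that $V\Pi$ commutes with every $U(g)$; these are interchangeable.
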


 Note that in both of these lemmas, the condition $\textbf{A}$ concerns whether it is possible to transform a single state to another under a limited type of dynamics. On the other hand, in the \textbf{B} condition, there is no restriction on the dynamics, but now we are asking whether one can transform a \emph{set} of states to another set such that each state in the former set is mapped to its corresponding state in the latter set under this dynamics.

Adopting the latter perspective enables us to use the machinery of quantum information theory to study asymmetry and, via the lemmas, the consequences of symmetric dynamics.  This technique has many other applications in the study of asymmetry. For instance, the information-theoretic approach is used in \cite{thesis:Marvian} to quantify the amount of asymmetry of states. In this paper we will find the characterization of the G-equivalence classes of pure states using both the constrained-dynamical and the information-theoretic approaches and we will show how these two characterizations are in fact equivalent via the Fourier transform. Also in the next section we explain how these two different perspectives on asymmetry naturally arise in the study of uncorrelated reference frames.  First however,  we present the proofs of the lemmas.

\begin{proof} (Lemma \ref{lemma1})
 \textbf{A} can be seen to imply \textbf{B} by taking $\mathcal{E}=\mathcal{E}_{\textnormal{G-cov}}$. To show the reverse, note that \textbf{B} implies the existence of a quantum operation $\mathcal{E}$ which satisfies Eq.~(\ref{Bpoint}). Now we can define
 \begin{equation} \label{twirling}
 \mathcal{E}'\equiv \int dg\ \mathcal{U}^\dag_{g}\circ \mathcal{E} \circ\mathcal{U}_g
\end{equation}
One can then easily check that $\mathcal{E}'$ is a G-covariant operation and that $\mathcal{E}'(\rho) = \int dg\ \mathcal{U}^\dag_{g}\circ \mathcal{E} \circ\mathcal{U}_g(\rho)= \int dg\ \mathcal{U}^\dag_{g}\circ \mathcal{U}_g(\sigma)=\sigma$, such that we can choose $\mathcal{E}_{\textnormal{G-cov}}=\mathcal{E}'.$
So \textbf{B} also implies \textbf{A}.
\end{proof}

\begin{proof} (Lemma \ref{unitary})
\textbf{A} can be seen to imply \textbf{B} by taking $V=V_{\textnormal{G-inv}}$. In the following we prove that \textbf{B} also implies \textbf{A}. Assume there exists a unitary $V$ such that $\forall g \in G$,
\begin{equation}
V U(g) \ket{\psi}=U(g)\ket{\phi}.
\end{equation}
First note that this implies $\ket{\phi}=V \ket{\psi}$.  Furthermore it implies that for all $g,h\in G$ we have
\begin{align*}
 V U(g) U(h)|{\psi}\rangle&=\omega(g,h)V U(gh)|{\psi}\rangle\\&=\omega(g,h) U(gh) |{\phi}\rangle\\ &= U(g)U(h) |{\phi}\rangle\\ &= U(g) V U(h)|{\psi}\rangle
\end{align*}
where we have used the fact that $g\rightarrow U(g)$  is a projective representation of G and so $U(g)U(h)=\omega(g,h)U(gh)$ for a phase $\omega(g,h)$. Now suppose $\Pi$ is the projector to the subspace spanned by all the vectors $\{U(h)\ket{\psi}, \forall h\in G\}$. Then the above equation implies that
\begin{equation}
\forall g\in G:\ V U(g)\Pi= U(g)V \Pi.
\end{equation}
Now by  definition of the projector $\Pi$ it is clear that  it commutes with all $\{U(g):g\in G\}$. So the above equation implies
\begin{equation}
\forall g \in G: [V\Pi,U(g)]=0.
\end{equation}
The operator $V\Pi$ unitarily maps a subspace of the Hilbert space to another subspace and it commutes with all $\{U(g)\}$. Using lemma \ref{extension} we conclude that this G-invaraint isometry can always be extended to a G-invariant  unitary $V_{\textnormal{G-inv}}$ such that $V_{\textnormal{G-inv}}\Pi=V\Pi$ and therefore
\begin{equation}
V_{\textnormal{G-inv}} U(g)|\psi\rangle= V\Pi U(g)|\psi\rangle=U(g)|\phi\rangle.
\end{equation}
\end{proof}

\subsection{Interpreting the two points of view in terms of uncorrelated reference frames} \label{reference frames}

Interestingly these two points of view to asymmetry naturally arise in the study of a communication scenario when the two distant parties lack a shared reference frame  for some degree of freedom.

Specifically, consider a degree of freedom that transforms according to the group $G$.  Passive transformations of the reference frame for this degree of freedom will then also be described by the group $G$, as will the relative orientation of any two such frames.  Consider two parties, Alice and Bob, that each have a local reference frame but where these are related by a group element $g\in G$ that is unknown to either of them.  For instance, they might each have a local Cartesian frame, but do not know their relative orientation. (See Ref. \cite{BRS07} for a discussion.)

Now consider the following state interconversion task. Alice prepares a system in the state $\rho$ relative to her local reference frame and sends it, along with a classical description of $\rho$, to Bob.  She also sends him a classical description of a state $\sigma$, and asks him to try and implement an operation that leaves the system in the state $\sigma$ relative to her local frame.    In effect, Alice is asking Bob to transform $\rho$ to $\sigma$ but without the benefit of having a sample of her local reference frame.  For instance, she may ask him to transform a spin aligned with her $\hat{z}$-axis to one that is aligned with her $\hat{y}$-axis.  We consider how the task is described relative to each of their local frames.




 \textbf{Description relative to Alice's frame}.   In this case, the initial and final states, $\rho$ and $\sigma,$ are described relative to Alice's frame. If the operation that Bob implements is described as $\mathcal{E}$ relative to his frame, then it would be described as  $\mathcal{U}^{\dag}(g)\circ\mathcal{E}\circ\mathcal{U}_g$ relative to Alice's frame by someone who knew which group element $g$ connected their frames. However,
since $g$ is unknown to Alice and Bob, they describe the operation relative to Alice's frame by the uniform mixture of such operations, i.e., by $\int dg\mathcal{U}_g\circ$ $\mathcal{E}\circ\mathcal{U}_{g}^{\dag}$. It is straightforward to check that this quantum operation is G-covariant.  So all the operations that Bob can implement are described relative to Alice's frame as G-covariant operations.  From this perspective, the interconversion can be achieved only if $\rho$ can be mapped to $\sigma$ by a G-covariant quantum operation.

 \textbf{Description relative to Bob's frame}.  The initial state is described as $\mathcal{U}_g[\rho]$ relative to Bob's frame. \ Bob must implement an operation that transforms this to a state which is described as $\mathcal{U}_g[\sigma]$ relative to his frame. But the group
element $g$ that connects Alice's to Bob's frames is unknown, therefore the transformation is required to succeed regardless of $g$.  Bob can implement any operation relative to his own frame and so the set of operations to which he has access is unrestricted.  The question, therefore, is whether there exists an operation $\mathcal{E}$ such that $\forall g\in G:$ $\mathcal{E}\left[  \mathcal{U}_g[\rho]\right]=\mathcal{U}_g[\sigma].$   In other words, from this perspective the interconversion task can be achieved only if every element of the set $\left\{  \mathcal{U}_g[\rho]: g\in G\right\}  $ can be mapped to the corresponding element of $\left\{  \mathcal{U}_{g}[\sigma]: g\in G\right\} $ by the same quantum operation.


We see therefore that the constrained-dynamical and information-theoretic points of view to the manipulation of asymmetry arise naturally as Alice's and Bob's points of view respectively.  They constitute the descriptions of a single interconversion task relative to two different reference frames.



\section{Unitary G-equivalence } \label{unitary-G-equivalence}

In the previous section we defined the notion of G-equivalence classes of states and we argued that the G-equivalence class of a state specifies all its asymmetry properties.

It is useful to introduce another equivalence relation over states that is slightly stronger than G-equivalence. Let $g\rightarrow U(g)$ be the projective unitary representation of the symmetry described by group G  on the Hilbert space of a system. Then
\begin{definition}[Unitary G-equivalence] \label{unitary G-equivalence I}
Two pure states, $\psi$ and $\phi$, are called \emph{unitarily G-equivalent} if they are interconvertible by a G-invariant unitary, that is, if there exists a unitary $V_{\textnormal{G-inv}}$ such that
$\forall g\in G: [V_{\textnormal{G-inv}},U(g)]=0$
and
\begin{equation}\label{Unitary-G-equivalence}
V_{\textnormal{G-inv}} |\psi\rangle= |\phi \rangle
\end{equation}
\end{definition}

Recall the two alternative points of view to the notion of asymmetry introduced in the previous section, i.e.  the constrained-dynamical point of view and the information-theoretic point of view.
This definition is based on the constrained-dynamical point of view. Alternatively we can define this concept in the information-theoretic point of view in terms of the unitary interconvertibility of the orbits defined by the two states. The equivalence of these two definitions follows trivially from lemma \ref{unitary}.

As we will see later, it turns out that for connected compact Lie groups it is a small step from characterizing unitary G-equivalence to characterizing general G-equivalence. In particular  in section \ref{G-equiv}, we will show that  for semi-simple connected compact Lie groups the unitary $G$-equivalence classes are the same as the $G$-equivalence classes.

\subsection{The constrained-dynamical characterization: equality of the reductions onto irreps} \label{reductions onto irreps}

We here find a characterization of the unitary G-equivalence classes within the constrained-dynamical perspective.  We begin by determining the most general form of a G-invariant unitary.

Suppose $\{U(g):g\in G \}$ is a projective unitary representation of a finite or compact Lie group $G$ on the Hilbert space $\mathcal{H}$.  We can always decompose this representation to a discrete set of finite-dimensional irreducible projective unitary representations (irreps).  This suggests the following decomposition of the Hilbert space  \cite{BRS07},
\begin{equation}
\mathcal{H}=\bigoplus_{\mu} \mathcal{M}_{\mu}\otimes  \mathcal{N}_{\mu},
\end{equation}
where $\mu$ labels the irreps and $\mathcal{N}_{\mu}$ is the subsystem associated to the copies of representation $\mu$ (the dimension of $ \mathcal{N}_{\mu}$ is equal to the multiplicity of the irrep $\mu$ in this representation). Then  $U(g)$ can be written as
\begin{equation}  \label{decomposition}
U(g) =\bigoplus_\mu U_\mu(g) \otimes \mathbb{I}_{\mathcal{N}_\mu}
\end{equation}
where $U_\mu(g)$ acts on $\mathcal{M}_{\mu}$ irreducibly and where $\mathbb{I}_{\mathcal{N}_\mu}$ is the identity operator on the multiplicity subsystem $\mathcal{N}_{\mu}$. We denote by $\Pi_{\mu}$ the projection operator onto the subspace  $\mathcal{M}_{\mu}\otimes  \mathcal{N}_{\mu}$, the subspace associated to the irrep $\mu$.

Now we are ready to characterize the unitary G-equivalence classes:

\begin{theorem}\label{Thm-reductions}
Two pure states $|\psi\rangle$ and $|\phi\rangle$ are unitarily G-equivalent if and only if
\begin{equation} \label{reduction}
\forall \mu:\  \textnormal{tr}%
_{\mathcal{N}_{\mu}}({\Pi}_{\mu}\ |\psi\rangle\langle\psi|\ {\Pi}_{\mu})=\textnormal{tr}_{\mathcal{N}_{\mu}}({\Pi}_{\mu}\ |\phi\rangle\langle\phi|{\Pi
}_{\mu})
\end{equation}
\end{theorem}

\begin{proof}
First, we find a simple characterization of G-invariant unitaries.
It is shown in appendix~\ref{app:proj} that any operator that commutes with all unitaries $U(g)$ has the form of Eq.~\ref{G-inv operators}, which implies that any
G-invariant unitary is of the form \cite{BRS07},
\begin{equation} \label{G-inv unitaries}
V_{\textnormal{G-inv}}=\bigoplus_\mu \mathbb{I}_{\mathcal{M}_\mu} \otimes V_{\mathcal{N}_\mu},
\end{equation}
where  $V_{\mathcal{N}_\mu} $ acts unitarily on ${\mathcal{N}_\mu} $.

Now suppose state $|\psi\rangle$ can be transformed to another state $|\phi\rangle$ by a G-invariant unitary $V_{\textnormal{G-inv}}$. Then given Eq.~\ref{G-inv unitaries}, it follows that for all $\mu$,
\begin{equation}
\Pi_\mu |{\phi}\rangle= \Pi_\mu V_{\textnormal{G-inv}} |{\psi}\rangle= \mathbb{I}_{\mathcal{M}_\mu} \otimes V_{\mathcal{N}_\mu} \Pi_\mu |{\psi}\rangle
\end{equation}
Eq.~(\ref{reduction}) then follows from the cyclic property of the trace and the unitarity of $V_{\mathcal{N}_\mu}$.

Now we prove the converse. If Eq.~(\ref{reduction}) holds,
then there exists a G-invariant unitary which transforms  $|{\psi}\rangle$ to $|{\phi}\rangle$.
First note that we can think of the two vectors  $\Pi_\mu  |{\psi}\rangle$ and $\Pi_\mu  |{\phi}\rangle$ as two different purifications of $\textnormal{tr}_{\mathcal{N}_\mu}(\Pi_\mu  |{\psi}\rangle\langle \psi|  \Pi_\mu)=\textnormal{tr}_{\mathcal{N}_\mu}( \Pi_\mu |{\phi}\rangle\langle \phi| \Pi_\mu )$. So   $\Pi_\mu   |{\psi}\rangle$ can be transformed to $\Pi_\mu |{\phi}\rangle$  by a unitary acting on $\mathcal{N}_\mu$, denoted by $V_{\mathcal{N}_\mu}$, such that
\begin{equation}
\mathbb{I}_{\mathcal{M}_\mu} \otimes V_{\mathcal{N}_\mu}   \Pi_\mu  |{\psi}\rangle=\Pi_\mu |{\phi}\rangle
\end{equation}
 (See e.g.\cite{Nie00}). By defining
\begin{equation}
V\equiv \bigoplus_\mu \mathbb{I}_{\mathcal{M}_\mu} \otimes V_{\mathcal{N}_\mu}
\end{equation}
we can easily see that $V$ is a G-invariant unitary and moreover $V|{\psi}\rangle=|{\phi}\rangle$. This completes the proof. 
\end{proof}


For an arbitrary state $\rho$ we call the set of operators $\{ \textnormal{tr}_{ \mathcal{N}_{\mu}} ({\Pi}_{\mu}\  \rho \ {\Pi}_{\mu} )\}$, the \emph{reduction onto irreps} of $\rho$. So in the above theorem we have proven that the unitary G-equivalence class of a pure state is totally specified by its reduction onto irreps.  Note, however, that as we will see in Sec.~\ref{sec:tworepns}, this is not true for general mixed states.

\begin{example} \label{Ex-U(1)-reduc}
Recall the quantum optics example studied in section \ref{q-optics-example} where the set of all phase shifts forms a representation of group U(1). There the representation of group U(1) is
$e^{i\theta}\rightarrow U(\theta)$ where  the phase  shift operator $U(\theta)$ is
\begin{equation}
U(\theta)\equiv e^{i N\theta}= \sum_{n} e^{i n \theta } \sum_{\alpha} |n,\alpha\rangle\langle n,\alpha|,
\end{equation}
where $N$ is the number operator with integer eigenvalues such that $N|n,\alpha\rangle=n|n,\alpha\rangle$ and where $\alpha$ is a multiplicity index. In this case all irreps are one-dimensional.   It follows that the reduction onto irreps of a pure state $|\psi\rangle=\sum_{n,\alpha} \psi_{n,\alpha} |n,\alpha\rangle $ is simply given by
\begin{equation}
p_{\psi}(n) \equiv \langle \psi| \Pi_n |\psi \rangle=\sum_{\alpha} |\psi_{n,\alpha}|^{2},
\end{equation}
where $\Pi_{n}$ is the projector to the eigen-subspace corresponding to the eigenvalue $n$ of $N$.  That is, the reduction onto irreps is the probability distribution over the spectrum of the number operator induced by $|\psi\rangle$.  Consequently, two pure states are unitarily U(1)-equivalent if and only if they define the same probability distribution over number. 
\end{example}

\subsection{The information-theoretic characterization: equality of characteristic functions} \label{section of char}

We will show that by taking the information-theoretic point of view, one finds that the unitary G-equivalence class of a pure state is specified entirely by its characteristic function, which is defined as follows.
\begin{definition} [Characteristic function]
The \emph{characteristic function} of a state $\rho$ relative to a projective unitary representation $\{U(g): g\in G\}$ of a group G is a function $\chi_\rho : G \to \mathbb{C}$ of the form
\begin{equation}
\chi_\rho(g)\equiv\textnormal{tr}\left(\rho U(g)\right)
\end{equation}
\end{definition}
Specifically, we have
\begin{theorem} \label{theorem of characteristic}
Two pure states $|\psi\rangle$ and $|\phi\rangle$ are unitarily
G-equivalent if and only if their characteristic functions are equal,
\begin{equation} \label{equality_characteristics}
\forall g\in \textnormal{G}:\ \ \langle\psi|U(g)|\psi\rangle=\langle\phi|U(g)|\phi\rangle .
\end{equation}
\end{theorem}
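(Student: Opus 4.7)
The plan is to prove the two directions separately, with the forward direction being essentially immediate and the reverse direction relying on Lemma~\ref{unitary}. For the forward direction, assume $|\psi\rangle$ and $|\phi\rangle$ are unitarily G-equivalent, so there exists a unitary $V_{\textrm{G-inv}}$ commuting with every $U(g)$ and satisfying $V_{\textrm{G-inv}}|\psi\rangle = |\phi\rangle$. Then I would simply compute
\begin{equation}
\chi_\phi(g) = \langle\psi|V_{\textrm{G-inv}}^\dag U(g) V_{\textrm{G-inv}}|\psi\rangle = \langle\psi|V_{\textrm{G-inv}}^\dag V_{\textrm{G-inv}} U(g)|\psi\rangle = \chi_\psi(g),
\end{equation}
using the commutation relation in the middle step. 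This is one line and requires no further work.

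For the reverse direction, my plan is to build an explicit unitary satisfying condition \textbf{B} of Lemma~\ref{unitary} and then invoke that lemma to conclude unitary G-equivalence. The key observation is that the Gram matrix of the orbit $\{U(g)|\psi\rangle : g\in G\}$ is determined entirely by the characteristic function, since
\begin{equation}
\langle U(g_1)\psi | U(g_2)\psi\rangle = \langle\psi|U(g_1^{-1}g_2)|\psi\rangle = \chi_\psi(g_1^{-1}g_2),
\end{equation}
and similarly for $|\phi\rangle$. Equality of characteristic functions therefore implies that the assignment $U(g)|\psi\rangle \mapsto U(g)|\phi\rangle$ extends linearly to a well-defined isometry $V_0$ from the closed subspace $\mathcal{H}_\psi := \overline{\textrm{span}}\{U(g)|\psi\rangle\}$ onto $\mathcal{H}_\phi := \overline{\textrm{span}}\{U(g)|\phi\rangle\}$. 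Well-definedness here is exactly the step where the hypothesis is used: if some linear combination $\sum_i c_i U(g_i)|\psi\rangle$ vanishes, then its squared norm, which is a sum of entries of the Gram matrix, is zero; the same combination applied to the $\phi$-orbit has the same Gram matrix and hence also vanishes.

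Next I would extend $V_0$ to a unitary $V$ on all of $\mathcal{H}$. Since $\mathcal{H}_\psi$ and $\mathcal{H}_\phi$ are isomorphic (via $V_0$), their orthogonal complements have equal dimension, so any unitary between those complements completes $V_0$ to a unitary $V$ on $\mathcal{H}$. By construction, $V U(g)|\psi\rangle = U(g)|\phi\rangle$ for every $g\in G$, which is precisely condition \textbf{B} of Lemma~\ref{unitary}. That lemma then furnishes a G-invariant unitary $V_{\textrm{G-inv}}$ with $V_{\textrm{G-inv}}|\psi\rangle = |\phi\rangle$, proving unitary G-equivalence.

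The main obstacle, such as it is, is making the well-definedness step of the isometry precise: one must verify that the map $\sum_i c_i U(g_i)|\psi\rangle \mapsto \sum_i c_i U(g_i)|\phi\rangle$ depends only on the target vector on the left-hand side, which reduces cleanly to comparing the two Gram matrices. Once that is in hand, the rest is routine linear algebra and a direct appeal to Lemma~\ref{unitary}. No properties of the group beyond the existence of the representation are needed, so the argument applies to arbitrary groups, not just compact Lie or finite ones.
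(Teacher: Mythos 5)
Your proof is correct and follows essentially the same route as the paper's: both directions reduce to the equality of the Gram matrices of the covariant sets $\{U(g)|\psi\rangle\}$ and $\{U(g)|\phi\rangle\}$, followed by an appeal to Lemma~\ref{unitary} to upgrade the resulting unitary to a G-invariant one. The only difference is that you prove the Gram-matrix interconvertibility criterion inline (the well-definedness argument for $V_0$), whereas the paper cites it as Lemma~\ref{Gram2}.
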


The benefit of trying to characterize the G-equivalence classes using the information-theoretic perspective is that we can make use of known results concerning the unitary interconvertibility of sets of pure states.  We express the condition for such interconvertibility as a lemma, after recalling the definition of the Gram matrix of a set of states.

\begin{definition}[Gram matrix] Consider the set of states $\{|{\psi_\theta}\rangle\}$.  If $\theta$ is a discrete parameter, then we define the \emph{Gram matrix} of the set $\{|{\psi_\theta}\rangle\}$ by $X_{\theta,\theta'}  \equiv\langle{\psi_\theta}|{\psi_{\theta'}}\rangle$. If $\theta$ is a continuous parameter, then we can define the function $X({\theta,\theta'})\equiv\langle{\psi_\theta}|{\psi_{\theta'}}\rangle$, which, with a slight abuse of terminology, we will also call the \emph{Gram matrix} of the set $\{|{\psi_\theta}\rangle\}$.
\end{definition}

\begin{lemma}\label{Gram2}
There exists a unitary operator $V$ which transforms  each member of $\{|{\psi_\theta}\rangle\}$ to its corresponding member in $\{|{\phi_\theta}\rangle\}$, that is, $\forall \theta:\  V|{\psi_\theta}\rangle=|{\phi_\theta}\rangle$, if and only if the Gram matrices of the two sets of states are equal, i.e. $$\forall \theta,\theta':\ \  \langle{\psi_\theta}|{\psi_{\theta'}}\rangle=\langle{\phi_\theta}|{\phi_{\theta'}}\rangle$$
\end{lemma}
A simple proof of this lemma is provided in the footnote.\footnote{
The necessity of the equality of the Gram matrices is trivial. Sufficiency is proven as follows. Suppose we use a subset $\{|\psi_{\theta_{1}}\rangle, |\psi_{\theta_{2}}\rangle,\cdots   \}$ of $\{|{\psi_\theta}\rangle\}$ to  build an orthonormal basis for the subspace spanned by $\{|{\psi_\theta}\rangle\}$ via  the Gram-Schmidt process and call this basis I. Similarly,  use  the subset  $\{|\phi_{\theta_{1}}\rangle, |\phi_{\theta_{2}}\rangle,\cdots   \}$ of $\{|{\phi_\theta}\rangle\}$ to  build an orthonormal basis for the subspace spanned by $\{|{\phi_\theta}\rangle\}$  via  the Gram-Schmidt process and call this basis II.   Recall that the Gram-Schmidt orthogonalization process depends only on the Gram matrix of the set of states.  Since,  by assumption, the Gram matrix of the two sets of states are equal then for any state $|\psi_{\theta}\rangle\in \{|{\psi_\theta}\rangle\}$ its description in basis I is the same as the description of the corresponding $|\phi_{\theta}\rangle\in \{|{\phi_\theta}\rangle\}$ in basis II.  It follows that if $V$ is the unitary which transforms basis I to basis II, then by linearity for all $|\psi_{\theta}\rangle\in \{|\psi_{\theta}\rangle\}$, $V$ maps $|\psi_{\theta}\rangle$ to the state $|\phi_{\theta}\rangle$. This proves the lemma.
}

It is now straightforward to prove theorem \ref{theorem of characteristic}.

\begin{proof} [Proof of theorem \ref{theorem of characteristic}]
 By definition \ref{unitary G-equivalence I}, $|{\psi}\rangle$ and $|{\phi}\rangle$ are unitarily G-equivalent if there exists a unitary transformation $V_{\textnormal{G-inv}}$ which takes $|{\psi}\rangle$ to $|{\phi}\rangle$. By lemma \ref{unitary} there exists such a unitary if and only if there exists a unitary $V$ such that $\forall g\in G:\ VU(g)|\psi\rangle=U(g)|\phi\rangle$. By lemma (\ref{Gram2}), the necessary and sufficient condition for the existence of such a unitary is the equality of the Gram matrices of the set  $\{U(g)|{\psi}\rangle:g\in G\}$  and the set $\{U(g)|{\phi}\rangle :g\in G\}$.  Given that  the elements of these matrices are, respectively,
$$[X_\psi]_{g_1,g_2}=\langle\psi|U^\dag(g_1)U(g_2)|{\psi}\rangle=\omega(g_{1}^{-1},g_{2}) \langle\psi| U(g^{-1}_1g_2)|{\psi}\rangle,$$
and
$$[X_\phi]_{g_1,g_2}=\langle\phi|U^\dag(g_1)U(g_2)|{\phi}\rangle=\omega(g_{1}^{-1},g_{2}) \langle\phi|U(g^{-1}_1g_2)|{\phi}\rangle,$$
where we have used the fact $g\rightarrow U(g)$ is a projective unitary representation and so  $$U^\dag(g_1)U(g_2)=U(g^{-1}_{1})U(g_2)=\omega(g^{-1}_{1},g_{2})$$
for the cocycle $\omega$.  Equality of the Gram matrices is equivalent to
\begin{equation}\label{proof-lem-char}
\forall g\in \textnormal{G}:\ \ \langle\psi|U(g)|\psi\rangle=\langle\phi|U(g)|\phi\rangle,
\end{equation}
and this is simply the statement that the characteristic functions of $\psi$ and $\phi$ are equal.
\end{proof}



\begin{example}
In example \ref{Ex-U(1)-reduc} we found the  characterization of unitary equivalence classes based on the reduction of states to irreps  in the case of group U(1) with representation $e^{i\theta}\rightarrow e^{i\theta N}$ where $N$ is the number operator with nonnegative integer eigenvalues.
Here, we use the result of lemma \ref{Gram2} to find another characterization  of these unitary equivalence classes in terms of characteristic functions of states. In this case, for arbitrary state $|\psi\rangle=\sum_{n,\alpha} \psi_{n,\alpha} |n,\alpha\rangle $, the characteristic function is given by the expectation value of the phase shift operator, i.e.,
\begin{equation}
\chi_{\psi}(\phi) \equiv \langle \psi | \exp(i\phi {N}) | \psi \rangle= \sum_n p_{\psi}(n) e^{i n\phi},
\end{equation}
 where $p_{\psi}(n) = \sum_{\alpha} |\psi_{n,\alpha}|^2$ is the reduction onto irreps.
\end{example}
It follows that in the U(1) case, the reduction onto irreps and the characteristic function are related by a Fourier transform.
The Fourier transform can also be defined for arbitrary compact Lie groups or for finite groups (which might be non-Abelian) and in these cases as well, it describes  the relation between the reduction onto irreps and the characteristic function, as will be shown in section \ref{Sec-Reduction}.


\subsection{Approximate notion of unitary G-equivalence} \label{sec-approximate}

We have found the necessary and sufficient condition for the existence of a G-invariant  unitary which transforms a pure state $\psi$ to another pure state $\phi$. This is the condition for exact transformation. But there might be situations in which we cannot transform $\psi$ to $\phi$ but we can transform it  to some state close to $\phi$.

In the following we demonstrate that if the reductions onto irreps of two pure states $\psi$ and $\phi$ are close in some sense (or equivalently their characteristic functions are close) then there exists a G-invariant unitary which transforms $\psi$ to a state close to $\phi$ (See appendix \ref{app:proofofapproxGequivalence} for a discussion about the relevant notion of distance in this context).

Recall that the fidelity of two positive operators $A_{1}$ and $A_{2}$ is defined as
\begin{equation}
\textnormal{Fid}(A_{1},A_{2})\equiv  \|\sqrt{A_1}\sqrt{A_2}\|=\textnormal{tr}(\sqrt{  \sqrt{A_{1}}  A_{2} \sqrt{A_{1} }  })
\end{equation}
 where $\|\cdot\|$ denotes the trace norm. 

\begin{theorem} \label{approximate_theorem}
Suppose  $\{F_1^{(\mu)}\}$ and $\{F_2^{(\mu)}\}$ are respectively the reductions onto irreps of  $\psi_1$ and $\psi_2$,  two arbitrary pure states in the same Hilbert space. Then for any G-invariant unitary $V$
 acting on this space
\begin{equation}\label{approximate_trans}
|\langle\psi_2|V|\psi_1\rangle|\leq \sum_\mu \textnormal{Fid}(F_{1}^{(\mu)},F_{2}^{(\mu)})
\end{equation}
Furthermore there exists a G-invariant unitary $V$ for which the equality holds.
\end{theorem}
According to this theorem if the fidelities of the reductions onto irreps is high then there exists a G-invariant unitary which transforms one of the states to a state very close to the other. On the other hand, if these fidelities are low we can never transform one of the states to a state close to the other via G-invariant unitaries.
\begin{remark}\label{remark-unit}
For $\{F_1^{(\mu)}\}$ and $\{F_2^{(\mu)}\}$  the reductions of an arbitrary pair of states   it holds that
 $\sum_\mu \textnormal{Fid}(F_{1}^{(\mu)},F_{2}^{(\mu)})\le 1$ and the equality holds  iff  $\forall \mu: F_{1}^{(\mu)}=F_{2}^{(\mu)}$.
So theorem \ref{Thm-reductions} is a special case of theorem \ref{approximate_theorem}.
\end{remark}
We present the proof of theorem \ref{approximate_theorem} as well as some  other versions of it and the proof of remark \ref{remark-unit} in appendix \ref{app:proofofapproxGequivalence}.

\begin{example}\label{example-charac}
Recall our quantum optics example where the set of all phase shifts forms a representation of the group U(1) (see example \ref{Ex-U(1)-reduc}).  Let  $p_{\psi}$ and $p_{\phi}$ be the probability distributions over integers  which describe the reductions onto irreps of the states $\psi$ and $\phi$ respectively.  Then theorem~\ref{approximate_theorem} implies that for any U(1)-invariant unitary $V$,
\begin{equation}
|\langle\psi|V|\phi\rangle|\le \sum_{n} \sqrt{p_{\psi}(n)p_{\phi}(n)}
\end{equation}
and furthermore there exists a U(1)-invariant unitary for which the equality holds.
\end{example}

\section{What are the reduction onto irreps and the characteristic function?} \label{Sec-Reduction}

We have found two different characterizations of the unitary  G-equivalence class of pure states, namely the characteristic function of states and the reduction onto irreps of states.  In this section, we will show that the reduction onto irreps and the characteristic function are simply two particular representations of the reduction of the state to the associative algebra (for the degree of freedom associated to the symmetry transformation)  and that these representations are related to one another by a generalized Fourier transform. Furthermore, we provide a list of properties of characteristic functions which will be useful in the rest of this section.

In appendices \ref{app:distinguish} and \ref{app:charfuncs} we present more discussions about the meaning of characteristic functions of states. In appendix \ref{app:distinguish} we discuss about the interpretation of the absolute value of the characteristic function of state $\psi$,
$$\left|\chi_{\psi}(g)\right|=\left\langle \psi| U(g)|\psi\rangle\right|,$$
 in terms of the pairwise distinguishability of states in the set $\{U(g)|\psi\rangle: g\in G\}$. In particular, we argue that though the function $\left|\chi_{\psi}(g)\right|$ uniquely specifies all the pairwise distinguishabilities in this set, nevertheless it cannot specify the information that can be transferred using   the encoding $g\rightarrow U(g)|\psi\rangle$ and so it can not specify the asymmetry of state $\psi$. Also, in appendix \ref{app:charfuncs} we show that the characteristic function of a quantum state can be thought as a natural generalization of the notion of the characteristic function of a probability distribution.

\subsection{Two representations of the reduction to the associative algebra}\label{sec:tworepns}

If we are interested in only some particular degree of freedom of a quantum system then we do not need the full description of the state in order to infer the statistical features (expectation values, variances, correlations between two different observables, etcetera) of that degree of freedom.  In particular suppose we are interested in the statistical properties of the set of operators $\{O_i\in\mathcal{B}(\mathcal{H})\}$.  Closing this set under the operator product and sum yields the associative algebra generated by $\{O_i\}$, which is the set of all polynomials in $\{O_i\}$.  We denote this associative algebra by $\textnormal{Alg}\{O_i\}$.  To specify all the statistical properties of the state $\rho\in\mathcal{B}(\mathcal{H})$ for the set of observables $\{O_i\}$ it is necessary and sufficient to specify the expectation values of all the operators in $\textnormal{Alg}\{O_i\}$ under the state $\rho$. The object that contains all and only this information is called the reduction of the state to the associative Algebra, denoted $\rho |_{\textnormal{Alg}\{O_i\}}$.

$\textnormal{Alg}\{O_i\}$, considered as a finite-dimensional $C^*$-algebra,
 has a unique decomposition (up to unitary equivalence) of the form
\begin{equation}\label{Algebra_decomp}
\bigoplus_J \mathfrak{M}_{m_J} \otimes \mathbb{I}_{n_J}
\end{equation}
where $\mathfrak{M}_{m_J}$ is the full matrix algebra $\mathcal{B}(\mathbb{C}^{m_J})$ and $\mathbb{I}_{n_J}$ is the identity on $\mathbb{C}^{n_J}$ \cite{Davidson}.
This means that any element $A$ of the algebra can be written as
\begin{equation}
A= \bigoplus_J  A^{(J)}\otimes \mathbb{I}_{n_J}
\end{equation}
where $A^{(J)}\in \mathcal{B}(\mathbb{C}^{m_J})$.  Furthermore, if we consider the set of all elements of the algebra, that is, all $A\in \textnormal{Alg}\{O_i\}$,
and look at the set of corresponding $A^{(J)}$ for fixed $J$,  this set of operators acts irreducibly on $\mathbb{C}^{m_J}$ and spans $\mathcal{B}(\mathbb{C}^{m_J})$. Clearly this decomposition induces the following structure on the Hilbert space
\begin{equation}
\mathcal{H}=\bigoplus_J \mathcal{M}_{J} \otimes \mathcal{N}_{J}.
\end{equation}
where $\mathcal{M}_{J}$ is isomorphic to $\mathbb{C}^{m_J}$ and  $\mathcal{N}_{J}$ is isomorphic to  $\mathbb{C}^{n_J}$.

 Suppose $\Pi_J$ is the projective operator to the subspace $\mathcal{M}_{J} \otimes \mathcal{N}_{J}$. Then to specify all the relevant information about the observables in the Algebra  for the given state $\rho$ it is necessary and sufficient to know all of the operators
\begin{equation} \label{reduc-sectors}
\rho^{(J)}\equiv tr_{\mathcal{N}_J}(\Pi_J \rho \Pi_J).
\end{equation}
Then for any arbitrary observable $A$ in the Algebra we have
\begin{equation}
\textnormal{tr}(A\rho)= \sum_{J} \textnormal{tr}(A^{(J)}\rho^{(J)})
\end{equation}
and so specifying the set $\{\rho^{(J)}\}$ we know all the relevant information about the state. In other words,  $\{\rho^{(J)}\}$ uniquely specifies the reduction to the Algebra $\rho |_{\textnormal{Alg}\{O_i\}}$.

The above discussion applies to any arbitrary set of observables.  Here, we will be interested in the case where this set describes the degree of freedom associated to some symmetry transformation.  If the symmetry transformation is associated with the symmetry group $G$ and projective unitary representation $\{ U(g):g\in G\}$ on the Hilbert space of the system, then the set of observables to consider are all those in the linear span of $\{ U(g):g\in G\}$. In particular, in the case of Lie groups  this set contains the representation of all generators of  the Lie Algebra (associated to the group) and all the polynomials formed by these generators. For example, in the case of  \textnormal{SO(3)} the set includes  all the observables in the linear span of $\{ U(\Omega): \Omega\in SO(3)\}$ and so it clearly contains all the generators, which in this case are angular momentum operators, as well as all polynomials of these.

Decomposition  of this algebra in the form of Eq.~(\ref{Algebra_decomp}) in fact coincides with  the decomposition of the unitary projective representation $\{ U(g):g\in G\}$ to irreps
\begin{equation}\label{irrepdecompofU}
U(g)\cong \bigoplus_\mu U^{(\mu)}(g)\otimes \mathbb{I}_{N_\mu}
\end{equation}
where $\mu$ labels the irreps and $\mathbb{I}_{N_\mu}$ is the identity acting on the multiplicity subsystem associated to irrep $\mu$ (Remember that $G$ is by assumption a finite or compact Lie group and so it is completely reducible.). Here we can think of $\mu$ playing the same role as $J$ in the decomposition of the arbitrary Algebra in Eq.~(\ref{Algebra_decomp}). Each irrep index $\mu$ appearing in the decomposition of $\{ U(g):g\in G\}$ corresponds to one J in Eq.~(\ref{Algebra_decomp}) and the set $\{U^\mu(g):g\in G\}$ for a fixed $\mu$ spans the full matrix algebra $\mathfrak{M}_{m_J}$ of the corresponding $J$.  Consequently, the spaces on which the projective unitary representation of G acts irreducibly are simply the $\mathcal{M}_{J}$.  So it follows that in this case, where the associative Algebra coincides with the span of the elements of the projective unitary representation of the group, $\{ U(g):g\in G\}$, the set of operators $\{\rho^{(J)}\}$ (defined by Eq.(\ref{reduc-sectors})) is simply the reduction onto the irreps of the state $\rho$, the generalization to mixed states of the notion defined in the section \ref{reductions onto irreps}, and therefore we can conclude that the reduction onto the irreps is a representation of the reduction onto the associative algebra.



Another way to specify the reduction of the state onto the associative algebra is to specify the Hilbert-Schmidt inner product of $\rho$ with each of the $U(g)$, namely, $\textnormal{tr}(\rho U(g))$ for all $g\in$ G. So if we define the characteristic function associated to the state $\rho$ as the function $\chi_\rho: G \to \mathbb{C}$ defined by
$\chi_\rho(g) \equiv \textnormal{tr}(\rho U(g)),$
then the characteristic function is a particular representation of the reduction to the associative algebra.  It is clear that this definition constitutes a generalization to mixed states of the notion of characteristic functions introduced in the section \ref{section of char}.


To summarize, we have
\begin{remark} For a state $\rho \in \mathcal{B}(\mathcal{H})$ and a projective unitary representation $U$ of a group $G$, the reduction of $\rho$ to the associative algebra $\textnormal{Alg}\{ U(g): g \in G \}$ can be represented either in terms of the \emph{reduction onto irreps} of $\rho$, defined as
\begin{equation}
\{ \rho^{(\mu)} \equiv \textnormal{tr}_{\mathcal{N}_{\mu}}(\Pi_{\mu} \rho \Pi_{\mu})\} ,
\end{equation}
(where the Hilbert space decomposition induced by $U$ is $\mathcal{H}= \bigoplus_{\mu} \mathcal{M}_{\mu} \otimes \mathcal{N}_{\mu}$ and $\Pi_{\mu}$ projects onto $\mathcal{M}_{\mu} \otimes \mathcal{N}_{\mu}$),
or in terms of the \emph{characteristic function} of $\rho$, defined as
\begin{equation}\label{charfunc_general}
\chi_\rho(g) \equiv \textnormal{tr}(\rho U(g)).
\end{equation}
\end{remark}

Finally, we note that the relationship between these two representations is the Fourier transform over the group.
\begin{proposition}
The characteristic function and reduction onto irreps can be computed one from the other via
\begin{equation} \label{reduction_char}
\chi_\rho(g)=\sum_\mu \textnormal{tr}(\rho^{(\mu)}U^{(\mu)}(g))
\end{equation}
and
\begin{equation} \label{Fourier}
\rho^{(\mu)}=d_\mu \int dg \chi_\rho(g^{-1}) U^{(\mu)}(g).
\end{equation}
\end{proposition}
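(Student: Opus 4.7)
The proposition asserts that the characteristic function and the reduction onto irreps are related by a Fourier transform pair over the group. My plan is to prove the forward direction Eq.~(\ref{reduction_char}) by direct computation using the block decomposition of $U(g)$ established in Eq.~(\ref{irrepdecompofU}), and then prove the inversion formula Eq.~(\ref{Fourier}) by substituting Eq.~(\ref{reduction_char}) into the right-hand side and invoking the Schur (grand) orthogonality relations for matrix elements of inequivalent irreducible representations of a compact group.

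For Eq.~(\ref{reduction_char}), I would start from the definition $\chi_\rho(g) = \mathrm{tr}(\rho U(g))$, insert the decomposition $U(g)=\bigoplus_\mu U^{(\mu)}(g)\otimes I_{N_\mu}$, and split the full trace into a sum of sector traces using the projectors $\Pi_\mu$ onto $\mathcal{M}_\mu\otimes\mathcal{N}_\mu$. Since $U(g)$ leaves each sector invariant, $\mathrm{tr}(\rho U(g)) = \sum_\mu \mathrm{tr}\bigl(\Pi_\mu \rho \Pi_\mu \,(U^{(\mu)}(g)\otimes I_{N_\mu})\bigr)$. Factoring the trace as $\mathrm{tr}_{\mathcal{M}_\mu}\circ \mathrm{tr}_{\mathcal{N}_\mu}$ and using cyclicity plus the fact that $U^{(\mu)}(g)$ acts trivially on $\mathcal{N}_\mu$ immediately collapses the inner partial trace onto $\rho^{(\mu)}=\mathrm{tr}_{\mathcal{N}_\mu}(\Pi_\mu\rho\Pi_\mu)$, giving the claimed identity.

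For Eq.~(\ref{Fourier}), I would substitute the expression just derived (with dummy index $\nu$) into the right-hand side, yielding $d_\mu \sum_\nu \int dg\, \mathrm{tr}(\rho^{(\nu)} U^{(\nu)}(g^{-1}))\, U^{(\mu)}(g)$. Passing to matrix elements and using unitarity, $U^{(\nu)}(g^{-1})_{ba}=\overline{U^{(\nu)}(g)_{ab}}$, the integral reduces to the Schur orthogonality integral
\begin{equation}
\int dg\, \overline{U^{(\nu)}(g)_{ab}}\, U^{(\mu)}(g)_{ij} = \tfrac{1}{d_\mu}\,\delta_{\mu\nu}\delta_{ai}\delta_{bj},
\end{equation}
so the sum collapses to $\rho^{(\mu)}_{ij}$ after cancelling $d_\mu$ and evaluating the Kronecker deltas. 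The only subtlety is bookkeeping: one must make sure each index contraction lines up (the transpose arising from $g\mapsto g^{-1}$ is precisely what converts $\mathrm{tr}(\rho^{(\nu)} U^{(\nu)}(g^{-1}))$ into the form $\sum_{a,b}\rho^{(\nu)}_{ab}\overline{U^{(\nu)}(g)_{ab}}$ needed to apply Schur).

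The main obstacle I anticipate is not conceptual but bookkeeping: making sure that the Schur orthogonality relation is applied in a representation-independent way and that the invariant measure $dg$ on $G$ is normalized so that $\int dg = 1$ (consistent with compactness). For finite groups the same proof goes through with $\int dg$ replaced by $|G|^{-1}\sum_g$ and with $d_\mu$ the dimension of the irrep. I would state this normalization convention explicitly at the outset so that both the compact Lie group and finite group cases are handled uniformly, and I would note that completeness of the irreps in the Peter--Weyl sense is what guarantees that no information about $\rho|_{\mathrm{Alg}\{U(g)\}}$ is lost in passing between the two representations.
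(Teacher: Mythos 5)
Your proposal is correct and follows essentially the same route as the paper: the forward identity by inserting the irrep decomposition of $U(g)$ into $\mathrm{tr}(\rho U(g))$, and the inversion by the Schur/Peter--Weyl orthogonality relations for matrix elements of irreps. The only difference is presentational — you carry out the index bookkeeping explicitly, whereas the paper packages the same orthogonality relations as an operator completeness relation on $\mathcal{B}(\mathcal{M}_\mu)$.
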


\begin{proof}
The expression for $\chi_\rho(g)$ in terms of $\{\rho^{(\mu)}\}$, Eq. (\ref{reduction_char}), follows directly from Eqs. (\ref{irrepdecompofU}) and (\ref{charfunc_general}).
Conversely, to find the $\{\rho^{(\mu)}\}$ in terms of $\chi_\rho(g)$ we use the Fourier transform over the group. The idea is based on the following orthogonality relations which are part of the Peter-Weyl theorem (See e.g. \cite{Barut-1986}):
\begin{equation} \label{orthogonalityintext}
\int_G dg U_{i,j}^{(\mu)}(g) {{\overline{U}^{(\nu)}_{k,l}}}(g)=\frac{\delta_{\mu,\nu}\delta_{i,k}\delta_{j,l}}{ d_\mu}
\end{equation}
where $\{U_{i,j}^{\mu}\}$ are the matrix elements of $U^{\mu}(g)$, $dg$ is the unique Haar measure on the group, bar denotes the complex conjugate and $d_\mu$ is the dimension of irrep $\mu$. According to this theorem any continuous function on a compact Lie group can be uniformly approximated by linear combinations of matrix elements $U_{i,j}^{(\mu)}(g) $.   Note that for the finite groups, we can get the same orthogonality relations by replacing the integral with a summation. Furthermore any function over a finite group can be expressed as a linear combination of the matrix elements of irreps. So basically all the properties we use hold for finite groups as well as compact Lie groups.

An arbitrary operator $A^{(\mu)}$ in $\mathcal{B}(\mathcal{M}_{\mu})$  can be written as a linear combination of elements of $\{U^{(\mu)}(g):\ g\in G\}$. The above orthogonality relations imply that this expansion  has a simple form as
\begin{equation}
A^{(\mu)}=d_{\mu}\int dg\ U^{\mu}(g)\ \ \textnormal{tr}(A^{(\mu)} U^{\mu}(g^{-1}))
\end{equation}
Clearly this can be considered as a completeness relation where we have decomposed the identity map on $\mathcal{B}(\mathcal{M}_{\mu})$ as the sum of projections to a basis (which is generally overcomplete).    Also note that the orthogonality relations imply that for $\nu \neq \mu$
\begin{equation}
\int dg\ U^{\nu}(g) \textnormal{tr}(A^{(\mu)} U^{\mu}(g^{-1}))=0\ \ \ \ \ (\nu \neq \mu)
\end{equation}
Using these orthogonality relations, we obtain Eq.~(\ref{Fourier}).
\end{proof}

We should emphasize that the reduction onto the associative algebra, though sufficient for deciding G-equivalence of pure states, is not in general sufficient for deciding G-equivalence of arbitrary states, i.e., mixed and pure.  Its sufficiency in the case of pure states follows from its sufficiency for deciding unitary G-equivalence (proven in Sec. \ref{section of char}) and the fact that the unitary G-equivalence classes are a fine-graining of the G-equivalence classes.  Its insufficiency in the case of mixed states can be established by the following simple example of two states (one pure and one mixed) that have the same characteristic function but fall in different G-equivalence classes.  The example is for the case of U(1)-covariant operations, and the two states are $\frac{1}{2}(|0\rangle +|1\rangle)(\langle0|+\langle1|)$ and $\frac{1}{2}(|0\rangle\langle 0| +|1\rangle \langle 1|)$.  The second is clearly U(1)-invariant while the first is not and so they must lie in different U(1)-equivalence classes.  Nonetheless, the characteristic function for both equals $\chi(\theta)=1/2(1+\exp(i\theta))$.

We close this section by  mentioning another consequence of the orthogonality relations Eq.~(\ref{orthogonalityintext}) which is useful later. Suppose $A,B$ are arbitrary operators in $\mathcal{B}(\mathcal{M}_{\mu})$ and
$$\chi_A(g)\equiv \textnormal{tr}(AU^{(\mu)}(g)),\ \  \chi_B(g)\equiv \textnormal{tr}(BU^{(\mu)}(g)),\ \ \textnormal{and}$$ 
$$ \chi_{AB}(g)\equiv \textnormal{tr}(ABU^{(\mu)}(g))$$ are respectively the characteristic functions of $A,B$  and $AB$. Then
\begin{equation} \label{convolution}
\chi_{AB}=d_\mu\ \chi_{A} \ast \chi_{B}
\end{equation}
where $\ast$ is the convolution of two functions \footnote{Note that for non-Abelian groups $f_1\ast f_2$ is not necessarily equal to $f_2\ast f_1$.}
\begin{equation} \label{def-conv}
f_1\ast f_2(g)\equiv\ \int dh f_1(gh^{-1})f_2(h)
\end{equation}
 In particular,   since $\textnormal{tr}(AB)=\chi_{AB}(e)$ (where $e$ is the identity of the group) the above formula can be used to find $\textnormal{tr}(AB)$ in terms of the characteristic functions of $A$ and $B$. Using Eq.(\ref{convolution}) we get
\begin{align*}
\textnormal{tr}(AB)=\chi_{AB}(e)&=d_\mu\ [\chi_{A} \ast \chi_{B}](e)\\&=d_\mu\ \int dh\  \chi_A(h)  \chi_B(h^{-1})
\end{align*}


\subsection{Properties of characteristic functions} \label{properties}

The characteristic functions introduced here are quantum analogues of those used in classical probability theory. The connection is discussed in detail in Appendix \ref{app:charfuncs}. Here we simply summarize some useful properties of characteristic functions.

\begin{enumerate}
  \item A function $\phi(g)$ from the finite or compact Lie group  $G$ to complex numbers  is the characteristic function of a physical state iff it is continuous (in the case of Lie groups)  positive definite (as defined in  appendix \ref{app:charfuncs}) and normalized (i.e. $\phi(e)=1$ where $e$ is the identity of the group.). (This property assumes that all irreps are physically accessible. ) \label{Bochner}

  \item  The characteristic function of a state is invariant under G-invariant unitaries acting on that state, 
      $$\chi_{\mathcal{V}_{\textnormal{G-inv}}[\rho]}(g)=\chi_{\rho}(g),$$
      where $\mathcal{V}_{\textnormal{G-inv}}[\cdot]=V_{\textnormal{G-inv}}(\cdot)V^{\dag}_{\textnormal{G-inv}}$ and $[V_{\textnormal{G-inv}},U(g)]=0$ for all $g\in G$.
 \item Characteristic functions multiply under tensor product, \label{multi}
      \begin{equation} \chi_{\rho\otimes\sigma}(g)= \chi_{\rho}(g)\chi_{\sigma}(g). \end{equation}
  \item $|\chi_\rho(g)|\leq 1$ for all $g\in G$ and $\chi_\rho(e)=1$ where $e$ is the identity of the group.
  \item If $|\chi_\rho(g_s)|=1$ for $g_s\in G$ then $g_s$ is a symmetry of $\rho$.  If $\rho$ is a pure state, then $g_s$ is a symmetry of $\rho$  if and \emph{only if}  $|\chi_\rho(g_s)|=1$. \label{symmetric states}
  \item So $|\chi_\rho(g)|=1$ for all $g\in G$ implies that the state is invariant; in this case  $\chi_{\rho}(g)$ is a 1-d representation of the group. \label{tot-Symmetric}

\item Suppose $L$ is the representation of a generator of a Lie group on the Hilbert space of a system such that $\{e^{i\theta L}:\theta\in (0,2\pi]\}$   is the representation of a U(1)-subgroup of the group. Then we can find all moments of $L$ using the characteristic function.
\begin{equation}
\textnormal{tr}(\rho L^{k})=i^{-k}\frac{\partial^k}{\partial\theta^k} \chi_{\rho}(e^{i\theta L} ) \mid_{\theta=0}
\end{equation} \label{derivatives}
(Note that by $\chi_{\rho}(e^{i\theta L})$ we really mean  $\chi_{\rho}(g)$  for the group element $g\in G$ which is represented by $e^{i\theta L}$.)


  \end{enumerate}
\begin{proof}
Item \ref{Bochner} 
is proven in Appendix \ref{app:qcharfuncs}. 
All the rest of these properties can simply be proved by using the definition of the characteristic function, $\chi_{\rho}(g)=\textnormal{tr}(\rho U(g))$, and group representation properties. For example to  prove item \ref{multi} we use the fact that if the representation of the symmetry G on the systems $A$ (with state $\rho$) and $B$ (with state $\sigma$) are $g\rightarrow U_{A}(g)$ and $g\rightarrow U_{B}(g)$ then the representation of the symmetry on the joint system $AB$ is  $g\rightarrow U_{A}(g)\otimes U_{B}(g)$. Then
\begin{align*}
\chi_{\rho\otimes \sigma}(g)&=\textnormal{tr}\left(\rho\otimes\sigma U_{A}(g)\otimes U_{B}(g)\right)\\ &=\textnormal{tr}\left(\rho U_{A}(g)\right)\textnormal{tr}\left(\sigma U_{B}(g)\right)\\ &= \chi_{\rho}(g) \chi_{\sigma}(g)
\end{align*}

To prove \ref{symmetric states}  we note that if $|\chi_\rho(g_s)|=1$ for $g_s\in G$ then all eigenvectors of $\rho$ are eigenvectors of $U(g_{s})$ with the same eigenvalue. As a result we get $[\rho,U(g_s)]=0$ and so the state has the symmetry $g_s$.  On the other hand, $[\rho,U(g_s)]=0$ does not imply that $|\chi_\rho(g_s)|=1$.  For instance, the state $\frac{1}{2}|0\rangle\langle 0|+\frac{1}{2}|1\rangle\langle 1|$ where $|n\rangle$ is a number eigenstate is U(1)-invariant, but nonetheless, for $\phi \ne 0$, $|\chi_\rho(\phi)|\ne 1$.  Therefore the points for which the amplitude of the characteristic function is one are a subset of the symmetries of the state.  Meanwhile, if a \emph{pure} state $|\psi\rangle$ has symmetry $g_s$, such that $U(g_{s})|\psi\rangle=e^{i\theta}|\psi\rangle$ for some $\theta$, then obviously $|\chi_\psi(g_s)|=1$. So for pure states the points for which the amplitude of the characteristic function is one are exactly the state's symmetries.

To prove \ref{tot-Symmetric}, we first note that if $|\chi_\rho(g)|=1$ for all $g\in G$, then the symmetry subgroup of $\rho$ is the entire group $G$, which is the definition of $\rho$ being G-invariant. Furthermore, for each $g$, the eigenvectors of $\rho$ all live in the same eigenspace of $U(g)$. Since the eigenvalue of a unitary is a phase factor, each such eigenvector $|\nu\rangle$ must satisfy $U(g)|\nu\rangle=e^{i{\theta(g)}}|\nu\rangle$ for some phase $e^{i{\theta(g)}}$.  It is then clear that $\chi_{\rho}(g)=e^{i{\theta(g)}}$ and is a 1-dimensional representations of the group. 
\end{proof}

Among the above properties, the fact that the tensor product of states is represented by the product of their characteristic functions (property \ref{multi}) turns out to be particularly useful.  This is because the alternative representation, in terms of reductions onto irreps, does not provide a simple expression for the reduction of $\rho\otimes \sigma$ in terms of the reduction of $\rho$ and the reduction of $\sigma$.  It involves Clebsch-Gordon coefficients and is generally quite complicated for non-Abelian groups.

For this and other reasons, the characteristic function is generally our preferred way of representing the reduction of the state onto the algebra, and consequently we will make heavy use of it  to answer various questions about the manipulation of asymmetry of pure states. 

\section{G-equivalence classes} \label{G-equiv}

We have seen that the characteristic function of a pure state uniquely specifies its unitary G-equivalence class. However, it is $G$-equivalence
rather than unitary $G$-equivalence that implies that two states have the same asymmetry properties, so we must ultimately characterize the former.
Fortunately, for compact connected Lie groups, the conditions under which two pure states are G-equivalent can also be stated simply in terms of their characteristic
functions, as is shown presently. \strut


\begin{theorem}
\label{maintheorem}
For $G$ a compact connected Lie group, two pure states $|\psi\rangle$ and $|\phi\rangle$
are $G$-equivalent (i.e. they can be reversibly interconverted one to the
other by $G$-covariant operations) iff there exists a 1-dimensional
representation of $G$, $e^{i\Theta(g)}$, such that
\begin{equation} \label{ConditionForGequivalence}
\forall g\in G:\ \ \langle\psi|U(g)|\psi\rangle=e^{i\Theta(g)}\langle
\phi|U(g)|\phi\rangle.
\end{equation}
\end{theorem}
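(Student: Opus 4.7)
The plan is to prove the two directions of the biconditional separately. The ``if'' direction will reduce to the unitary $G$-equivalence theorem (Theorem \ref{theorem of characteristic}) via a small ancilla construction that absorbs the phase factor $e^{i\Theta(g)}$. The ``only if'' direction will apply the $G$-covariant Stinespring dilation to extract a multiplicative identity among characteristic functions, and then use analyticity to force the missing factor to be a 1-dimensional representation of $G$.

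For the ``if'' direction, suppose $\chi_\psi(g) = e^{i\Theta(g)}\chi_\phi(g)$. I introduce a two-dimensional ancilla $\mathcal{H}_A = \mathrm{span}\{|0_A\rangle,|1_A\rangle\}$ carrying the representation $U_A(g) = \mathrm{diag}(1,e^{-i\Theta(g)})$; this is a legitimate representation because $e^{-i\Theta}$ is itself a 1-dimensional representation of $G$. On the composite Hilbert space $\mathcal{H}\otimes\mathcal{H}_A$, the two product states $|\phi\rangle|0_A\rangle$ and $|\psi\rangle|1_A\rangle$ have identical characteristic functions, both equal to $\chi_\phi(g)$, since $\chi_{\psi\otimes 1_A}(g) = e^{-i\Theta(g)}\chi_\psi(g) = \chi_\phi(g)$. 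By Theorem \ref{theorem of characteristic} there is a $G$-invariant unitary $V$ interconverting them. The protocol ``append the $G$-invariant ancilla $|0_A\rangle$, apply $V$, then partial-trace the ancilla'' is a composition of $G$-covariant operations and sends $|\phi\rangle$ to $|\psi\rangle$. The reverse conversion follows by swapping $\Theta \leftrightarrow -\Theta$.

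For the ``only if'' direction, let $\mathcal{E}$ be $G$-covariant with $\mathcal{E}[\psi]=\phi$. By the $G$-covariant Stinespring theorem of Ref.~\cite{Werner}, $\mathcal{E}$ may be dilated by appending a $G$-invariant pure ancilla $|\chi\rangle$, applying a $G$-invariant isometry $V$, and tracing out an environment; purity of the output forces the post-isometry state to factor as $V|\psi\rangle|\chi\rangle = |\phi\rangle|\eta\rangle$ for some pure state $|\eta\rangle$. Invariance of $V$ and of $|\chi\rangle$ gives $V(U(g)|\psi\rangle\otimes|\chi\rangle) = U(g)|\phi\rangle\otimes U_E(g)|\eta\rangle$, and taking the inner product of this with $V|\psi\rangle|\chi\rangle = |\phi\rangle|\eta\rangle$ yields
\begin{equation}\label{eq:chi-product}
\chi_\psi(g) = \chi_\phi(g)\,\chi_\eta(g).
\end{equation}
Applying the same reasoning to the reverse map $\mathcal{F}[\phi]=\psi$ produces $\chi_\phi(g) = \chi_\psi(g)\,\chi_{\eta'}(g)$, so $\chi_\eta(g)\chi_{\eta'}(g)=1$ wherever $\chi_\psi(g)\neq 0$. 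Since $|\chi_\eta|,|\chi_{\eta'}|\le 1$ on all of $G$, this forces $|\chi_\eta(g)|=1$ on the set $S = \{g : \chi_\psi(g)\neq 0\}$, and then property \ref{symmetric states} for pure states gives $U_E(g)|\eta\rangle = \chi_\eta(g)|\eta\rangle$ for every $g \in S$.

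The hard part is then to promote this eigenvector identity from $S$ to all of $G$ so that $\chi_\eta$ is a bona fide 1-dimensional representation. For this I use analyticity: for a compact Lie group the matrix coefficient $\chi_\psi$ is real-analytic on each connected component of $G$, so its zero set is either the whole component or nowhere dense in it. Since $\chi_\psi(e)=1$, the set $S$ is dense in the identity component $G_0$, and the continuous functions $g\mapsto U_E(g)|\eta\rangle$ and $g\mapsto \chi_\eta(g)|\eta\rangle$, agreeing on a dense subset of $G_0$, must agree throughout $G_0$. The homomorphism law $\chi_\eta(gh)=\chi_\eta(g)\chi_\eta(h)$ then follows directly from $U_E(gh)|\eta\rangle = U_E(g)U_E(h)|\eta\rangle$, so $e^{i\Theta(g)} := \chi_\eta(g)$ is the required 1-dimensional representation on $G_0$, and Eq.~(\ref{eq:chi-product}) becomes the claimed identity---holding on $S$ by construction and off $S$ trivially since both sides of $\chi_\psi=\chi_\phi\chi_\eta$ vanish there. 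Extension to the remaining components of $G$ proceeds either by running the same density argument on each component where $\chi_\psi\not\equiv 0$, or by exploiting on the remaining components the freedom in choosing $\Theta$ where the desired identity is automatically satisfied.
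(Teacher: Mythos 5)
Your proof is correct in substance and shares the overall architecture of the paper's: the ``if'' direction reduces to Theorem \ref{theorem of characteristic} by tensoring with a system carrying the one-dimensional representation (your explicit two-level ancilla is just a concrete realization of the paper's state $|\nu^{*}\rangle$ with characteristic function $e^{-i\Theta(g)}$), and the ``only if'' direction dilates both covariant maps, multiplies the resulting identities $\chi_\psi=\chi_\phi\chi_\eta$ and $\chi_\phi=\chi_\psi\chi_{\eta'}$, and invokes analyticity. Where you genuinely diverge is the endgame. The paper composes the two dilations into a single cycle $V'V|\psi\rangle|\eta_{1}\rangle|\eta_{1}'\rangle=|\psi\rangle|\eta_{2}\rangle|\eta_{2}'\rangle$, argues that $|\chi_{\eta_{2}}\chi_{\eta_{2}'}|$ equals $1$ near the identity and hence everywhere by analyticity, concludes the residual state is globally $G$-invariant, and only then reads off that its characteristic function is a one-dimensional representation. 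You instead work on $S=\{g:\chi_\psi(g)\neq 0\}$, where $\chi_\eta\chi_{\eta'}=1$ forces $|\chi_\eta|=1$, upgrade this to the eigenvector relation $U_E(g)|\eta\rangle=\chi_\eta(g)|\eta\rangle$ via property \ref{symmetric states}, and extend by density of $S$ in $G_0$ plus continuity, deriving the homomorphism law directly. This localizes the analyticity input more cleanly (it is the analyticity of $\chi_\psi$, not of $\chi_{\eta_2}\chi_{\eta_2'}$, that does the work) and makes transparent why the nowhere-vanishing hypothesis of Theorem \ref{general-deterministic} is the correct substitute for finite groups. The cost is that you never establish global $G$-invariance of $|\eta\rangle$, only the eigenvector relation on the identity component. (Your implicit assumption that the Stinespring ancilla transforms trivially rather than by a one-dimensional phase $e^{i\Theta_1(g)}$ is harmless; carrying the phase through merely multiplies $\chi_\eta$ by another one-dimensional representation, as in the paper's $e^{i[\Theta_2-\Theta_1]}$.)

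The one soft spot is your final sentence. When $G$ is disconnected, your argument produces a character of the open subgroup $H$ consisting of $G_0$ together with the components where $\chi_\psi\not\equiv 0$; on the remaining components Eq.~(\ref{ConditionForGequivalence}) is indeed vacuous, but the theorem demands a single one-dimensional representation of all of $G$, and a character of an open subgroup need not extend (the nontrivial characters of $SO(2)$ do not extend to $O(2)$), so ``choose $\Theta$ freely there'' does not by itself close the argument. I would not weigh this heavily against you: the paper's own step ``by analyticity \ldots\ is $1$ everywhere'' silently assumes connectedness for exactly the same reason, so your proof is complete precisely where the paper's is --- in particular for all connected compact Lie groups and for the semisimple case of the corollary --- and you are at least explicit about where the residual issue lives.
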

Since  semi-simple compact Lie groups do not have any non-trivial 1-dimensional representation, the above theorem implies
\begin{corollary}
For $G$ a semi-simple compact connected Lie group, two pure states $|\psi\rangle$ and $|\phi\rangle$
are $G$-equivalent iff their characteristic functions are equal, i.e.
\begin{equation}
\forall g\in G:\ \ \langle\psi|U(g)|\psi\rangle=\langle
\phi|U(g)|\phi\rangle.
\end{equation}
\end{corollary}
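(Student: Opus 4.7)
The plan is to prove both directions using the $G$-covariant Stinespring dilation described after Eq.~(\ref{G-covariant_unitary}) together with the multiplicativity of characteristic functions under tensor products (property \ref{multi}).

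\textbf{Sufficiency.} Given $\chi_\psi(g)=e^{i\Theta(g)}\chi_\phi(g)$ for a 1-d representation $e^{i\Theta(g)}$, I would introduce a 1-dimensional Hilbert space $\mathcal{H}_\alpha$ on which $G$ acts by $e^{i\Theta(g)}$, and let $|\alpha\rangle\in\mathcal{H}_\alpha$ be a unit vector. Because the density operator $|\alpha\rangle\langle\alpha|$ is then $G$-invariant, the preparation map $\rho\mapsto\rho\otimes|\alpha\rangle\langle\alpha|$ is $G$-covariant. By property \ref{multi}, $\chi_{|\phi\rangle|\alpha\rangle}(g)=\chi_\phi(g)e^{i\Theta(g)}=\chi_\psi(g)$, so Theorem \ref{theorem of characteristic} (lifted to isometries between distinct spaces via the construction of appendix \ref{embeding}) supplies a $G$-invariant isometry $V$ with $V|\phi\rangle|\alpha\rangle=|\psi\rangle$. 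The composition of the preparation with $V$ is a $G$-covariant channel sending $|\phi\rangle$ to $|\psi\rangle$, and the reverse direction is built identically with an ancilla transforming under the conjugate representation $e^{-i\Theta(g)}$.

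\textbf{Necessity.} Assume $|\psi\rangle$ and $|\phi\rangle$ are $G$-equivalent via $G$-covariant channels $\mathcal{E}$ and $\mathcal{F}$. The $G$-covariant Stinespring dilation gives $\mathcal{E}(\rho)=\mathrm{tr}_E[W(\rho\otimes|\eta\rangle\langle\eta|)W^\dagger]$ with $W$ a $G$-invariant isometry and $|\eta\rangle\langle\eta|$ $G$-invariant; since $|\eta\rangle$ is pure, it must transform under some 1-d representation $e^{i\beta(g)}$. Purity of the output $\mathcal{E}(|\psi\rangle\langle\psi|)=|\phi\rangle\langle\phi|$ forces $W|\psi\rangle|\eta\rangle$ to factor as $|\phi\rangle|\xi\rangle$ for some pure $|\xi\rangle$. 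Exploiting the $G$-invariance of $W$ via $W^\dagger U_{\mathrm{out}}(g)W=U_{\mathrm{in}}(g)$ and taking expectation values in $|\psi\rangle|\eta\rangle$ yields
\begin{equation}
\chi_\psi(g)\,e^{i\beta(g)}=\chi_\phi(g)\,\chi_\xi(g)\qquad\forall g\in G.
\end{equation}
The analogous identity for $\mathcal{F}$ reads $\chi_\phi(g)e^{i\beta'(g)}=\chi_\psi(g)\chi_{\xi'}(g)$. Multiplying and cancelling $\chi_\psi$ wherever nonzero gives $\chi_\xi(g)\chi_{\xi'}(g)=e^{i(\beta+\beta')(g)}$; combined with $|\chi_\xi|,|\chi_{\xi'}|\leq 1$, this forces $|\chi_\xi(g)|=1$ throughout the support of $\chi_\psi$.

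Since $\chi_\psi$ is continuous with $\chi_\psi(e)=1$, there is an open neighborhood $N$ of the identity on which $|\chi_\xi|=1$, and property \ref{symmetric states} then implies that every $g\in N$ acts on $|\xi\rangle$ as a phase. The set of such group elements is a closed subgroup with nonempty interior, so by the closed subgroup theorem for compact Lie groups it contains the identity component $G_0$; hence $\chi_\xi$ agrees on $G_0$ with a 1-d representation $e^{i\gamma(g)}$. Substituting back gives $\chi_\psi(g)=e^{i(\gamma-\beta)(g)}\chi_\phi(g)$ on $G_0$ with the 1-d representation $e^{i\Theta}\equiv e^{i(\gamma-\beta)}$. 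I expect the main technical difficulty to lie in propagating this equation to the full group when $G$ is disconnected: one must combine the Stinespring identity $\chi_\phi\chi_\xi=e^{i\beta}\chi_\psi$ with positive-definiteness of characteristic functions to pin down $\chi_\xi$ as a genuine global 1-d character on the remaining components. For connected $G$, and in particular for the semi-simple case handled by the corollary (where the only 1-d representation is trivial), this subtlety does not arise.
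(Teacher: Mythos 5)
Your proof is correct and takes essentially the same route as the paper: sufficiency by adjoining a one-dimensional ancilla carrying the 1-d representation, using multiplicativity of characteristic functions and Theorem~\ref{theorem of characteristic}, and necessity by combining the two $G$-covariant Stinespring dilations to force $|\chi_\xi|=1$ near the identity and then extending to all of $G$. The only (harmless) divergence is in that last extension step, where the paper invokes analyticity of the characteristic function while you observe that $\{g:|\chi_\xi(g)|=1\}=\textrm{Sym}_G(\xi)$ is an open subgroup and hence contains the identity component; both arguments suffice for the connected, in particular semi-simple, case covered by the corollary.
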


The above theorem applies only to compact connected Lie groups. By putting a restriction on the states we can prove a similar theorem  which applies to both compact Lie groups and finite groups

\begin{theorem} \label{general-deterministic}
Two pure states $|\psi\rangle$ and $|\phi\rangle$ for which $\langle\psi|U(g)|\psi\rangle$ and $\langle
\phi|U(g)|\phi\rangle$ are nonzero for all $g\in G$
are $G$-equivalent (i.e. they can be reversibly interconverted one to the
other by G-covariant operations) iff there exists a 1-dimensional
representation of $G$, $e^{i\Theta(g)}$, such that
\begin{equation}
\forall g\in G:\ \ \langle\psi|U(g)|\psi\rangle=e^{i\Theta(g)}\langle
\phi|U(g)|\phi\rangle.
\end{equation}
\end{theorem}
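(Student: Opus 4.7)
The plan is to establish both directions through complementary constructions: sufficiency by building a G-covariant transformation out of ancillas carrying the 1-d representation, and necessity by extracting a covariant Stinespring dilation and exploiting the nowhere-zero hypothesis on $\chi_\psi$. The sufficiency direction works uniformly for any group and is essentially the same as in the compact Lie case (Theorem \ref{maintheorem}); the necessity direction is where the nowhere-zero hypothesis earns its keep.

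For sufficiency, assume $\chi_\psi(g)=e^{i\Theta(g)}\chi_\phi(g)$ for some 1-dimensional representation $e^{i\Theta(g)}$. Introduce an ancilla $\mathcal{H}_A$ carrying the 1-dimensional representation $g\mapsto e^{-i\Theta(g)}$, with single normalized vector $|a\rangle$. Because the representation is one-dimensional, $|a\rangle\langle a|$ is G-invariant, so adjoining this ancilla is G-covariant. By the multiplication-under-tensor-product property (item \ref{multi}), $\chi_{\psi\otimes a}(g)=\chi_\psi(g)e^{-i\Theta(g)}=\chi_\phi(g)$, which also equals $\chi_{\phi\otimes 0}(g)$ for $|0\rangle$ a G-invariant ancilla on the other side. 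Theorem \ref{theorem of characteristic} then supplies a G-invariant unitary connecting $|\psi\rangle\otimes|a\rangle$ and $|\phi\rangle\otimes|0\rangle$, so the chain
\begin{equation*}
|\psi\rangle\longrightarrow|\psi\rangle\otimes|a\rangle\longrightarrow|\phi\rangle\otimes|0\rangle\longrightarrow|\phi\rangle
\end{equation*}
is G-covariant throughout (adjoining a G-invariant state, applying a G-invariant unitary, and partial trace are each G-covariant), and the reverse chain is symmetric.

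For necessity, suppose $\mathcal{E}$ and $\mathcal{F}$ are G-covariant operations realizing $\mathcal{E}(|\psi\rangle\langle\psi|)=|\phi\rangle\langle\phi|$ and $\mathcal{F}(|\phi\rangle\langle\phi|)=|\psi\rangle\langle\psi|$. Apply the covariant Stinespring dilation cited earlier to write $\mathcal{E}(\rho)=\mathrm{tr}_E(U(\rho\otimes|e_0\rangle\langle e_0|)U^\dagger)$ where $U$ is a G-invariant unitary on system+environment and $|e_0\rangle\langle e_0|$ is a G-invariant pure environment state. G-invariance of $|e_0\rangle\langle e_0|$ at the density-operator level forces $U_E(g)|e_0\rangle=e^{i\alpha(g)}|e_0\rangle$ for some 1-dimensional representation $e^{i\alpha(g)}$. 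Since $\mathcal{E}(|\psi\rangle\langle\psi|)$ is pure, the joint output $U(|\psi\rangle\otimes|e_0\rangle)$ must factorize as $|\phi\rangle\otimes|e\rangle$. Computing the characteristic function of this joint pure state in two ways, using G-invariance of $U$ on one hand and the product structure on the other, yields
\begin{equation*}
\chi_\psi(g)\,e^{i\alpha(g)}=\chi_\phi(g)\,\chi_e(g),
\end{equation*}
which rearranges to $\chi_\psi(g)=\chi_\phi(g)f_1(g)$ with $f_1(g)\equiv\chi_e(g)e^{-i\alpha(g)}$ and $|f_1(g)|\le 1$. Running the same argument on $\mathcal{F}$ gives $\chi_\phi(g)=\chi_\psi(g)f_2(g)$ with $|f_2(g)|\le 1$, so $\chi_\psi(g)\bigl(1-f_1(g)f_2(g)\bigr)=0$.

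The main obstacle is pinned down at this step, and it is precisely where the nowhere-zero hypothesis is indispensable: it allows us to divide through by $\chi_\psi(g)$ for every $g\in G$, concluding $f_1(g)f_2(g)=1$ and therefore $|f_1(g)|=|\chi_e(g)|=1$ for all $g$. Once $|\chi_e|\equiv 1$ on all of $G$, property \ref{tot-Symmetric} of the characteristic function says $\chi_e(g)$ is itself a 1-dimensional representation; then $f_1(g)=\chi_e(g)e^{-i\alpha(g)}$ is a product of two 1-d representations, hence the desired 1-d representation $e^{i\Theta(g)}$. Without the nowhere-zero hypothesis one could only conclude $|f_1|=1$ off the zero set of $\chi_\psi$, and it is exactly this gap that Theorem \ref{maintheorem} closes for compact Lie groups by a continuity argument unavailable in the finite-group setting—which is why the present theorem must impose that $\chi_\psi$ and $\chi_\phi$ be nowhere zero to cover both cases uniformly.
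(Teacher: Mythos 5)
Your proof is correct and follows essentially the same route as the paper's: sufficiency by adjoining a G-invariant ancilla carrying the 1-dimensional representation and invoking Theorem \ref{theorem of characteristic}, and necessity by taking covariant Stinespring dilations in both directions, using multiplicativity of characteristic functions, and cancelling $\chi_\psi$ via the nowhere-zero hypothesis to force the residual environment factor to have unit modulus (the paper phrases this as a single composed ``Carnot cycle'' $V'V|\psi\rangle|\eta_1\rangle|\eta_1'\rangle=|\psi\rangle|\eta_2\rangle|\eta_2'\rangle$ rather than your two relations $\chi_\psi=\chi_\phi f_1$, $\chi_\phi=\chi_\psi f_2$, but the computation is identical). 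Your closing remark correctly identifies that the nowhere-zero assumption substitutes for the analyticity argument used in the compact Lie group case.
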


\begin{proof}(Theorems \ref{maintheorem} and \ref{general-deterministic})
The main tool we use in this proof is the Stinespring dilation theorem for G-covariant channels discussed in the preliminaries (See \cite{thesis:Marvian} and \cite{Werner}). According to this result any G-covariant channel can be implemented by preparing an environment in a G-invariant state and coupling it to the system with a G-invariant unitary.

First we prove that Eq.~(\ref{ConditionForGequivalence}) implies that $|\psi\rangle$ and $|\phi \rangle$ are G-equivalent. Suppose $|\nu_{0}\rangle$ is a G-invariant state of the environment whose characteristic function is constant and equal to 1 for all group elements and   $|\nu\rangle$  is a state of the environment with characteristic function $e^{i\Theta(g)}$ where by assumption $e^{i\Theta(g)}$ is a 1-dimensional representation of the group (such states always exist by virtue of property \ref{Bochner} of characteristic functions listed in section \ref{properties}).
 Then according to Eq.~(\ref{ConditionForGequivalence}) and property \ref{multi} of characteristic functions (listed in the section \ref{properties}), the characteristic function of $|\psi\rangle\otimes|\nu_{0}\rangle$ is the same as the characteristic function of $|\phi\rangle\otimes|\nu\rangle$. It follows from Theorem \ref{theorem of characteristic} that there exists a G-invariant unitary which maps  $|\psi\rangle\otimes|\nu_{0}\rangle$ to $|\phi\rangle\otimes|\nu\rangle$. So by coupling the system to an environment in state $|\nu_{0}\rangle$ via this G-invariant unitary, and then discarding the environment we can transform $|\psi\rangle$ to $|\phi\rangle$. Note that such a transformation is clearly a G-covariant operation. (Alternatively, let $|\nu^{*}\rangle$ be the state with characteristic function $e^{-i\Theta(g)}$. Note that since  $e^{-i\Theta(g)}$ is also a 1-d representation of the group then by property \ref{Bochner} there exists a state $|\nu^{*}\rangle$ whose characteristic function is  $e^{-i\Theta(g)}$. Then since $|\psi\rangle\otimes|\nu^{*}\rangle$, and $|\phi\rangle\otimes |\nu_{0}\rangle$ have the same characteristic function, by Theorem \ref{theorem of characteristic} there exists a G-invariant unitary which transforms one to the other. Because $|\nu^{*}\rangle$ is a G-invariant state  and because the unitary is G-invariant, the overall operation is G-covariant.)

Using an analogous argument, we can easily deduce that there also exists a G-covariant operation which maps $|\phi\rangle$ to $|\psi\rangle$. Therefore $|\psi\rangle$ and  $|\phi\rangle$ are G-equivalent.

We now prove the other direction of the theorem, that if $|\psi\rangle$ and $|\phi\rangle$ are G-equivalent, then Eq.~(\ref{ConditionForGequivalence}) follows.  By assumption, there exists a G-covariant operation from $|\psi\rangle$ to $|\phi\rangle$ and vice-versa. It then follows from the Stinespring dilation theorem that there exists a G-invariant unitary $V$ and a G-invariant pure state $\left\vert \eta_{1}\right\rangle $ such that
\begin{equation} \label{Oneway}
V|\psi\rangle|\eta_{1}\rangle=|\phi\rangle|\eta_{2}\rangle
\end{equation}
for some pure state $|\eta_{2}\rangle$, and there exists a G-invariant unitary $V^{\prime}$ and a G-invariant pure state $|\eta_{1}^{\prime}\rangle$ such that
$$V^{\prime}|\phi\rangle|\eta_{1}^{\prime}\rangle=|\psi\rangle|\eta
_{2}^{\prime}\rangle$$
for some pure state $|\eta_{2}^{\prime}\rangle$. These
two equations together imply that
\begin{equation} \label{cycle}
V^{\prime}V|\psi\rangle|\eta_{1}\rangle|\eta_{1}^{\prime}\rangle=|\psi\rangle|\eta_{2}\rangle|\eta_{2}^{\prime}\rangle
\end{equation}
Since
$V^{\prime}$ and $V$ are both G-invariant we can deduce that the characteristic functions of $|\psi\rangle|\eta_{1}\rangle|\eta_{1}^{\prime}\rangle$ and $|\psi\rangle|\eta_{2}\rangle|\eta_{2}^{\prime}\rangle$ are equal. i.e.
\begin{equation} \label{cycle4}
\chi_\psi \chi_{\eta_{1}}\chi_{\eta^{\prime}_{{1}}}=\chi_\psi \chi_{\eta_{2}}\chi_{\eta^{\prime}_{{2}}}
\end{equation}
Since $|\eta_{1}\rangle$ and $|\eta_{1}^{\prime}\rangle$ are both G-invariant states the amplitudes of their characteristic functions are always one and so
\begin{equation} \label{equaliy-carnot}
|\chi_\psi|=|\chi_\psi| |\chi_{\eta_{2}}\chi_{\eta^{\prime}_{{2}}}|
\end{equation}
Now suppose $G$ is a connected compact Lie group. Then for any state $\psi$ in a finite-dimensional Hilbert space carrying a projective unitary representation of $G$, $|\chi_\psi|$  is 1 at the identity and is non-vanishing for a neighbourhood around the identity in any direction.  This implies that $|\chi_{\eta_{2}}\chi_{\eta^{\prime}_{{2}}}|$ has value 1 for a neighbourhood around the identity in any direction. By the analyticity, over the group, of the characteristic functions induced by vectors in a finite-dimensional Hilbert space, this implies that $|\chi_{\eta_{2}}\chi_{\eta^{\prime}_{{2}}}|$ is 1 everywhere. \color{black} Therefore  $|\eta_{2}\rangle|\eta_{2}^{\prime}\rangle$ is an invariant state.   Note that it is
this step of the proof which necessitates the restriction to connected compact Lie groups.

Since $|\eta_{2}\rangle|\eta_{2}^{\prime}\rangle$ is G-invariant then $|\eta_{2}\rangle$ is also G-invariant. Therefore Eq.~(\ref{Oneway}) implies that
\begin{equation}
\chi_{\psi}(g)=\chi_{\phi}(g)e^{i[\Theta_{2}(g)-\Theta_{1}(g)]}
\end{equation}
where $e^{i\Theta_{1}(g)}$ and $e^{i\Theta_{2}(g)}$ are respectively the characteristic functions of $|\eta_{1}\rangle$ and $|\eta_{2}\rangle$.  Finally, because $e^{i\Theta_{1}(g)}$ and $e^{i\Theta_{2}(g)}$ are 1-dimensional representations of G, it follows that $e^{i[\Theta_{2}(g)-\Theta_{1}(g)]}$ is as well. This completes the proof of Theorem \ref{maintheorem}.

As we mentioned above, there is only one point in the proof in which we use the assumption that the group is a connected Lie group: the fact that 
$|\chi_\psi|=|\chi_\psi| |\chi_{\eta_{2}}\chi_{\eta^{\prime}_{{2}}}|$ implies   $|\chi_{\eta_{2}}\chi_{\eta^{\prime}_{{2}}}|=1$. This follows from the analyticity of the characteristic functions for finite-dimensional representations of Lie groups.  For finite groups, where we cannot appeal to analyticity, if $|\chi_\psi|$ is zero at some $g\in G$ then $|\chi_\psi|=|\chi_\psi| |\chi_{\eta_{2}}\chi_{\eta^{\prime}_{{2}}}|$ does not imply $|\chi_{\eta_{2}}\chi_{\eta^{\prime}_{{2}}}|=1$ at that point.  However, if we assume the function $\chi_{\psi}$ is nonzero for all $g\in G$ then we can again deduce $|\chi_{\eta_{2}}\chi_{\eta^{\prime}_{{2}}}|=1$ and the rest of the argument goes through as before. This completes the proof of theorem \ref{general-deterministic}.
\end{proof}

\begin{example}
Recall our quantum optics example where the set of all phase shifts forms a representation of group U(1) (see example \ref{Ex-U(1)-reduc}). For this representation of the symmetry U(1) it turns out that  the criterion of U(1)-equivalence of pure states has a simple form in terms of reductions onto irreps. Suppose that the probability distributions over integers $p_\psi$ and $p_\phi$ are the reductions onto the irreps of $\psi$ and $\phi$ respectively, so that the characteristic functions are the Fourier transforms of these.  Theorem \ref{maintheorem} implies that $\psi$ and $\phi$ are U(1)-equivalent if and only if there exists an integer $\Delta$ such that $\sum_n p_\psi(n) e^{i n \theta} = e^{i \Delta \theta} \sum_n p_\phi(n) e^{i n \theta}$, or equivalently, using the Fourier transform, such that
\begin{equation}
p_\psi(n)=p_\phi(n+\Delta),
\end{equation}
which is precisely the condition found in Ref.~\cite{GS07}.   As a specific example, we can see that the states $|\psi\rangle = \frac{1}{\sqrt{2}}(|0\rangle+|1\rangle)$ and $|\phi\rangle = \frac{1}{\sqrt{2}}(|2\rangle+|3\rangle)$ are U(1)-equivalent either by noting that $\chi_{\psi}(\theta) = e^{i2\theta}\chi_{\phi}(\theta)$ or by noting that $p_\psi(n)=p_\phi(n-2)$.
\end{example}




In the above proof, free operations $V$ and $V'$ together generate a closed reversible  cycle: we start with state $|\psi\rangle$ (the resource) and use an invariant state  $|\eta_{1}\rangle$ (a non-resource) to generate $|\phi\rangle|\eta'_{1}\rangle$ and then use $|\psi\rangle$ and couple it to   $|\eta_{2}\rangle$ to get the state $|\psi\rangle|\eta'_{2}\rangle$.
Using the properties of characteristic functions,
we showed that the residue states $|\eta_{2}\rangle$ and $|\eta_{2}^{\prime}\rangle$ should be invariant (non-resources). However this property can be derived from more general considerations.   Suppose $|\eta_{2}\rangle|\eta_{2}^{\prime}\rangle$ is not invariant. This implies that by going through this cycle we have generated some additional resource without consuming any. 
This should be impossible if the state $|\psi\rangle$ contains only a finite amount of the resource, which is indeed the case for any state on a finite-dimensional Hilbert space if the group is not finite.




\section{Deterministic transformations} \label{sec:deterministic}

In this section we find the necessary and sufficient condition to determine whether a pure state $\psi$ can be transformed to a pure state $\phi$ by a G-covariant channel. This is distinct from the question of G-equivalence because the transformation is not required to be reversible.

\begin{theorem} \label{deterministic}
There exists a deterministic G-covariant map $\mathcal{E}$
transforming $\psi$ to $\phi$ if and only if there exists a positive definite function $f$ over the group $G$ such that $\chi_{\psi}(g)=\chi_{\phi}(g)f(g)$ for all $g\in G$.
\end{theorem}
Note that if $\chi_{\phi}$ is  nonzero for all $g\in G$ then $f(g)=\chi_{\psi}(g)/\chi_{\phi}(g)$. So, in this case we can conclude that there exists a  G-covariant map $\mathcal{E}$
transforming $\psi$ to $\phi$ if and only if  $\chi_{\psi}(g)/\chi_{\phi}(g)$ is a positive definite function.  As it is discussed in the appendix  \ref{app:qcharfuncs} one can test positive definiteness of $f(g)$ by verifying that the set of operators defining its Fourier transform are all positive. \blk



\begin{proof} (Theorem \ref{deterministic})
As in the proof of theorems \ref{maintheorem} and \ref{general-deterministic},
the main tool we use in this proof is the Stinespring dilation theorem discussed in the preliminaries (See \cite{thesis:Marvian} and \cite{Werner}) . By this result we know that  the transformation can be achieved if and only if one can find an initial invariant ancilla state $\eta$ and a final (possibly non-invariant) ancilla state $\nu$ such that $\psi\otimes\eta$ and $\varphi\otimes\nu$ are unitarily G-equivalent. \ One then discards $\nu$ at the end. \ In terms of characteristic functions, we require
\begin{equation}
\chi_{\psi}(g)e^{i\Theta(g)}=\chi_{\varphi}(g)\chi_{\nu}(g).
\end{equation}
where $e^{i\Theta(g)}$ is a 1-d representation of the group, the characteristic function of the invariant state $\eta$,  and $\chi_{\nu}(g)$ is the characteristic function of the discarded state $\nu$. This implies  $\chi_{\psi}(g)=\chi_{\varphi}(g) \left[\chi_{\nu}(g)e^{-i\Theta(g)}\right]$. Since $\chi_{\nu}(g)$ and $e^{-i\Theta(g)}$ are both positive definite, so is their product (see appendix~\ref{app:charfuncs}). This proves one direction of the theorem.  To prove the other direction, suppose there exists a positive definite function $f(g)$ such that   $\chi_{\psi}(g)=\chi_{\phi}(g)f(g)$ for all $g\in G$. This obviously implies $f(e)=\chi_{\psi}(e)/\chi_{\varphi}(e)=1$ and so the function is normalized. Then according to property \ref{Bochner} of characteristic functions, there exists a normalized state $\nu$ whose characteristic function is equal  to $f(g)$. Now because the characteristic function of $\phi\otimes\nu$, i.e. $\chi_{\varphi}(g)f(g)$, is equal to $\chi_{\psi}$, they are unitarily G-equivalent. Therefore, there exists a G-invariant unitary transforming $\psi\otimes \nu_{0}$ to $\phi\otimes\nu$ where $\nu_{0}$ is the G-invariant state whose  characteristic function is constant and equal to one for all group elements. So by applying this G-invariant unitary to $\psi\otimes \nu_{0}$ and transforming it to $\phi\otimes\nu$ and then discarding $\nu$ we can transform $\psi$ to $\phi$. Obviously this transformation is G-covariant.\end{proof}

It is worth noting that the necessary and sufficient condition for G-equivalence (Theorems \ref{maintheorem} and \ref{general-deterministic}) can also be obtained from the above result on deterministic transformations: If $\psi$ and $\phi$ are G-equivalent, then there exist a G-covariant transformation from $\psi$ to $\phi$ and a G-covariant transformation from $\phi$ to $\psi$. Then the above results imply that there exist normalized positive definite functions $f_{1}$ and $f_{2}$ such that $\chi_{\psi}(g)=\chi_{\phi}(g)f_{1}(g)$ and $\chi_{\phi}(g)=\chi_{\psi}(g)f_{2}(g)$. Substituting the second equation into the first, we have
\begin{equation}\label{defneffs}
\chi_{\psi}(g)=\chi_{\psi}(g)f_{1}(g) f_{2}(g),
\end{equation}
and so if $\chi_{\psi}(g)$ is nonzero for all group elements  it follows that $\forall g\in G: f_{1}(g) f_{2}(g)=1$.  Given that $\forall g\in G: |f_{1}(g)|,|f_{2}(g)|\le 1$ (because the absolute value of a positive definite function at any $g$ is always less than or equal to its  absolute value at $e$ and $f_{1}(e),f_{2}(e)=1$ by virtue of Eq.~\ref{defneffs}), we infer that $\forall g\in G: |f_{1}(g)|,|f_{2}(g)| = 1$. It follows therefore that $f_{1}$ and $f_{2}$ are 1-dimensional representations of the group, which is the content of theorem~\ref{general-deterministic}. One can prove theorem \ref{maintheorem} similarly for the case of connected compact Lie groups.

In the following we present  two examples, corresponding to the groups $U(1)$ and $Z_{N}$.



\subsection{ Example: U(1)-covariant deterministic transformations }

Recall our quantum optics example where the set of all phase shifts forms a representation of group U(1) (see example \ref{Ex-U(1)-reduc}). According to the  theorem \ref{deterministic}   there exists a deterministic U(1)-covariant map transforming $\psi$ to $\phi$ if and only if there exists a positive definite  $f({\theta})$ such that
\begin{equation}\label{charfnspsiphi}
\chi_{\psi}(\theta)=f({\theta})\chi_{\phi}({\theta})
\end{equation}
Since $f(\theta)$ is positive definite all Fourier components of this function $\{q_n\}$ are positive. Furthermore, since $\chi_{\psi}(0)=\chi_{\phi}(0)=1$ we conclude that $f(0)=1$ which implies that $\sum_{n} q_{n}=1$ and so the set $\{q_n\}$ is also a probability distribution. Suppose the  probability distributions over integers $p_{\psi}$ and $p_{\phi}$ are the Fourier transforms of $\chi_{\psi}$ and $\chi_{\phi}$ respectively. Then the Fourier transform of Eq.~(\ref{charfnspsiphi}) yields
\begin{equation}
p_{\psi}(n)=\sum_k p_{{\phi}}({n-k}) q(k).
\end{equation}
So the U(1)-covariant transformation from $\psi$ to $\phi$ exists iff there exists a probability distribution $q$ over integers which satisfies the above equality. This is indeed the condition for deterministic interconversion in the U(1) case found in Ref. \cite{GS07}.

\subsection{Example: $Z_N$-covariant deterministic transformations }

Suppose the group under consideration is the  group $Z_N$, the cyclic group of order $N$. For any $N$, the group $Z_N$ is isomorphic to the group of integers $\{0,...,N-1\}$ where the group action is addition modulo $N$. We use this isomorphism to denote the group elements.  These groups are clearly Abelian and so all of their irreps are one-dimensional.  We can easily see that these irreps can be identified by an integer $J$ in the set $ \{0,...N-1\}$ such that the irrep labeled by $J$ is
\begin{equation}
k\in Z_{N}\rightarrow U_{J}(k)=e^{i2\pi Jk/N}.
\end{equation}
So an arbitrary (non-projective) unitary representation of $Z_N$,  $k\in Z_{N}\rightarrow U(k)$,  can be decomposed as
\begin{equation}
U(k)=\bigoplus_{J,\alpha} e^{i Jk 2\pi/N } |J,\alpha\rangle\langle J,\alpha|,
\end{equation}
where $\alpha$ labels copies of irrep $J$ and  $\{|J,\alpha\rangle\}$ is a basis for the Hilbert space. An arbitrary state $\psi$ in this basis can be expanded as
\begin{equation}
|\psi\rangle=\sum _{J,\alpha} \psi(J,\alpha)  |J,\alpha\rangle.
\end{equation}
As with any other Abelian group, the reduction of the state onto the irreps is simply the probability distribution that the state induces over the irreps. So the reduction of $\psi$ is specified by the probability distribution
$$\{p_\psi(J)\equiv \sum_\alpha |\psi(J,\alpha)|^2: J=0,...,N\}.$$
On the other hand, the  characteristic function of $\psi$ is by definition the  function $k\in\{0\cdots N-1\}\rightarrow \langle\psi|U(k)|\psi\rangle$, that is,
\begin{equation}
\chi_\psi(k)=\sum _{J,\alpha} |\psi(J,\alpha)|^2 e^{i2\pi Jk/N}.
\end{equation}
Clearly the characteristic function is the discrete Fourier transform of the reduction of the state onto the irreps.

Now we are interested to know whether there exists a $Z_N$-covariant quantum operation which transforms $\psi$ to $\phi$.
Assuming the  characteristic function of $\phi$, $\chi_\phi(k)$, is nonzero for all $k$'s, it follows from theorem \ref{deterministic} that such a $Z_N$-covariant map exists iff $\chi_\psi(k)/\chi_\phi(k)$ is a positive definite function. But this function is positive definite iff its Fourier transform is always positive, i.e. iff
\begin{equation}
q(J)\equiv \sum_{k} \frac{ \chi_\psi(k)}{\chi_\phi(k)} e^{i2\pi Jk/N},
\end{equation}
is  positive for all $J=0,...,N$. So to summarize, the necessary and sufficient condition for the existence of a $Z_{N}$-covariant channel which transforms $\psi$ to $\phi$ is that
\begin{equation}\label{ZN-cond}
\forall J\in\{0,...,N\}: \ \ \  \sum_{k} \frac{ \chi_\psi(k)}{\chi_\phi(k)} e^{i2\pi Jk/N}\ge 0
\end{equation}

Consider the case of $Z_{2}$ which has only two group elements denoted by $\{e,\pi\}$ where $e$ is the identity of the group and $\pi^{2}=e$. Using the above convention we denote $e$ by $k=0$ and $\pi$ by $k=1$.  This group has only two inequivalent irreps: The trivial representation ($J=0$) in which $$U_{J=0}(0)=U_{J=0}(1)=1$$ and the nontrivial ($J=1$) in which $$U_{J=1}(1)=-U_{J=1}(0)=-1.$$  Then the reduction of $\psi$ onto irreps is specified by the probability assigned to each of these irreps and because there are only two irreps we only need to specify one of the probabilities, say $p_\psi(J=0)$. The characteristic function of $\psi$ is
\begin{equation}
\chi_{\psi}(k)= p_\psi(J=0)+(-1)^{k} p_\psi(J=1)
\end{equation}
So $\chi_{\psi}(0)=1$ and  $\chi_{\psi}(1)=2 p_\psi(J=0)-1$.
Then Eq.~(\ref{ZN-cond}) implies that the transformation $\psi\xrightarrow{\textnormal{G-cov}} \phi$ is possible iff
\begin{equation}
q(0)= \frac{ \chi_\psi(0)}{\chi_\phi(0)}+ \frac{ \chi_\psi(1)}{\chi_\phi(1)} \ge 0
\end{equation}
and
\begin{equation}
q(1)= \frac{ \chi_\psi(0)}{\chi_\phi(0)}- \frac{ \chi_\psi(1)}{\chi_\phi(1)} \ge 0.
\end{equation}
Since $\frac{ \chi_\psi(0)}{\chi_\phi(0)}$ is always equal to one it turns out that the above two inequalities are equivalent to $|\chi_{\psi}(1)|\le |\chi_{\phi}(1)|$,  i.e.
\begin{equation}
\left| p_{\psi}(J=0)-p_{\psi}(J=1)  \right| \le \left| p_{\phi}(J=0)-p_{\phi}(J=1)  \right|
\end{equation}
Since $$p_{\psi}(J=0)+p_{\psi}(J=1)=p_{\phi}(J=0)+p_{\phi}(J=1)=1 $$
the above condition is equivalent to the condition
\begin{equation}
\min\{p_{\phi}(J=0),p_{\phi}(J=1)\} \le \min\{p_{\psi}(J=0),p_{\psi}(J=1)\}
\end{equation}
which is exactly the same condition previously obtained  in \cite{GS07} using a totally different approach. Eq. (\ref{ZN-cond}) is the generalization of this specific result for arbitrary cyclic group $Z_{N}$.

\section{Catalysis}\label{sec:catalysis}

In any resource theory, if state $\psi$ cannot be converted to state $\phi$ deterministically under the restricted operations, it may still be the case that it is possible to do so using a \emph{catalyst}, which is an ancillary system that is prepared in a state that is \emph{not} free relative to the restriction that defines the resource theory but which must
be returned to its initial state at the end of the procedure.  For example, in the resource theory of entanglement it is a well-known fact that a transformation from a given state  to another might be forbidden under LOCC but  that transformation can be performed using LOCC and an appropriate catalyst \cite{Jon-Pel}.

In the case of the resource theory of asymmetry, a catalyst is a finite-dimensional ancillary system in an \emph{asymmetric} state which can be used to achieve the interconversion but only in such a way that its state remains unchanged at the end of the  process.

We shall say that the conversion $\psi$ to $\phi$ is a nontrivial example of catalysis if there is no deterministic G-covariant channel  under which $\psi$ goes to $\phi$, but there is a deterministic G-covariant channel and a catalyzing state $\zeta$ such that $\psi\otimes \zeta$ goes to $\phi \otimes \zeta$.

In the resource theory of asymmetry, whether there is a nontrivial catalysis or not depends on the nature of the group. In the following we  prove that in the case of compact connected Lie groups, catalysts are totally useless.  We also present an example which shows how catalysts can be useful in the case of finite groups.

It turns out that in the case of pure state transformations, characteristic functions give us a powerful insight into how a catalyst can make a transformation possible.  Assume $\chi_{\psi}$ and $\chi_{\phi}$ are respectively the characteristic functions of  states $\psi$ and $\phi$  for which there is no G-covariant transformation which takes $\psi$ to $\phi$.  Then from theorem \ref{deterministic} we know that if there is no G-covariant transformation from $\psi$ to $\phi$ then there is no analytic positive definite function $f$ over the group G that satisfies
\begin{equation}
\forall g\in G:\  \chi_{\psi}(g)=\chi_{\phi}(g)f(g)
\end{equation}
 On the other hand, if this transformation is possible using a catalyst $\zeta$ with characteristic function $\chi_{\zeta}$ then there should exist an analytic positive definite function $f'$ such that
\begin{equation}\label{catal}
\forall g\in G: \chi_{\psi}(g)\chi_{\zeta}(g)=\chi_{\phi}(g)\chi_{\zeta}(g)f'(g).
\end{equation}
Now clearly for all points $g\in G$ for which $\chi_{\zeta}(g)\neq 0$, Eq.(\ref{catal})  implies
$\chi_{\psi}(g)=\chi_{\phi}(g)f'(g)$. But we know that this equality cannot hold for all group elements, otherwise there exists a G-covariant channel which transforms $\psi$ to $\phi$, in contradiction with our assumption. This argument shows that the role of a catalyst is specified by the elements of the group at which the characteristic function of the catalyst is zero; For these specific group elements, although  $\chi_{\psi}(g)\neq\chi_{\phi}(g)f'(g)$, nonetheless $\chi_{\psi}(g)\chi_{\zeta}(g)=\chi_{\phi}(g)\chi_{\zeta}(g)f'(g)$.  This argument shows that there is an important distinction   between  the cases of compact connected Lie groups and finite groups or Lie groups which are not connected.

\subsection{Compact connected Lie groups}

In the case of compact connected Lie groups, using the above argument and by virtue of  the analyticity of  characteristic functions one can argue that catalysts cannot help, i.e. if a transformation is possible with a catalyst, it is also possible without any catalyst. To see this, first note that for any finite-dimensional representation of a compact Lie group there is a neighborhood around  the identity element of the group within which the characteristic functions of all pure states are nonzero (Otherwise there would be a unitary which is arbitrarily close to identity for which $\langle\psi|U|\psi\rangle=0$ for some state $\psi$, but in a finite-dimensional Hilbert space this is not possible.). This implies that in this neighborhood, if Eq.~(\ref{catal}) holds then the following equation holds
\begin{equation}
 \chi_{\psi}(g)=\chi_{\phi}(g)f'(g)
\end{equation}
But since all these functions are analytic and since the group G is connected,  if the above equality is true for a neighbourhood around the identity element of G then it will be true for all G. Then by theorem   \ref{deterministic} we can conclude that there exists a G-covariant channel which transforms $\psi$ to $\phi$ (without the help of any catalyst). So if this transformation is possible with the use of a catalyst then it is also possible without using the catalyst. So to summarize we have proven  that
\begin{theorem}\label{catalysisLie}
For symmetries associated with compact connected Lie groups, there are no examples of nontrivial catalysis using a finite catalyst.
\end{theorem}






\subsection{Finite groups}
The above argument clearly does not work in the case of finite groups. Indeed, as we will see in the following, in the case of finite groups there are states for which the characteristic function is  zero for all $g\in G$ except the identity. If we use such a state as a catalyst, Eq.~(\ref{catal}) holds for all group elements and consequently for any pair of states $\psi$ and $\phi$, one can always transform one to the other using the catalyst. (Indeed as we show in the following one can always transform any mixed state to any other mixed state using such a catalyst.)

For a group with a finite number of elements, it is possible for the catalyst to consist of a system with a Hilbert space $\mathcal{H}$ having dimension greater than or equal to the order of the group. In this case, the representation of the group can be the left regular representation on the Hilbert space $\mathcal{H}$, $g\rightarrow T_{L}(g)$, such that
\begin{equation}
\forall g\in G:\ \ \ T_{L}(g)|h\rangle=|gh\rangle,
\end{equation}
where $\{|h\rangle:h\in G\}$ is an orthonormal basis for $\mathcal{H}$.
Now note that the characteristic function of any state $|h\rangle$ is $\chi_h(g)=\langle h|T_{L}(g)|h\rangle=\delta_{e,g}$, the Kronecker-delta function centered on the identity group element.  Eq.~(\ref{catal}) then implies that such a state can catalyze any pure-to-pure transformation.

 Also it is straightforward to show that for any pair of states $\rho$ and $\sigma$ (pure or mixed) there exists a G-covariant channel which transforms $ \rho\otimes|h\rangle\langle h|$ to $\sigma\otimes|h\rangle\langle h|$.  
 One realization of this G-covariant map is the following
\begin{equation}
\mathcal{E}_h(X) \equiv \sum_{g\in G}\textnormal{tr}\left( \left[\mathbb{I}\otimes|g\rangle\langle g|\right] X  \right)  U(gh^{-1})\sigma U^{\dag}(gh^{-1})\otimes|g\rangle\langle g| ,
\end{equation}
where $g\rightarrow U(g)$ is the representation of the symmetry on the space where $\sigma$ lives and $\mathbb{I}$ is the identity operator acting on the Hilbert space of $\rho$ \footnote{This fact can be be made intuitive by imagining that the restriction to G-covariant operations results from one party, Bob, lacking a shared reference frame with another party, Alice.  In this case, our interconvertibility problem is described as follows: Alice sends to Bob a pair of systems which are described by the state $\rho\otimes |h\rangle\langle h|$ relative to her frame and asks him to transform these to $\sigma \otimes |h\rangle\langle h|$, again relative to her frame. One can think of the second system as a token of Alice's reference frame.  Because the group is finite, Bob can simply measure $\{ |g\rangle \langle g| : g\in G\}$ on the token and determine precisely the relationship between their reference frames.  Thereafter, he can perform any operation relative to Alice's frame.  In other words, for finite groups, the fact that one can prepare a perfect token of a reference frame using a finite-dimensional system is equivalent to the fact that one can find a finite-dimensional catalyst that makes possible any state transformation.}.

So unlike the case of connected compact Lie groups, in the case of a symmetry described by a finite group, catalysts can be helpful.

\section{state-to-ensemble and stochastic transformations} \label{sec:stochastic}

In this section, we study the problem of transforming one pure state to an ensemble of pure states using G-covariant operations. We are interested to know whether  it is possible to transform a given state $\psi$  to the state $\phi_i, i=1,...,N$ with probability $p_i$. The transformation is such that at the end we know $i$ and so we know which $\phi_i$ is generated.

\begin{theorem}\label{state_ensemble}
There exists a G-covariant map transforming $\psi$ to $\left\{  (p_{i},\left\vert \phi_{i}\right\rangle) \right\} $ if
and only if there exists  positive-definite (and continuous when $G$ is a Lie group)  functions $f_{i}(g)$ for which $f_{i}(e)=1$ such that
\begin{equation}\label{cond-theorem-Stoch}
\chi_{\psi}(g)=\sum_{i}p_{i} f_{i}(g)\chi_{\phi_{i}}(g).
\end{equation}
\end{theorem}
One important special case is when we are interested in just one of the outcome states. In particular we are interested to know whether we can transform state $|\psi\rangle$ to $|\phi\rangle$ with probability $p$. We call these transformations \emph{stochastic transformations}. The above theorem implies the following corollary about stochastic transformations.
\begin{corollary}\label{stochastic}
There exists a G-covariant map  taking $\psi$ to $ \phi$ with probability $p$ iff there exists a positive definite (and continuous when $G$ is a Lie group)  function $f(g)$ for which $f(e)=1$ such that $\chi_{\psi}(g)-p\chi_{\phi}(g)f(g)$ is positive definite.
\end{corollary}
These results are proven at the end of the section.

\subsection{Example: U(1)-covariant stochastic maps}

Recall our quantum optics example where the set of all phase shifts forms a representation of group U(1) (see example \ref{Ex-U(1)-reduc}).  Let $\textnormal{Irreps}_{\textnormal{U}(1)}(\psi)$ be the set of eigenvalues  of the number operator ${N}$ to which the pure state $\psi$ assigns nonzero weight. Assuming that $\psi$ can be transformed to $\phi$  with nonzero probability under a U(1)-covariant operation, one can easily show that
\begin{enumerate}
\item
The cardinality of  $\textnormal{Irreps}_{\textnormal{U}(1)}(\psi)$  is larger than or equal to the cardinality of  $\textnormal{Irreps}_{\textnormal{U}(1)}(\phi)$, i.e.,
\begin{equation}
\left|\textnormal{Irreps}_{\textnormal{U}(1)}(\phi)\right|\le \left|\textnormal{Irreps}_{\textnormal{U}(1)}(\psi)\right|
\end{equation}
\item
\begin{align*}
\textnormal{max}\{\textnormal{Irreps}_{\textnormal{U}(1)}(\phi) \}-\textnormal{min}\{\textnormal{Irreps}_{\textnormal{U}(1)}(\phi) \}\le\\  \textnormal{max}\{\textnormal{Irreps}_{\textnormal{U}(1)}(\psi) \}-\textnormal{min}\{\textnormal{Irreps}_{\textnormal{U}(1)}(\psi) \}
\end{align*}

\end{enumerate}


Here, we prove item 2 by contradiction.  Assume this condition does not hold. Then for any positive definite function $f(\theta)$,  $\chi_{\phi}(\theta)f(\theta)$ has a nonzero component of $e^{im\theta}$   for some $m$ such that $m<n_{\textnormal{min}}(\psi)$ or $m>n_{\textnormal{max}}(\psi)$. Since both $\chi_{\phi}(\theta)$ and $f(\theta)$ are positive-definite, the coefficient of $e^{im\theta}$  will be positive. This implies that for any nonzero probability $p$, the coefficient of $e^{im\theta}$ in $\chi_{\psi}(\theta)-p\chi_{\phi}(\theta)f(\theta)$ is negative and so the function $\chi_{\psi}(\theta)-p\chi_{\phi}(\theta)f(\theta)$ is not positive definite for any nonzero $p$. This proves the claim. Item 1 is proven similarly.  

Item 2 was obtained by a different argument in Ref.~\cite{GS07}.\footnote{The set $\textnormal{Irreps}_{\textnormal{U}(1)}(\phi)$ was called the ``number spectrum'' of $\phi$ in Ref.~\cite{GS07}.} \blk

\subsection{Example: SO(3)-covariant stochastic maps}

Let $\textnormal{Irreps}_{\textnormal{SO}(3)}(\psi)$ be the set of all angular momentum quantum numbers $j$ corresponding to irreps  of SO(3) to which the pure state $\psi$ assigns nonzero weight.

Using a similar argument to the one we used for the case of U(1), one can easily conclude that if $\psi$ can be transformed to $\phi$ under an SO(3)-covariant channel, then
\begin{enumerate}
\item
the cardinality of  $\textnormal{Irreps}_{\textnormal{SO}(3)}(\psi)$  is larger than or equal to the cardinality of  $\textnormal{Irreps}_{\textnormal{SO}(3)}(\phi)$, i.e.,
\begin{equation}
\left|\textnormal{Irreps}_{\textnormal{SO}(3)}(\phi)\right|\le \left|\textnormal{Irreps}_{\textnormal{SO}(3)}(\psi)\right|.
\end{equation}

\item
\begin{align*}
\textnormal{max}\{\textnormal{Irreps}_{\textnormal{SO}(3)}(\phi) \}-\textnormal{min}\{\textnormal{Irreps}_{\textnormal{SO}(3)}(\phi) \}\le\\ \textnormal{max}\{\textnormal{Irreps}_{\textnormal{SO}(3)}(\psi) \}-\textnormal{min}\{\textnormal{Irreps}_{\textnormal{SO}(3)}(\psi) \}
\end{align*}
\item
\begin{equation}
\textnormal{max}\{\textnormal{Irreps}_{\textnormal{SO}(3)}(\phi) \}\le \textnormal{max}\{\textnormal{Irreps}_{\textnormal{SO}(3)}(\psi) \}
\end{equation}

\end{enumerate}

The proofs of items 1 and 2 are similar to the case of U(1). To prove item 3 note that the maximum value of $j$ to which $\chi_{\phi}(\theta)f(\theta)$ assigns nonzero weight is greater than or equal to $j_\textnormal{max}({\phi})$. So if $j_\textnormal{max}({\phi})$ is strictly greater than $j_\textnormal{max}({\psi})$, then for any nonzero $p$,  $\chi(\psi)-p\chi_{\phi}(\theta)f(\theta)$ cannot be positive definite.

Item 3 implies that if a pure state does not have any component of angular momentum higher than $j$ then  by  rotationally covariant operations it cannot be transformed with nonzero probability to another pure state which assigns some amplitude to an angular momentum higher than $j$.


\subsection{Proof of theorem \ref{state_ensemble}}
According to a version of the Stinespring dilation theorem, a general state-to-ensemble transformation  can always be purified in the following way: First, the input system (with Hilbert space $\mathcal{H}_{\textnormal{in}}$)  unitarily interacts with an ancillary system (with Hilbert space $\mathcal{H}_{\textnormal{anc}}$). Now we consider the total Hilbert space $\mathcal{H}_{\textnormal{in}}\otimes\mathcal{H}_{\textnormal{anc}}$ as
\begin{equation}
\mathcal{H}_{\textnormal{in}}\otimes\mathcal{H}_{\textnormal{anc}}=\bigoplus_i \mathcal{H}_{i}\otimes \mathcal{H}_{i}' \otimes |i\rangle\langle i|
\end{equation}
After the unitary time evolution we perform a projective measurement on the third subsystem in the basis $\{|i\rangle\langle i|\}$ and according to the outcome of measurement we discard the subsystem  $ \mathcal{H'}_{i}$.  The output would be the system described by $\mathcal{H}_{i}$. This procedure realizes the most general state-to-ensemble transformation.

Suppose a transformation maps $|\psi\rangle$ to $|\phi_{i}\rangle$ with  probability $p_{i}$. Since the output is pure, clearly it cannot  be entangled with the discarded system. In other words, after applying the unitary $V$ which couples the system and ancilla  the total state should be in the form
\begin{equation}
V|\psi\rangle|\nu\rangle=\sum_i \sqrt{p_i} |\phi_i\rangle |\eta_i\rangle |i\rangle
\end{equation}
where $|\psi\rangle$ is the initial state of the system and $|\nu\rangle$ is the initial state of the ancilla.

Now  according to an extension of Stinespring's dilation theorem for G-covariant quantum operations,  if  the state-to-ensemble transformation is G-covariant then one can choose  the initial state $|\nu\rangle$ of ancilla,  the unitary $V$, and the basis $\{|i\rangle\}$ to all be  G-invariant \cite{Werner} .

Assuming $V$ is a G-invariant unitary then the characteristic function of the right hand side should be equal to the characteristic function of $|\psi\rangle|\nu\rangle$.
This implies
\begin{equation}
\chi_{\psi}(g)e^{i\theta(g)}=\sum_{i}p_{i}\ \chi_{\nu_{i}}(g)\chi_{\phi_{i}}(g) e^{i\alpha_i(g)}
\end{equation}
where $e^{i\theta(g)}$ is the characteristic function of the G-invariant ancilla $|\nu\rangle$ and $\{e^{i\alpha_i(g)}\}$ are the characteristic functions of the G-invariant states $\{|i\rangle\}$. Now because the product of two characteristic functions is also a characteristic function,  $\chi_{\nu_{i}}(g)e^{i\alpha_i(g)}e^{-i\theta(g)}$ is a valid characteristic function. So if there exists a G-covariant transformation which maps state $\psi$ to $\phi_{i}$ with probability $p_{i}$, then the equation \eqref{cond-theorem-Stoch} should hold.
This completes the proof of one direction of the theorem. To prove the other direction, we note that property \eqref{Bochner} of characteristic functions  listed in section \ref{properties} implies that there exists a set of states $\{|\nu_{i}\rangle\}$ which have characteristic functions equal to $\{f_{i}\}$. Now we choose $|\nu\rangle$, the initial state of the ancilla, to be a G-invariant state and we assume that its characteristic function is equal to 1 for all group elements (i.e. any group element maps $|\nu\rangle$ exactly to itself).  Similarly we choose a basis $\{|i\rangle\}$ to be a set of G-invariant orthonormal states and assume the characteristic functions of all of them are constant and equal to 1. Then, equation \eqref{cond-theorem-Stoch} implies that the characteristic function of $|\psi\rangle|\nu\rangle$ is equal to the characteristic function of $\sum_i \sqrt{p_i} |\phi_i\rangle |\eta_i\rangle |i\rangle$ and so there exists a G-invariant unitary which maps the former state to the latter. Now by performing a measurement in the basis $\{|i\rangle\}$ and discarding the subsystem with the state $ |\eta_i\rangle$ we can realize the desired map. This completes the proof.

\section{Acknowledgement}
We thank Sarah Croke for a discussion about Gram matrices, Giulio Chiribella for a discussion about Noether's theorem and Gilad Gour for general discussions.  Perimeter Institute is supported by the Government of Canada through Industry Canada and by the Province of Ontario through the Ministry of Research and Innovation. I. M. is supported by a Mike and Ophelia Lazaridis fellowship.

\appendix


\section{Short review of projective unitary representations} \label{app:proj}

In this section we list some useful definitions and properties of projective unitary representations of groups.

Two projective unitary representations of a group, $g\rightarrow U(g)$ acting on space $\mathcal{H}$ and $g\rightarrow V(g)$ acting on space $\mathcal{K}$,  are \emph{equivalent} iff  there exists an isometry  $T:\mathcal{H}\rightarrow\mathcal{K}$ such that $TT^{\dag}=\mathbb{I}_{\mathcal{K}}$ and  $T^{\dag}T=\mathbb{I}_{\mathcal{H}}$, where $\mathbb{I}_{\mathcal{K}}$ and $\mathbb{I}_{\mathcal{H}}$ are the identity operators on  $\mathcal{K}$ and $\mathcal{H}$ respectively,  and $\forall g\in G:\ TU(g)T^{\dag}=V(g)$.

Consider an arbitrary projective unitary representation of a group on a space. We say a subspace of this space is \emph{invariant} under the action of a group, if  under the action of any arbitrary element of the group any vector in the subspace is mapped to a vector in this subspace.

A representation on a space is called an \emph{irreducible} representation (\emph{irrep} for short) if there is no proper subspace of the space (i.e. a nonzero subspace which is not equal to the total space) which remains invariant under the action of the group.  Equivalent irreps can be grouped in the same equivalence class,  labeled by the Greek index $\mu$.

Note that the unitarity of a projective unitary representation implies that all the irreps which show up in that representation should have the same cocycle.  Any two projective unitary representations $g\rightarrow U(g)$ and  $g\rightarrow V(g)$ which  have the same cocyle, i.e. $U(g_{1})U(g_{2})=\omega(g_{1},g_{2})U(g_{1}g_{2})$ and   $V(g_{1})V(g_{2})=\omega(g_{1},g_{2})V(g_{1}g_{2})$ for a cocycle $\omega(g_{1},g_{2})$ are said to be in the same \emph{factor system}.

\begin{theorem}
Any projective unitary representation of a finite or a compact Lie group can be decomposed into a direct sum of a discrete number of finite-dimensional projective unitary irreps which are all in the same factor system.
\end{theorem}

Suppose $\{U(g):g\in G \}$ is a projective unitary representation of a finite or compact Lie group $G$ on the Hilbert space $\mathcal{H}$.  Then, the  decomposition of this representation to irreps suggests the following decomposition of the Hilbert space
\begin{equation}
\mathcal{H}=\bigoplus_{\mu} \mathcal{M}_{\mu}\otimes  \mathcal{N}_{\mu},
\end{equation}
where $\mu$ labels inequivalent unitary projective irreps in the same factor system,  $\mathcal{M}_{\mu}$ is the subsystem on which $\{U(g):g\in G \}$  acts like irrep $\mu$ of G  and $\mathcal{N}_{\mu}$ is the subsystem associated to the copies of representation $\mu$ (the dimension of $ \mathcal{N}_{\mu}$ is equal to the multiplicity of the irrep $\mu$ in this representation). Then  $U(g)$ can be written as
\begin{equation}  \label{decomposition}
U(g) =\bigoplus_\mu U_\mu(g) \otimes \mathbb{I}_{\mathcal{N}_\mu}
\end{equation}
where $U_\mu(g)$ acts on $\mathcal{M}_{\mu}$ irreducibly and where $\mathbb{I}_{\mathcal{N}_\mu}$ is the identity operator on the multiplicity subsystem $\mathcal{N}_{\mu}$.

Now by Schur's lemmas it follows  that any operator $A$ which commutes with all unitaries $\{U(g): g\in G\}$ should be in the following form
\begin{equation} \label{G-inv operators}
A=\bigoplus_\mu \mathbb{I}_{\mathcal{M}_\mu} \otimes A_{\mathcal{N}_\mu},
\end{equation}
where  $A_{\mathcal{N}_\mu} $ acts on ${\mathcal{N}_\mu} $.

\begin{theorem}
For a finite or compact Lie group  G,  let $\{g\rightarrow U^{(\mu)}(g)\}$  be the set of all inequivalent projective unitary irreps which are in the same factor system. Consider the matrix elements of all these unitary  matrices as a set of functions from G to $\mathbb{C}$ denoted by $\{U_{i,j}^{(\mu)}\}$. Then, they satisfy the following orthogonality relations
\begin{equation} \label{orthogonality}
\int_G dg\ U_{i,j}^{(\mu)}(g) {{\overline{U}^{(\nu)}_{k,l}}}(g)=\frac{\delta_{\mu,\nu}\delta_{i,k}\delta_{j,l}}{ d_\mu}
\end{equation}
where $dg$ is the unique Haar measure over the group, bar denotes the complex conjugate and $d_\mu$ is the dimension of irrep $\mu$. Furthermore, in the case of finite groups any function from G to $\mathbb{C}$ can be expanded as a linear combination of these functions. Also, in the case of compact Lie groups  any continuous function from G to $\mathbb{C}$ can be uniformly approximated as a linear combination of these matrix elements.
\end{theorem}
This expansion of functions in terms of the matrix elements of projective unitary irreps is called the \emph{generalized Fourier transform}. Note that for each cocycle of a group G there exists  a notion of generalized Fourier transform in which the functions over the group are expanded in terms of the matrix elements of the projective unitary irreps which all have that cocycle, and therefore are all in the same factor system.
As we have defined above, (non-projective) unitary representations are a specific case of projective unitary representations for which the cocycle is trivial.
So in particular, for any compact Lie group or finite group there is a unique generalized Fourier transform which corresponds to the (non-projective) unitary irreps of the group, i.e. the irreps for which the cocycle is trivial.

 In many cases the cocycle of a projective unitary representation can be \emph{lifted} in the sense that one can redefine the unitaries $\{U(g): g\in G\}$ by multiplying them by a phase such that the new unitaries form a (non-projective) unitary representation of the group and so the cocycle will be trivial. This is the case for all finite-dimensional representations of simply connected Lie groups such as SU(2), the group of unitaries acting on $\mathbb{C}^{2}$ with determinant one.\footnote{To see this, first note that by redefining the cocycle one can always choose the unitaries $\{U(g)\}$ to have determinant equal to one.  Then by looking at the determinant of both sides of Eq.~(\ref{Eq.cocycle}), one finds that for all $g_{1},g_{2}\in G$, it holds that $\omega^{d}(g_{1},g_{2})=1$ where $d$ is the dimension of the representation and so the values of $\omega(g_{1},g_{2})$ are discrete. Then using a simple continuity argument one can show that in the case of simply connected Lie  groups the cocycle $\omega(g_{1},g_{2})$ should be constant and equal to one and so the cocycle can be lifted.}. On the other hand, for Lie groups which are not simply connected, such as SO(3), the cocycle cannot always be lifted. This is the case for all irreps of SO(3) with  half-integer spin; they all have the same cocycle and this cocycle cannot be lifted. But, on the other hand,  for all irreps of SO(3) with  integer spin the cocycle is trivial and so they are all unitary irreps of SO(3).

This discussion implies that in the case of SO(3) there are two different notions of Fourier transform: One for the basis formed by the matrix elements of half-integer spin representations and the other for integer spin representations.

\section{Input-output Hilbert spaces} \label{embeding}

In general the input and output Hilbert space of a time evolution are not the same $(\mathcal{H}_\textnormal{in}\neq\mathcal{H}_\textnormal{out})$. This can happen especially in the case of open-system time evolutions.  However, we can always assume that the input and output spaces are two different sectors of  a larger  Hilbert space  ($\mathcal{H}_\textnormal{in}\oplus \mathcal{H}_\textnormal{out}$) and extend the time evolution to a  time evolution which acts on this larger Hilbert space. Therefore without loss of generality we can restrict our attention to the cases where the input and output Hilbert spaces are the same.

On the other hand, when the spaces are equipped with a representation of a symmetry group and the time evolution is covariant we may also care about the symmetries of time evolution of the extended system and therefore this process  of embedding spaces in a larger space is less trivial. Suppose there is a representation of group $G$ on the input and output Hilbert spaces given by $\{U_\textnormal{in}(g):g\in G\}$ and $\{U_\textnormal{out}(g):g\in G\}$. Suppose the time evolution is G-covariant, i.e., $\mathcal{E}\circ \mathcal{U}_\textnormal{in}(g)=\mathcal{U}_\textnormal{out}(g) \circ \mathcal{E}$ for all $g\in G$. In the following we will show that it is always possible to extend this time evolution to a time evolution on  $\mathcal{H}_\textnormal{in}\oplus \mathcal{H}_\textnormal{out}$ such that this extended time evolution respects the natural representation of $G$ on $\mathcal{H}_\textnormal{in}\oplus \mathcal{H}_\textnormal{out}$ given by $\{U_\textnormal{in}(g)\oplus U_\textnormal{out}(g):g\in G\}$. Therefore without loss of generality we can always restrict our attention to the G-covariant time evolutions whose input and output Hilbert spaces are the same. In particular when we ask whether there exists a G-covariant time evolution which maps $\rho$ to $\sigma$ we can always assume $\rho$ and $\sigma$ live in two sectors of the same Hilbert space.

\subsection{General G-covariant Channels}

 Suppose  $\mathcal{E}$ is a channel (completely-positive trace-preserving linear map) from $\mathcal{B}(\mathcal{H}_\textnormal{in})$ to $\mathcal{B}(\mathcal{H}_\textnormal{out})$ which is G-covariant, i.e.  for all $g\in G$ we have $U_\textnormal{out}(g)\mathcal{E}[\cdot]U^{\dag}_\textnormal{out}(g) =\mathcal{E}(U_\textnormal{in}(g)[\cdot]U^{\dag}_\textnormal{in}(g) )$. Then we can always extend this channel to $\tilde{\mathcal{E}}$, a G-covariant channel from  $\mathcal{B}(\mathcal{H}_\textnormal{in}\oplus \mathcal{H}_\textnormal{out})$ to itself, by defining
\begin{equation}
\tilde{\mathcal{E}}\equiv \mathcal{E}(\Pi_\textnormal{in}[\cdot] \Pi_\textnormal{in})+ \frac{   I_{\mathcal{H}_\textnormal{in}\oplus \mathcal{H}_\textnormal{out} }  }{d_\textnormal{in}+d_\textnormal{out} } \textnormal{tr}(\Pi_\textnormal{out}[\cdot] \Pi_\textnormal{out})
\end{equation}
where $ I_{\mathcal{H}_\textnormal{in}\oplus \mathcal{H}_\textnormal{out} } /({d_\textnormal{in}+d_\textnormal{out} })$ is the completely mixed state on $\mathcal{H}_\textnormal{in}\oplus \mathcal{H}_\textnormal{out}$. Clearly by this definition $\tilde{\mathcal{E}}$ is completely-positive and trace-preserving and so, a valid channel, and moreover it is G-covariant. Furthermore the restriction of $\tilde{\mathcal{E}}$ to $\mathcal{H}_\textnormal{in}$, i.e., $\tilde{\mathcal{E}}(\Pi_\textnormal{in}[ \cdot]\Pi_\textnormal{in} )$, is equal to $\mathcal{E}(\cdot)$.

On the other hand, if there is a G-covariant channel from $\mathcal{B}(\mathcal{H}_\textnormal{in}\oplus \mathcal{H}_\textnormal{out})$ to itself which maps all operators in  $\mathcal{B}(\mathcal{H}_\textnormal{in})$ to operators in $\mathcal{B}(\mathcal{H}_\textnormal{out})$ then clearly by restricting its input to $\mathcal{B}(\mathcal{H}_\textnormal{in})$  we get a valid G-covariant channel from  $\mathcal{B}(\mathcal{H}_\textnormal{in})$ to operators in $\mathcal{B}(\mathcal{H}_\textnormal{out})$.

Finally consider the situation where there is a G-covariant channel $\mathcal{E}$ from $\mathcal{B}(\mathcal{H})$ to itself which maps $\rho_{i}$ to $\sigma_{i}$ for a set of $i$'s. Assume the representation of the group $G$ on the Hilbert space is $\{U(g):g\in G\}$.  Define $\Pi_\textnormal{in}$ and $\Pi_\textnormal{out}$ to be respectively the span of the supports of all operators $\{U(g)\rho_{i} U^{\dag}(g)\}$ and $\{U(g)\sigma_{i} U^{\dag}(g)\}$. It is clear from this definition that both $\Pi_\textnormal{in}$ and $\Pi_\textnormal{out}$ commute with all $\{U(g):g\in G\}$. Therefore the subspaces associated to these  projectors,  $\mathcal{H}_\textnormal{in}$ and $\mathcal{H}_\textnormal{out}$, have a natural representation of the group $G$ given by $\{\Pi_\textnormal{in} U(g) \Pi_\textnormal{in} \}$ and $\{\Pi_\textnormal{out} U(g) \Pi_\textnormal{out}\} $. Now $\tilde{\mathcal{E}}\equiv \mathcal{E}(\Pi_\textnormal{in}[\cdot]\Pi_\textnormal{in})$ is a new G-covariant quantum channel which maps states from $\mathcal{B}(\mathcal{H}_\textnormal{in})$ to $\mathcal{B}(\mathcal{H}_\textnormal{out})$ and $\tilde{\mathcal{E}}(\rho_{i})=\sigma_{i}$.

\subsection{G-invariant unitaries and G-invariant isometries} \label{app-G-inv-isom}

 Basically we can repeat all of these observations to prove the equivalence of  a G-invariant unitary where the input and output spaces are the same and a G-invariant isometry where the input and output spaces are different.

For example if there exists a G-invariant unitary on $\mathcal{H}_\textnormal{in}\oplus \mathcal{H}_\textnormal{out}$ which unitarily maps the subspace $\mathcal{H}_\textnormal{in}$ to (a subspace of)   $\mathcal{H}_\textnormal{out}$ then clearly there exists a G-invariant isometry  $V$ from $H_\textnormal{in}$ to $\mathcal{H}_\textnormal{out}$ such that $\forall g\in G:\  VU_\textnormal{in}(g)=U_\textnormal{out}(g)V$ and $V^{\dag}V=I_\textnormal{in}$ where $I_\textnormal{in}$ is the identity on $\mathcal{H}_\textnormal{in}$.

The only property which is less trivial in the case of unitary-isometry equivalences is the following: Suppose $V$ is an isometry  from $H_\textnormal{in}$ to $H_\textnormal{out}$ which is G-invariant i.e.  $\forall g\in G: VU_\textnormal{in}(g)=U_\textnormal{out}(g)V$.   Then  there exits a unitary $V_{ext}$ on $\mathcal{H}_\textnormal{in}\oplus \mathcal{H}_\textnormal{out}$ such that  $\forall g:\in G: V_{ext} (U_\textnormal{in}(g)\oplus U_\textnormal{out}(g) )=(U_\textnormal{in}(g)\oplus U_\textnormal{out}(g) ) V_{ext}$ and moreover $V=\Pi_\textnormal{out}V_{ext}\Pi_\textnormal{in}$ where $\Pi_{in/out}$ is the projector to  $\mathcal{H}_{in/out}$.    This is shown by the following lemma
 \begin{lemma} \label{extension}
Suppose $W$  maps the subspace of the support of the projector $\Pi$ unitarily to another subspace such that   $\Pi W^{\dag}W\Pi=\Pi$ (in other words, $W\Pi$ is an isometry). Then if $\forall g\in G: [W\Pi,U(g)]=0$ there exits a unitary $W_{G-inv}$ such that $\forall g\in G: [W_{G-inv},U(g)]=0$ and $W_{G-inv}\Pi=W\Pi$.
\end{lemma}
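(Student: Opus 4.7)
The plan is to first harvest what the hypothesis tells us about $\Pi$ and its image projector, then decompose each into irreps, and finally fill in the map on the complementary subspaces.

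First I would extract that both $\Pi$ and the range projector $\Pi' := W\Pi W^{\dagger}$ are $G$-invariant. The identity $\Pi W^{\dagger} W \Pi = \Pi$ makes $\Pi'$ a genuine projector, and from $[W\Pi, U(g)] = 0$ together with its adjoint $[\Pi W^{\dagger}, U(g)] = 0$ one immediately gets $[\Pi' , U(g)] = [W \Pi W^{\dagger}, U(g)] = 0$, and also $[\Pi, U(g)] = [\Pi W^{\dagger} W \Pi, U(g)] = 0$. So both the domain and the range of the given partial isometry are $G$-invariant subspaces of $\mathcal{H}$.

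Second, I would use the canonical decomposition $\mathcal{H} = \bigoplus_{\mu} \mathcal{M}_{\mu} \otimes \mathcal{N}_{\mu}$ and Schur's lemma applied to the $G$-invariant projectors $\Pi, \Pi'$. Each acts trivially on the irrep factor, so they have the form $\Pi = \bigoplus_{\mu} I_{\mathcal{M}_{\mu}} \otimes P_{\mu}$ and $\Pi' = \bigoplus_{\mu} I_{\mathcal{M}_{\mu}} \otimes P'_{\mu}$ for projectors $P_{\mu}, P'_{\mu}$ on the multiplicity spaces. Because $W\Pi$ intertwines $\{U(g)\}$ on its domain and range, another application of Schur's lemma gives $W\Pi = \bigoplus_{\mu} I_{\mathcal{M}_{\mu}} \otimes w_{\mu}$ where $w_{\mu}$ is a partial isometry on $\mathcal{N}_{\mu}$ with initial projector $P_{\mu}$ and final projector $P'_{\mu}$. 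In particular $\dim(P_{\mu}) = \dim(P'_{\mu})$, so $\dim(I - P_{\mu}) = \dim(I - P'_{\mu})$.

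Third, I extend each $w_{\mu}$ to a full unitary $\tilde w_{\mu}$ on $\mathcal{N}_{\mu}$ by picking, for every $\mu$, any unitary from $(I - P_{\mu})\mathcal{N}_{\mu}$ onto $(I - P'_{\mu})\mathcal{N}_{\mu}$ (possible by the dimension match established above) and summing with $w_{\mu}$. Setting
\begin{equation}
W_{G\text{-inv}} \;:=\; \bigoplus_{\mu} I_{\mathcal{M}_{\mu}} \otimes \tilde w_{\mu}
\end{equation}
yields a unitary on $\mathcal{H}$ that commutes with every $U(g)$ by construction, and one verifies $W_{G\text{-inv}} \Pi = \bigoplus_{\mu} I_{\mathcal{M}_{\mu}} \otimes (\tilde w_{\mu} P_{\mu}) = \bigoplus_{\mu} I_{\mathcal{M}_{\mu}} \otimes w_{\mu} = W\Pi$, as required.

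The only nontrivial step is matching up the multiplicity-space complements; everything else is bookkeeping. That step, however, follows automatically once one observes that $W\Pi$ is itself a $G$-equivariant unitary between the multiplicity sectors of $\Pi$ and $\Pi'$ inside each irrep $\mu$, so no obstruction can arise and the ambient Hilbert space is large enough (in fact equal) on both sides to absorb the extension.
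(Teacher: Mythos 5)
Your proof is correct and follows essentially the same route as the paper's: establish that $\Pi$ (and the range projector) commute with all $U(g)$, invoke the irrep decomposition to write $W\Pi=\bigoplus_\mu I_{\mathcal{M}_\mu}\otimes w_\mu$ with each $w_\mu$ a partial isometry on the multiplicity space, and extend each $w_\mu$ to a unitary $\tilde w_\mu$ blockwise. If anything, you are slightly more careful than the paper in making explicit the rank count $\dim(I-P_\mu)=\dim(I-P'_\mu)$ that justifies the extension step.
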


\begin{proof}
$W\Pi$ commutes with all $U(g)$ and so does  $\Pi W^{\dag}$. Therefore $\Pi=\Pi W^{\dag}W\Pi$ also commutes with all $U(g)$.
Now we consider the decomposition of $U(g)$ to irreps,
\begin{equation}
U(g)=\bigoplus_{\mu} U_{\mu}(g) \otimes I_{\mathcal{N}_{\mu}}.
\end{equation}
Since  $\Pi$  commutes with all $\{U(g):g\in G\} $ it has a simple form in this basis:
\begin{equation}
\Pi=\bigoplus_\mu I_\mu\otimes {\Pi}^{(\mu)},
\end{equation}
where $\Pi^{2}=\Pi$ implies ${{\Pi}^{(\mu)}}^{2}={\Pi}^{(\mu)}$ and so all ${\Pi}^{(\mu)}$'s are projectors (Note that for some $\mu$, $\Pi_{\mu}$ might be zero.).  $W\Pi$ also commutes with all $\{U(g)\}$. Since $W\Pi=(W\Pi)\Pi$ we conclude that the decomposition of  $W{\Pi}$ should be in the following form
\begin{equation}
W\Pi=\bigoplus_\mu I_\mu\otimes (W^{(\mu)}{\Pi}^{(\mu)}).
\end{equation}
$\Pi W^{\dag}W\Pi=\Pi$ implies that ${\Pi}^{(\mu)}   {W^{(\mu)}}^\dag {W^{(\mu)}}{\Pi}^{(\mu)}={\Pi}^{(\mu)}$. Therefore ${W^{(\mu)}}{\Pi}^{(\mu)}$ acts unitarily on the subspace of the support of $\Pi^{(\mu)}$. Now we can always find a unitary $\tilde{W}^{(\mu)}$ on this subsystem such that $\tilde{W}^{(\mu)}{\Pi}^{(\mu)}=W^{(\mu)}{\Pi}^{(\mu)}$. Finally define the unitary $\tilde{W}$ as
\begin{equation}
{W}_{G-inv}=\bigoplus_\mu I_\mu\otimes \tilde{W}^{(\mu)}.
\end{equation}
Clearly it commutes with all $\{U(g)\}$ and $\tilde{W}\Pi=W\Pi$.
\end{proof}

\section{Characteristic functions and pairwise distinguishability} \label{app:distinguish}

In this section we discuss the interpretation of the amplitude of the characteristic function of $|\psi\rangle$  in terms of the pairwise distinguishability of states in the set $\{U(g)|\psi\rangle:g\in G\}$.

First, note that any measure of the distinguishability of a pair of pure states, $|\alpha_1\rangle$ and  $|\alpha_2\rangle$, depends only on the absolute value of their inner product, $|\langle\alpha_1|\alpha_2\rangle|$.  This is a consequence of the fact that for two pairs of states,  $\{ |\alpha_1\rangle\langle\alpha_{1}|,|\alpha_2\rangle\langle\alpha_{2}|\}$ and $\{ |\beta_1\rangle\langle\beta_{1}|,|\beta_2\rangle\langle\beta_{2}|\}$, the condition $|\langle\alpha_1|\alpha_2\rangle| = |\langle\beta_1|\beta_2\rangle|$ implies that it is possible, via a unitary dynamics, to reversibly interconvert between the two pairs, which in turn implies (on the grounds that no processing can increase the distinguishability of a pair of states) that they have the same distinguishability. Moreover using the same type of argument we can easily see that any measure of distinguishability should be monotonically nonincreasing in this overlap.  Therefore, for any pair of states $U(g_1)|\psi\rangle$ and $U(g_2)|\psi\rangle$, the distinguishability is specified by $|\langle \psi |U^\dag(g_1)U(g_2)|\psi\rangle|=|\chi_\psi(g_1^{-1}g_2)|$.

At first glance, therefore, one might think that the Gram matrix for any set of pure states merely encodes the distinguishability of every pair of these states, and therefore, that the characteristic function of a state merely encodes the pairwise distinguishability of every pair of elements in the group orbit of that state.
This is not the case however.  It is true that if two sets of states (in particular, two group orbits) are reversibly interconvertible (i.e., they have the same Gram matrix), then every pair from the first has the same distinguishability as the corresponding pair from the second. The opposite implication, however, fails.  In other words, the information content of the set  (in particular its entropy for different probability measures) is not specified by the pairwise distinguishabilities of its elements.

This phenomenon is highlighted by the results of Jozsa and Schlienz \cite{Jozsa-Schlienz}.  Also, a particularly nice example is provided by a result of Gisin and Popescu concerning the optimal state of two spin-half systems to use for sending a direction in space \cite{Gisin-Popescu}.
Define  $|{\uparrow}_{\hat{n}}\rangle$ and $|{\downarrow}_{\hat{n}}\rangle$ to be the eigenstates of spin along the $+\hat{n}$ direction, that is, $\hat{n}\cdot\vec{\sigma}|{\uparrow}_{\hat{n}}\rangle=|{\uparrow}_{\hat{n}}\rangle$ and $\hat{n}\cdot\vec{\sigma}|{\downarrow}_{\hat{n}}\rangle=-|{\downarrow}_{\hat{n}}\rangle$.
Then it is shown in \cite{Gisin-Popescu} that the state $\{|{\uparrow}_{\hat{n}}\rangle|{\downarrow}_{\hat{n}}\rangle\}$ is better than $\{|{\uparrow}_{\hat{n}}\rangle|{\uparrow}_{\hat{n}}\rangle\}$ for this task when the figure of merit is the fidelity of the estimated direction with the actual sent direction.  In other words, they showed that, with respect to this figure of merit, the  encoding $\{\Omega\rightarrow (U(\Omega)\otimes U(\Omega))|{\uparrow}_{\hat{z}}\rangle|{\downarrow}_{\hat{z}}\rangle, \Omega\in SO(3)\}$ provides more information  about $\Omega \hat{z}$ than the encoding  $\{\Omega\rightarrow(U(\Omega)\otimes U(\Omega))|{\uparrow}_{\hat{z}}\rangle|{\uparrow}_{\hat{z}}\rangle, \Omega\in SO(3)\}$.
On the other hand, one can easily check that the absolute values of the characteristic functions for the two states, which encode the pairwise distinguishability of elements of the orbits of the states, are exactly the same.
This follows from the fact that
\begin{align*}
|\chi_{\uparrow\downarrow}(\Omega)|&= \left|\ \langle{\uparrow}_{\hat{z}}|\langle{\downarrow}_{\hat{z}}|  \left[U(\Omega) \otimes U(\Omega)\right]|{\uparrow}_{\hat{z}}\rangle|{\downarrow}_{\hat{z}}\rangle\ \right|\\ &=|\langle{\uparrow}_{\hat{z}}| U(\Omega)|{\uparrow}_{\hat{z}}\rangle  |\times | \langle{\downarrow}_{\hat{z}}| U(\Omega)|{\downarrow}_{\hat{z}}\rangle|
\end{align*}
and
\begin{align*}
|\chi_{\uparrow\uparrow}(\Omega)|&= |\langle{\uparrow}_{\hat{z}}|\langle{\uparrow}_{\hat{z}}| \left[U(\Omega) \otimes U(\Omega)\right] |{\uparrow}_{\hat{z}}\rangle|{\uparrow}_{\hat{z}}\rangle|\\ &=|\langle{\uparrow}_{\hat{z}}| U(\Omega)|{\uparrow}_{\hat{z}}\rangle  |\times | \langle{\uparrow}_{\hat{z}}| U(\Omega)|{\uparrow}_{\hat{z}}\rangle|
\end{align*}
and the fact that for arbitrary rotation $\Omega$ we have $| \langle{\uparrow}_{\hat{z}}| U(\Omega)|{\uparrow}_{\hat{z}}\rangle|= | \langle{\downarrow}_{\hat{z}}| U(\Omega)|{\downarrow}_{\hat{z}}\rangle| $.

The insufficiency of the pairwise overlaps within a set of states for specifying the information contained in that  set implies that the relevant global properties of the set are encoded in the \emph{phases} of the components of the Gram matrix, or equivalently, for group orbits, in the phase of the characteristic function of the state generating the orbit.

One may think that the insufficiency of   pairwise  distinguishabilities for specifying the content of a set is a uniquely quantum phenomenon, but this is not the case. A simple example (attributed to Peter Shor in Ref.~\cite{Jozsa-Schlienz}) shows that the phenomenon can also arise with sets of classical probability distributions.  Consider a discrete sample space with four elements, and the following two sets of probability distributions: $\{ (1/2,1/2,0,0), (1/2,0,1/2,0), (0,1/2,1/2,0) \}$ and $\{ (1/2,1/2,0,0), (1/2,0,1/2,0), (0,1/2,0,1/2) \}$.
The three distributions in each case are illustrated by the ``sausages'' in Fig.~\ref{Fig:SausageDiagrams}.
It is clear that the pairwise overlaps are the same for the two sets but that they are not reversibly interconvertible.\footnote{It should be noted that the existence of this classical analogue demonstrates that the phenomenon in question can be added to the long list of those which are not obvious if one adopts the view that quantum states are states of reality, but are both intuitive and natural if one adopts the view that quantum states are states of incomplete knowledge \cite{Spe04}.}

\begin{figure}[h!]
   \includegraphics[width=8cm]{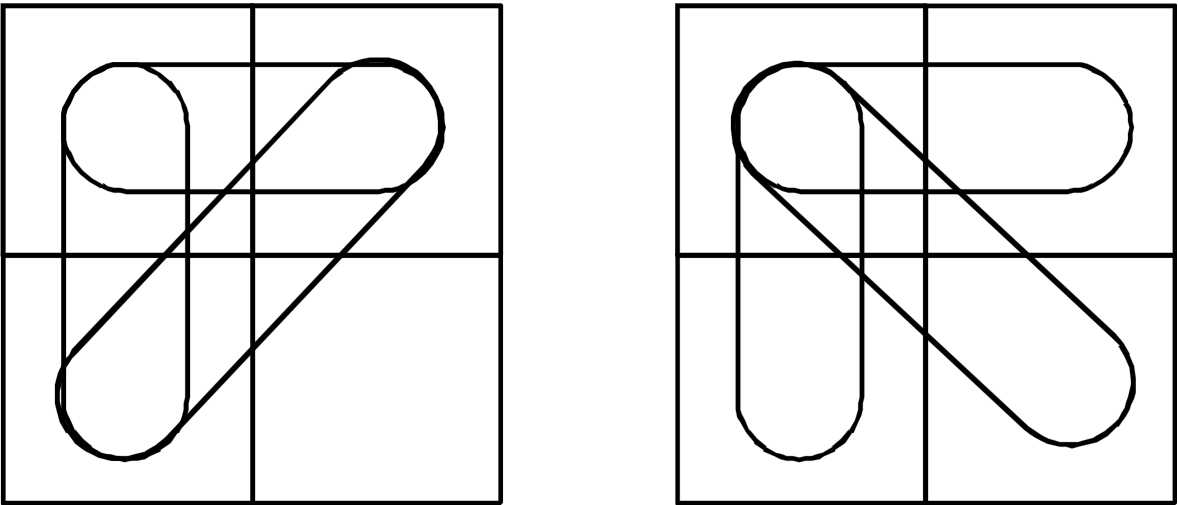}
    \caption{\label{Fig:SausageDiagrams}
    Example of two ensembles of classical probability distributions that have different information content, but for which the pairwise distinguishability are the same.
    }
\end{figure}



\section{Comparison of classical and quantum characteristic functions } \label{app:charfuncs}

The characteristic function of a quantum state can be understood as a generalization of the characteristic function of a probability distribution. In fact, this generalization was the first motivation for introducing the notion of a characteristic function for a quantum state by Gu \cite{Gu}. We first review some properties of classical characteristic functions and then we talk about their analogues in the case of quantum states and non-Abelian groups. We also review positive definiteness as the main criterion for a complex function over the group to be the characteristic function of a valid quantum state. Almost all the materials of this appendix are borrowed from  \cite{Gu,Ent-charac1, Ent-charac2}.

\subsection{Review of classical characteristic functions}

 For a real random variable $x$ with the distribution function $F(x)$  the characteristic function is defined as the expectation value of the random variable $e^{{i tx}}$ i.e.
\begin{equation}
f_{x}(t)=\int dF(x)   e^{itx}
\end{equation}
The distribution function is uniquely determined by its characteristic function. Moreover if the probability density exists then it will be equal to the inverse Fourier transform of the characteristic function. One particularly useful property of the characteristic function is the multiplicative property according to which the characteristic function of the sum of two independent random variables is equal to the product of their characteristic functions.
\begin{equation}
f_{x+y}(t)=f_{x}(t)f_{y}(t)
\end{equation}
There exists a remarkably simple proof of the central limit theorem using this multiplicative property of characteristic functions.

The derivative of characteristic functions at the origin determines the moments of the random variable.
\begin{equation}
\langle x^n \rangle =  i^{-n} \frac{d^{n}}{dt^{n}} f_{x}(t)\mid_{t=0}
\end{equation}

Sometimes it is more favourable to use \emph{cumulants} of  the random variable instead where  the $n$-th order {cumulant} is defined as the $n$-th order derivative of the logarithm of the characteristic function at the point $0$, multiplied by $i^{-n}$.
\begin{equation}
\kappa^{(n)}\equiv i^{-n} \frac{d^{n}}{dt^{n}} \log(f_{x}(t))\mid_{t=0}
\end{equation}
The first and second cumulants are the mean and the variance of the random variable.  By this definition, it turns out that  cumulants of a sum of independent random variables is equal to the sum of the cumulants of the individual terms for all orders of cumulants.

The set of all classical characteristic functions is determined by Bochner's theorem, according to which a complex function $f(t)$ is the characteristic function of a random variable if and only if (i) $f(0) = 1$,  (ii) $f(t)$ is continuous at the origin, and(iii) it is \emph{positive definite}. Recall  that a function $f(t)$ is positive definite if for any integer $n$ and for any string of real  numbers $t_{1},...,t_{n}$ the matrix $a_{i,j}\equiv f(t_{i}-t_{j})$ is a positive definite matrix.  Positive definiteness of a function guarantees that the inverse Fourier transform of  this function is positive for all values of the random variable, which is clearly a necessary condition for a function to be a probability density.

For more discussion about the properties of characteristic functions of probability distributions, see e.g. \cite{Gnedenko}.

\subsection{Quantum characteristic functions} \label{app:qcharfuncs}

As the characteristic function of a probability distribution determines all of its statistical properties, the characteristic function of a quantum state over the group $G$ uniquely specifies all the statistical properties of observables in the algebra of observables which generates the projective unitary representation of $G$. For example, suppose $L$ is the representation of a generator of the Lie  group $G$, then we have
\begin{equation}
\textnormal{tr}(\rho L^{k})= i^{-k}\frac{\partial^{k}}{\partial \theta^{k}} \chi_{\rho}(e^{i\theta L}) \mid_{{{\theta=0}}}
\end{equation}
In particular the first derivative $(k=1)$ determines the expectation value of the generator. This is just property \ref{derivatives} of characteristic functions from Section \ref{properties}.

  Similarly we can define \emph{cumulants} of  the observable $L$, where  the $n$-th order {cumulant} is defined as the $n$-th order derivative of the logarithm of the characteristic function at the identity element multiplied by $i^{-n}$.
\begin{equation}
\kappa_{L}^{(n)}\equiv i^{-k}\frac{\partial^{k}}{\partial \theta^{k}} \log[\chi_{\rho}(e^{i\theta L})] \mid_{{{\theta=0}}}
\end{equation}
The first and second cumulants are the mean and the variance of the observable.  By this definition, it turns out that the cumulants of the tensor product of two states  is equal to the sum of the cumulants of the individual states for all orders of cumulants.
\color{black}

In the rest of this appendix, we are interested to find the generalization of Bochner's theorem, i.e., the set of necessary and sufficient conditions for $\phi(g)$,  a complex function over group, to be the characteristic function of some quantum state.
We see that such a generalization can be found via both the non-commutative Fourier transform and the Gelfand-Naimark-Segal (GNS) construction theorem.  As in the rest of the paper, we focus on the finite groups and compact Lie groups.

As the first necessary condition  we note that   $\textnormal{tr}(\rho)=1$  implies that  $\chi(e)=1$ (where $e$ is the identity of the group). We call the functions which satisfy this condition \emph{normalized} functions. In the case of compact Lie groups, $\phi(g)$ should also be a continuous function. We also need a condition on $\phi(g)$ equivalent to the positivity of density operators. As we just saw in the case of probability distributions the condition of positivity of probabilities is equivalent to the positive definiteness of the characteristic function of the probability distribution. Similarly it turns out that the relevant condition on  $\phi(g)$ to be the characteristic function of a positive operator is the natural generalization of positive definiteness for the functions defined on the group:
\begin{definition}
A complex function $\phi(g)$ on a group $G$ is positive definite if for all choices $m\in \mathbb{N}$, $g_{1},...g_{m}\in G$ and $\alpha_{1},...\alpha_{m}\in \mathbb{C}$
\begin{equation}
\sum_{{i,j=1}}^{m} \bar{\alpha_{i}}\alpha_{j} \phi(g^{-1}_{i}g_{j}) \ge 0
\end{equation}
\end{definition}
For the case of compact Lie groups where the function should also be continuous we can express the condition as
\begin{definition}
A  continuous function $\phi(g)$ on a group G with the Haar measure $dg$ is called positive definite if it satisfies
\begin{equation} \label{Def-Pos}
\int\int dgdh\ \bar{f}(g)\phi(g^{-1}h)f(h)\geq\ 0
\end{equation}
for any $f\in L^{1}(G)$.
\end{definition}
Now using the Fourier transform, one can easily prove a theorem similar to Bochner's theorem \cite{Gu,Ent-charac1}:
\begin{theorem}
A complex function $\phi(g)$ on the finite or compact Lie  group $G$ is the characteristic function of a quantum state in a finite-dimensional Hilbert space iff $\phi(e)=1$,  $\phi(g)$ is positive definite and continuous (in the case of Lie groups).
\end{theorem}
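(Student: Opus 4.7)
The plan is to handle the two directions separately; necessity is essentially a one-line computation, while sufficiency will require a Gelfand--Naimark--Segal-type construction that builds a Hilbert space, a unitary representation, and a pure state directly out of $\phi$.

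For necessity, I would start from $\chi_{\rho}(g)=\mathrm{tr}(\rho U(g))$. Normalization $\phi(e)=1$ is immediate since $U(e)=I$ and $\mathrm{tr}(\rho)=1$. Continuity (in the Lie case) follows from continuity of $g\mapsto U(g)$ on a finite- (or countably-) dimensional space. For positive definiteness, observe that
\begin{equation}
\sum_{i,j}\bar{\alpha}_i\alpha_j\,\phi(g_i^{-1}g_j)=\mathrm{tr}\!\left(\rho\,T^{\dagger}T\right),\qquad T\equiv\sum_i\alpha_i U(g_i),
\end{equation}
and $T^{\dagger}T\ge 0$ together with $\rho\ge 0$ gives nonnegativity. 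The integral form (\ref{Def-Pos}) follows by the analogous identity with $T=\int dg\, f(g)U(g)$.

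For sufficiency, I would carry out a GNS construction on the group. Take the space $V$ of finitely-supported $\mathbb{C}$-valued functions on $G$ (for finite groups this is the group algebra; for compact Lie groups take the span of point masses, or equivalently continuous functions and replace sums with integrals against $dg$). Define the sesquilinear form
\begin{equation}
\langle f_1,f_2\rangle_{\phi}\equiv\sum_{g,h}\overline{f_1(g)}\,f_2(h)\,\phi(g^{-1}h),
\end{equation}
which is positive semidefinite by the positive-definiteness hypothesis. Let $N=\{f:\langle f,f\rangle_{\phi}=0\}$; then $V/N$ is a genuine inner-product space, and I would let $\mathcal{H}_{\phi}$ denote its completion. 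The left regular action $(L_a f)(g)\equiv f(a^{-1}g)$ preserves $\langle\cdot,\cdot\rangle_{\phi}$ because $\phi((ag_1)^{-1}(ag_2))=\phi(g_1^{-1}g_2)$, so it descends to a unitary representation $\{U(a):a\in G\}$ on $\mathcal{H}_{\phi}$. Let $|\psi_{\phi}\rangle$ be the class of $\delta_e$. Then
\begin{equation}
\langle\psi_{\phi}|U(a)|\psi_{\phi}\rangle=\langle\delta_e,L_a\delta_e\rangle_{\phi}=\phi(e^{-1}\cdot a)=\phi(a),
\end{equation}
and $\|\psi_{\phi}\|^2=\phi(e)=1$, so $\rho\equiv|\psi_{\phi}\rangle\langle\psi_{\phi}|$ is a pure state with characteristic function $\phi$.

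The main obstacle is the compact Lie group case, where two issues must be addressed. First, one has to verify that the constructed representation is strongly continuous, so that it is a legitimate unitary representation in the paper's sense; this follows from continuity of $\phi$ at the identity combined with the estimate $\|(U(a)-I)|\psi_{\phi}\rangle\|^2=2\,\mathrm{Re}(1-\phi(a))$, which shows continuity at $e$, and then group invariance extends it everywhere. Second, one should confirm that $\mathcal{H}_{\phi}$ has the structure assumed in the body of the paper, namely a direct sum of finite-dimensional irreps with multiplicities; this is immediate from Peter--Weyl applied to the cyclic subrepresentation generated by $|\psi_{\phi}\rangle$. An alternative, more concrete route which bypasses the GNS machinery is to \emph{define} the reduction onto irreps directly via the inverse Fourier transform, $\rho^{(\mu)}\equiv d_{\mu}\int dg\,\phi(g^{-1})U^{(\mu)}(g)$, and check positivity of each $\rho^{(\mu)}$ from positive definiteness of $\phi$ together with the Peter--Weyl orthogonality relations (\ref{orthogonality}); the trace-one condition $\sum_{\mu}\mathrm{tr}\,\rho^{(\mu)}=\phi(e)=1$ then follows by evaluating the expression for $\chi_{\rho}$ in (\ref{reduction_char}) at $g=e$. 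Either route can be supplemented by a citation to \cite{Gu,Ent-charac1} for the details, since the result is essentially a non-commutative Bochner theorem.
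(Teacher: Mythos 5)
Your proposal is correct, but your primary route differs from the one the paper takes. The paper proves both directions through the non-commutative Fourier transform: it defines $B^{(\mu)}\equiv d_{\mu}\int dg\, U^{(\mu)}(g^{-1})\phi(g)$, reduces the claim to positivity and normalization of the family $\{B^{(\mu)}\}$, and verifies positivity by writing $\mathrm{tr}(FF^{\dag}B^{(\mu)})$ as a double integral against $\phi(h_1^{-1}h_2)$ via the convolution identity (\ref{convolution}); this is precisely the ``alternative, more concrete route'' you sketch at the end. Your main argument instead proves necessity by the direct identity $\sum_{i,j}\bar{\alpha}_i\alpha_j\phi(g_i^{-1}g_j)=\mathrm{tr}(\rho\,T^{\dag}T)$ with $T=\sum_i\alpha_i U(g_i)$ (cleaner and more elementary than the paper's ``the other direction is proved similarly''), and proves sufficiency by carrying out the GNS construction explicitly — which the paper does not do in this proof, but instead quotes as a separate black-box theorem immediately afterward. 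What your route buys is self-containedness: you do not need the convolution identity or the Peter--Weyl orthogonality relations for the existence direction, and you get the cyclic pure state and its representation for free. What the paper's route buys is that the state is produced directly in the form used throughout the text (a reduction onto the known irreps $\{\rho^{(\mu)}\}$ living on $\bigoplus_\mu\mathcal{M}_\mu\otimes\mathcal{N}_\mu$), with no need to verify strong continuity or to invoke Peter--Weyl after the fact to decompose the GNS space. Your attention to the two loose ends of the GNS approach (strong continuity via $\|(U(a)-I)|\psi_\phi\rangle\|^2=2\,\mathrm{Re}(1-\phi(a))$, and decomposability into finite-dimensional irreps) is exactly what is needed to make that route rigorous in the compact Lie case. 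Both arguments are sound; since you also indicate the Fourier-transform shortcut, your writeup effectively subsumes the paper's proof.
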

\color{black}
\begin{proof}
We present the proof assuming that the group $G$ is a compact Lie group (The same argument works for a finite group by replacing integrals with summation.).  We use the inverse Fourier transform. Suppose $B^{(\mu)}\equiv d_{\mu}\int dg U^{(\mu)}(g^{-1})\phi(g)$. Then the set of operators $\{B^{(\mu)}\}$ is the reduction onto irreps of a valid quantum state  iff (1) $\sum_{\mu} \textnormal{tr}(B^{(\mu)})=1$ and (2) all operators $\{B^{(\mu)}\}$ are positive definite. The first condition expresses the fact that the trace of the state is 1 and is guaranteed by $\phi(e)=1$.  On the other hand,  $B^{(\mu)}$ is positive iff $\textnormal{tr}(FF^\dag B^{(\mu)})\geq 0$ for all operators $F$ acting on $\mathcal{M}_\mu$ (the subsystem on which $U^{\mu}$ acts irreducibly). Note that $\textnormal{tr}(FF^\dag B^{(\mu)})$ is equal to the Fourier transform of  the operator $FF^\dag B^{(\mu)}$ at point $e$. So using the convolution property of characteristic functions, Eq.(\ref{convolution}), we get
\begin{equation}
\textnormal{tr}(FF^\dag B^{(\mu)})=d^2_\mu \int \int dh_1 dh_2 f(h_1) \overline{f(h_2)} \phi(h^{-1}_1h_2).
\end{equation}
So if $\phi(g)$ is positive definite and therefore satisfies Eq.~(\ref{Def-Pos}) then all $B^{(\mu)}$'s are positive. We can prove the other direction of the theorem similarly.
\end{proof}

Therefore the set of normalized  positive definite functions (also continuous in the case of Lie groups) are exactly the set of characteristic functions of states.

We can also get this result using a more fundamental theorem in the representation theory of $C^*$ algebras, namely, the GNS construction. A specific form of this theorem states
\begin{theorem}[GNS construction]
With every (continuous) positive definite function $\phi(g)$ we can associate a Hilbert space $\mathcal{H}$, a unitary representation $\{U(g): g\in G\}$ of $G$ in $\mathcal{H}$ and a vector $\psi$, cyclic for $\{U(g): g\in G\}$, such that
\begin{equation}
\phi(g)=\left\langle \psi\right\vert U(g)\left\vert \psi\right\rangle .
\end{equation}
Moreover the representation $\{U(g)\}$ is unique up to a unitary equivalence.
\end{theorem}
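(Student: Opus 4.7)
The plan is to build the Hilbert space, representation, and cyclic vector directly out of $\phi$ itself (the standard GNS recipe), and then establish uniqueness by an intertwining argument based on cyclicity.

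First, I would take the complex vector space $V$ of formal finite linear combinations of group elements (or continuous functions of compact support in $L^{1}(G)$ in the Lie case) and equip it with the sesquilinear form
\begin{equation}
\left\langle \sum_i c_i \delta_{g_i},\, \sum_j d_j \delta_{h_j}\right\rangle_\phi \equiv \sum_{i,j}\bar{c}_i d_j\, \phi(g_i^{-1}h_j),
\end{equation}
with the integral analogue $\int\!\!\int \bar{f_1}(g) f_2(h) \phi(g^{-1}h)\,dg\,dh$ in the compact Lie case. The hypothesis that $\phi$ is positive definite is exactly the statement that this form is positive semi-definite. Modding out by the null subspace $N=\{f:\langle f,f\rangle_\phi=0\}$ yields an inner-product space, whose completion I call $\mathcal{H}$.

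Second, I would define $U(g)$ to be left translation, $U(g)[\delta_h]=\delta_{gh}$, extended linearly (or $(U(g)f)(h)=f(g^{-1}h)$ on functions). The key identity
\begin{equation}
\phi\bigl((gg_i)^{-1}(gh_j)\bigr)=\phi(g_i^{-1}h_j)
\end{equation}
shows both that $U(g)$ preserves $\langle\cdot,\cdot\rangle_\phi$ and that the null subspace $N$ is invariant; hence $U(g)$ descends to an isometry on $V/N$ and extends to a unitary on $\mathcal{H}$. The group law $U(g_1)U(g_2)=U(g_1g_2)$ is immediate. I would then take $\psi\in\mathcal{H}$ to be the image of $\delta_e$, for which a one-line computation gives $\langle\psi,U(g)\psi\rangle_\phi=\phi(e^{-1}g)=\phi(g)$, and cyclicity of $\psi$ is built in since $\{U(g)\psi\}=\{[\delta_g]\}$ spans a dense subspace of $\mathcal{H}$ by construction.

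For uniqueness, given any other triple $(\mathcal{H}',U',\psi')$ with the same reproducing property, I would define $W$ on the dense subspace $\mathrm{span}\{U(g)\psi\}$ by $W\bigl(U(g)\psi\bigr)=U'(g)\psi'$. Both well-definedness and isometry of $W$ reduce to the identical computation
\begin{equation}
\langle U(g_i)\psi,U(g_j)\psi\rangle=\phi(g_i^{-1}g_j)=\langle U'(g_i)\psi',U'(g_j)\psi'\rangle,
\end{equation}
and extending $W$ by continuity using cyclicity on both sides gives a unitary that intertwines $U$ and $U'$ and carries $\psi$ to $\psi'$.

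The delicate steps are two: first, the well-definedness of $U(g)$ on the quotient, which is precisely where the translation invariance of the kernel $\phi(g_i^{-1}g_j)$ does the work (so this is really a bookkeeping step rather than a genuine obstacle); and second, in the compact Lie case, verifying strong continuity of $g\mapsto U(g)\psi$ and finite-dimensionality of the invariant irreducible blocks. The former follows by a standard argument from continuity of $\phi$ at the identity, since $\|U(g)\psi-\psi\|^{2}=2\mathrm{Re}\bigl(\phi(e)-\phi(g)\bigr)$; the latter follows from compactness of $G$ and the Peter--Weyl decomposition of the cyclic representation so generated.
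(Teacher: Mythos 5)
Your construction is correct, and it is the standard GNS recipe. Note that the paper itself does not prove this theorem: it is invoked as a known result from the representation theory of $C^*$-algebras, so there is no in-paper argument to compare against. All the essential steps in your write-up check out: positive definiteness of $\phi$ is precisely positive semi-definiteness of the form $\sum_{i,j}\bar{c}_i d_j\,\phi(g_i^{-1}h_j)$; translation invariance of the kernel makes left translation descend to the quotient as a unitary; the image of $\delta_e$ reproduces $\phi$ and is cyclic by construction; and in the uniqueness step the equality of Gram matrices simultaneously gives well-definedness and isometry of the intertwiner $W$, which is the one place a careless argument could go wrong. Your identity $\|U(g)\psi-\psi\|^2=2\,\mathrm{Re}\bigl(\phi(e)-\phi(g)\bigr)$ correctly reduces strong continuity to continuity of $\phi$ at the identity (and it propagates to the whole space by the group law and uniform boundedness). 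It is worth observing that the paper's own route to the same characterization of characteristic functions is the Fourier-analytic one proved just before this statement: positivity of the blocks $B^{(\mu)}=d_\mu\int dg\,U^{(\mu)}(g^{-1})\phi(g)$ gives a concrete, manifestly finite-dimensional realization of the state, whereas your GNS argument is representation-independent and produces the Hilbert space abstractly; the two are complementary, with Peter--Weyl bridging them, as you note at the end.
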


Note that a vector $|\xi\rangle$ is cyclic for the representation $\{U(g): g\in G\}$ on the space $\mathcal{H}$  if the span of  vectors $\{U(g)|\xi\rangle:\ g\in G\}$ is a dense subset of the space $\mathcal{H}$.

Therefore the GNS construction theorem guarantees that for any given (continuous) normalized positive definite function there exists a corresponding pure cyclic state with that characteristic function.
Note that for any arbitrary mixed or pure state there exists a pure state which is cyclic (for the representation on its Hilbert space) with exactly the same characteristic function. So the set of all (continuous) normalized, positive definite functions is exactly the same as  the set of all characteristic functions of states.


\section{More on the approximate notion of unitary G-equivalence} \label{app:proofofapproxGequivalence}

In this section,  we prove  theorem~\ref{approximate_theorem} and   present some other versions of this result.

Using the standard bounds between fidelity and trace distance of two operators  \cite{Nie00}, we can express this result in terms of the trace distance between the reductions.   As it may be useful in future applications, we present this reformulation of the condition as a corollary of theorem~\ref{approximate_theorem}.
\begin{corollary} \label{cor:closenessofreductions}
Suppose  $\{F_1^{(\mu)}\}$ and $\{F_2^{(\mu)}\}$ are respectively the reductions onto irreps of states $\psi_1, \psi_2\in\mathcal{H}$. Then there exists a  G-invariant unitary $V$  acting on $\mathcal{H}$  such that
\begin{equation}
|\langle\psi_2|V|\psi_1\rangle| \ge 1-\frac{1}{2}\sum_\mu \|F_{1}^{(\mu)}-F_{2}^{(\mu)}\|.
\end{equation}
\end{corollary}

In the following we present a similar bound in terms of the distance between characteristic functions of states  $\chi_{\psi_{1,2}}(g)$ and another bound in terms of the distance between the components of characteristic functions $\{\chi^{(\mu)}_{\psi_{1,2}}(g)\}$ where the $\mu$ component of  $\chi_{\psi_{1,2}}(g)$ is defined as
\color{black}
\begin{align*}
\chi^{(\mu)}_{\psi_{1,2}}(g)&\equiv \textnormal{tr}(U^{(\mu)}(g)F_{1,2}^{(\mu)})\\
&= d_{\mu}\ \textnormal{tr}(U^{(\mu)}(g) \int dh U^{(\mu)}(h^{-1}) \chi_{\psi_{1,2}}(h)  )\\
 &=d_{\mu}\  \left( \varphi_{\mu}\ast\chi_{\psi_{1,2}}\right) (g),
\end{align*}
where $\varphi_{\mu}(g)=\textnormal{tr}(U^{(\mu)}(g))$ is the character of irrep $\mu$ and $*$ is the convolution operation defined in Eq.~(\ref{def-conv}).

\begin{corollary} \label{cor:closenessintermsofcharfuncs}
Suppose  $\chi_{\psi_1}$ and $\chi_{\psi_2}$ are respectively the characteristic functions of states $\psi_1$ and $\psi_2$. Then there exists a  G-invariant unitary $V$  such that
\begin{equation}
|\langle\psi_2|V|\psi_1\rangle|\geq 1-\frac{1}{2} (\sum_\mu d^2_\mu) \int dg |\chi_{\psi_{1}}(g)-\chi_{\psi_{2}}(g)|,
\end{equation}
and
\begin{equation}
|\langle\psi_2|V|\psi_1\rangle|\geq 1-\frac{1}{2} \sum_\mu d^2_\mu \left( \int dg\  |\chi^{(\mu)}_{\psi_{1}}(g)-\chi^{(\mu)}_{\psi_{2}}(g)|\right),
\end{equation}
where the summation is over all irreps in which  $\psi_1$ and $\psi_2$ have nonzero components.
\end{corollary}

To prove theorem \ref{approximate_theorem}, we first recall a well-known theorem by Uhlmann (see e.g. \cite{Watrous}).
\begin{theorem}
\textbf{(Uhlmann)} Suppose $A_1$ and $A_2$ are two positive operators on $\mathcal{H}$. Also suppose $\mathcal{H'}$ is a space large enough such that $\mathcal{H}\otimes\mathcal{H}'$ admits a purification of both $A_1$ and $A_2$. Suppose for $k\in\{1,2\}$ that $|\alpha_k\rangle$ is a purification of $A_k$ on  $\mathcal{H}\otimes\mathcal{H'}$, i.e. $tr_{\mathcal{H'}}(|\alpha_k\rangle\langle\alpha_k|)=A_{k}$.  In this case,
\begin{align}
\textnormal{Fid}(A_1,A_2) &\equiv \|\sqrt{A_1}\sqrt{A_2}\| \\ &=\max\{|\langle\alpha_1|\alpha_2\rangle| :  tr_{\mathcal{H'}}(|\alpha_2\rangle\langle\alpha_2|)=A_2 \}.
\end{align}
\end{theorem}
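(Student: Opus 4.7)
The plan is to reduce Uhlmann's theorem to the standard trace-norm identity $\max_W |\mathrm{tr}(MW)| = \|M\|_1$ (maximum over unitaries) applied to $M = \sqrt{A_1}\sqrt{A_2}$. The reduction proceeds in three conceptual steps: canonical parametrization of purifications, a direct computation of the overlap, and invocation of the trace-norm identity.

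First I would parametrize the purifications of $A_k$ in a canonical way. Using a chosen orthonormal basis of $\mathcal{H}$, set $|\Omega\rangle \equiv \sum_i |i\rangle \otimes |i\rangle$ (the unnormalized maximally entangled vector, regarded as living in $\mathcal{H} \otimes \mathcal{H}'$ via an isometric inclusion) and let $|\alpha^{\mathrm{can}}_k\rangle \equiv (\sqrt{A_k} \otimes I)|\Omega\rangle$ be the canonical purification of $A_k$, whose marginal on $\mathcal{H}$ is easily seen to be $A_k$. The standard fact that any two purifications of the same operator on the same bipartite Hilbert space are related by a unitary on the ancilla then yields the parametrization $|\alpha_k\rangle = (\sqrt{A_k} \otimes U_k)|\Omega\rangle$ for some unitary $U_k$ on $\mathcal{H}'$. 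Second, the ricochet identity $(X \otimes I)|\Omega\rangle = (I \otimes X^T)|\Omega\rangle$ together with $\langle \Omega|(A \otimes B)|\Omega\rangle = \mathrm{tr}(AB^T)$ gives immediately
\[
\langle \alpha_1 | \alpha_2 \rangle = \mathrm{tr}\bigl(\sqrt{A_1}\sqrt{A_2}\, W\bigr),
\qquad W \equiv (U_1^\dagger U_2)^T.
\]
With $|\alpha_1\rangle$ (hence $U_1$) held fixed, as $|\alpha_2\rangle$ varies over purifications of $A_2$ the operator $W$ ranges over all unitaries on $\mathcal{H}'$. Third, I would apply the identity $\max_W |\mathrm{tr}(MW)| = \|M\|_1$ to $M = \sqrt{A_1}\sqrt{A_2}$ to conclude $\max_{|\alpha_2\rangle} |\langle \alpha_1|\alpha_2\rangle| = \|\sqrt{A_1}\sqrt{A_2}\|_1$.

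The trace-norm identity itself is a quick consequence of the singular value decomposition: writing $M = \sum_i s_i |u_i\rangle\langle v_i|$, one has $|\mathrm{tr}(MW)| \leq \|MW\|_1 \leq \|M\|_1 \|W\|_\infty = \|M\|_1$, with equality when $W$ is chosen to send each $|u_i\rangle$ to $|v_i\rangle$. The main subtlety in the plan is ensuring that the ancilla Hilbert space $\mathcal{H}'$ is large enough: the parametrization in the first step requires $\dim \mathcal{H}' \geq \mathrm{rank}(A_k)$ for each $k$, and the third step requires that $W$ genuinely range over all unitaries on the subspace where $\sqrt{A_1}\sqrt{A_2}$ is supported. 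Both conditions are guaranteed by the explicit hypothesis that $\mathcal{H} \otimes \mathcal{H}'$ admits purifications of both $A_1$ and $A_2$, so the plan goes through without further assumptions.
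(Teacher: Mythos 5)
Your argument is correct, but there is nothing in the paper to compare it against: the paper does not prove this statement, it simply recalls Uhlmann's theorem as a known result and cites Watrous's lecture notes. Your proof is the standard one (canonical purification via the unnormalized maximally entangled vector, the ricochet identity to reduce the overlap to $\mathrm{tr}(\sqrt{A_1}\sqrt{A_2}\,W)$, and the trace-norm variational identity $\max_W|\mathrm{tr}(MW)|=\|M\|_1$), and each step checks out, including the singular-value-decomposition proof of the trace-norm identity. The one place where you are slightly glib is the phrase ``regarded as living in $\mathcal{H}\otimes\mathcal{H}'$ via an isometric inclusion'': if $\dim\mathcal{H}'<\dim\mathcal{H}$ there is no isometry of $\mathcal{H}$ into $\mathcal{H}'$, so $|\Omega\rangle=\sum_i|i\rangle\otimes|i\rangle$ summed over a full basis of $\mathcal{H}$ does not literally exist in $\mathcal{H}\otimes\mathcal{H}'$. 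The clean fix is to run the whole argument on the supports: take the Schmidt decomposition $|\alpha_k\rangle=\sum_i\sqrt{\lambda^{(k)}_i}\,|e^{(k)}_i\rangle\otimes|f^{(k)}_i\rangle$, which exhibits every purification as $(\sqrt{A_k}\otimes I)$ applied to a vector $\sum_i|e^{(k)}_i\rangle\otimes|f^{(k)}_i\rangle$ with $\{|f^{(k)}_i\rangle\}$ an orthonormal set in $\mathcal{H}'$ of size $\mathrm{rank}(A_k)\le\dim\mathcal{H}'$; the operator $W$ then only needs to implement a partial isometry between the $\mathrm{rank}(\sqrt{A_1}\sqrt{A_2})$-dimensional singular subspaces, which extends to a unitary of $\mathcal{H}'$ precisely because the hypothesis guarantees $\dim\mathcal{H}'\ge\max_k\mathrm{rank}(A_k)$. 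With that bookkeeping made explicit, the proof is complete.
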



\begin{proof} (theorem \ref{approximate_theorem} and remark \ref{remark-unit})

Suppose $\mathcal{M}_\mu\otimes \mathcal{N}_\mu$ is the  subspace associated to irrep $\mu$ in $\mathcal{H}$ and $\Pi_\mu$ is the projective operator to this subspace. Define
\begin{equation}
|\psi^{(\mu)}_{1,2}\rangle\equiv \Pi_\mu |\psi_{1,2}\rangle.
\end{equation}
Suppose $V$ is an arbitrary G-invariant unitary. Define  $|\widetilde{\psi}\rangle\equiv V|\psi_1\rangle$ and $ |\widetilde{\psi}^{(\mu)}\rangle\equiv  \Pi_\mu  V|\psi_1\rangle$.  Then
\begin{equation} \label{bound1}
|\langle\psi_2|V|\psi_1\rangle|=|\sum_\mu \langle\psi^{(\mu)}_2|\widetilde{\psi}^{(\mu)}\rangle| \le \sum_\mu |\langle\psi^{(\mu)}_2|\widetilde{\psi}^{(\mu)}\rangle|.
\end{equation}

Also define
\begin{equation}
F_{1,2}^{(\mu)}\equiv \textnormal{tr}_{\mathcal{N_\mu}}(|\psi^{(\mu)}_{1,2}\rangle\langle\psi^{(\mu)}_{1,2}|),
\end{equation}
where $F_{1}^{(\mu)}$ and $F_{2}^{(\mu)}$ are both operators acting on $\mathcal{M}_\mu$.

The fact that $V$ is G-invariant implies that $|\widetilde{\psi}\rangle$ and   $|{\psi_{1}}\rangle$ have the same reductions onto irreps, i.e., for all $\mu$
\begin{equation}
\textnormal{tr}_{\mathcal{N_\mu}}(|\widetilde{\psi}^{(\mu)}\rangle\langle\widetilde{\psi}^{(\mu)}|)=\textnormal{tr}_{\mathcal{N_\mu}}(|\psi^{(\mu)}_{1}\rangle\langle\psi^{(\mu)}_{1}|)=F_{1}^{(\mu)}.
\end{equation}
Since $|\widetilde{\psi}^{(\mu)}\rangle$ and $|\psi^{(\mu)}_2\rangle$ are purifications of $F_{1}^{(\mu)}$ and $F_{2}^{(\mu)}$, then according to Uhlmann's theorem,
\begin{equation}
|\langle\psi^{(\mu)}_2|\widetilde{\psi}^{(\mu)}\rangle| \le \textnormal{Fid}(F_{1}^{(\mu)},F_{2}^{(\mu)}).
\end{equation}
This inequality together with the inequality (\ref{bound1}) implies the bound (\ref{approximate_trans}).

Now we prove that this bound is achievable. According to Uhlmann's theorem there exists a purification of $F_{1}^{(\mu)}$, denoted by $|{\phi}^{(\mu)}\rangle$, such that
\begin{equation}\label{eq-proof-ineq-fid}
\textnormal{Fid}(F_{1}^{(\mu)},F_{2}^{(\mu)})= |\langle \psi_2^{(\mu)}|{\phi}^{(\mu)}\rangle|.
\end{equation}
But all purifications of $F_{1}^{{(\mu)}}$ can be transformed to each other by unitaries acting on  $\mathcal{N}_\mu$ (and acting trivially on $\mathcal{M}_\mu$). So there exists a unitary $V^{(\mu)}$ acting on $\mathcal{N_\mu}$ such that  $I\otimes V^{(\mu)} |\psi_1^{(\mu)}\rangle=|{\phi}^{(\mu)}\rangle$.
Now define
\begin{equation}
V\equiv \bigoplus_\mu e^{i\theta_\mu} I\otimes V^{(\mu)}
\end{equation}
where $\{e^{i\theta_\mu}\}$ are chosen such that all the numbers $\{e^{i\theta_\mu}\langle \psi_2^{(\mu)}|{\phi}^{(\mu)}\rangle\}$ have the same phase. Note that with this definition $V$ is a G-invariant unitary. Then we get
\begin{equation}
|\langle\psi_2|V|\psi_1\rangle|=|\sum_\mu e^{i\theta_\mu}  \langle \psi_2^{(\mu)}|{\phi}^{(\mu)}\rangle|= \sum_\mu |\langle \psi_2^{(\mu)}|{\phi}^{(\mu)}\rangle|
\end{equation} 
where the second equality holds because we have chosen $\{e^{i\theta_\mu}\}$ such that all the terms in the summand  have the same phase. Therefore for this G-invariant unitary we have
\begin{equation} \label{bound}
|\langle\psi_2|V|\psi_1\rangle|=\sum_\mu \textnormal{Fid}(F_{1}^{(\mu)},F_{2}^{(\mu)}).
\end{equation}
This completes the proof of theorem \ref{approximate_theorem}. To prove remark \ref{remark-unit}, we infer from Eq.~(\ref{eq-proof-ineq-fid}) that
\begin{align*}
\sum_{\mu}\textnormal{Fid}(F_{1}^{(\mu)},F_{2}^{(\mu)})&= \sum_{\mu}|\langle \psi_2^{(\mu)}|{\phi}^{(\mu)}\rangle|\\ &\le \sum_{\mu}\sqrt{\langle \psi_2^{(\mu)}|\psi_2^{(\mu)}\rangle} \sqrt{\langle {\phi}^{(\mu)}|{\phi}^{(\mu)}\rangle}\\ &\le \sqrt{\sum_{\mu}{\langle \psi_2^{(\mu)}|\psi_2^{(\mu)}\rangle}} \sqrt{\sum_{\mu} {\langle {\phi}^{(\mu)}|{\phi}^{(\mu)}\rangle}}=1,
\end{align*}
where both of the inequalities are implied by the Cauchy-Schwarz inequality and the last equality is implied by the normalization of states. Now we note that the last inequality holds as an equality iff $\forall\mu: \ \langle \psi_2^{(\mu)}|\psi_2^{(\mu)}\rangle=k \langle \phi^{(\mu)}|\phi^{(\mu)}\rangle$ for some constant $k$. But the normalization of states implies that $\forall \mu: \langle \psi_2^{(\mu)}|\psi_2^{(\mu)}\rangle=\langle \phi^{(\mu)}|\phi^{(\mu)}\rangle=1$.  Furthermore, the first inequality holds as an equality  if and only if for each $\mu$ there is a constant $c_{\mu}$ such that $|\psi_2^{(\mu)}\rangle=c_{\mu}|\phi^{(\mu)}\rangle$. These two observations together imply that $\sum_{\mu}\textnormal{Fid}(F_{1}^{(\mu)},F_{2}^{(\mu)})\le 1$ and the equality holds only if
\begin{equation}
\forall\mu:\ |\psi^{(\mu)}_{2}\rangle\langle\psi^{(\mu)}_{2}|=|\phi^{(\mu)}\rangle\langle\phi^{(\mu)}|.
\end{equation}
But $|\psi^{(\mu)}_{2}\rangle$ is a purification of $F_{2}^{(\mu)}$ and $|\phi^{(\mu)}\rangle$ is a purification of $F_{1}^{(\mu)}$. So the above equality implies that
\begin{equation}
\forall\mu:\ F_{1}^{(\mu)}=F_{2}^{(\mu)}.
\end{equation}
This completes the proof of remark~\ref{remark-unit}.
\end{proof}

To prove corollary \ref{cor:closenessofreductions}, we begin by recalling some facts about the trace distance.
For density operators $\rho_{1}$ and $\rho_{2}$ it is well known  that  $\|\rho_{1}-\rho_{2}\|\ge 2(1- \textnormal{Fid}(\rho_1,\rho_2))$ \cite{Nie00,Watrous}. Using the same argument it can be easily seen that for general positive operators $A_{1}$ and $A_{2}$,  we have the following lemma
\begin{lemma} \label{lemma:tracedistancefidelity}
Suppose  $A_1$ and $A_2$ are two positive operators. Then
\begin{equation}
\|A_1-A_2\| \ge \textnormal{tr}(A_1)+\textnormal{tr}(A_2)-2 \textnormal{Fid}(A_1,A_2).
\end{equation}
\end{lemma}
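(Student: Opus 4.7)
The plan is to prove this generalized Fuchs--van de Graaf lower bound in two steps: (i) a Powers--Stormer-type bound $||A_1 - A_2||_1 \ge ||\sqrt{A_1} - \sqrt{A_2}||_2^2$, and (ii) an elementary bound $\textrm{tr}(\sqrt{A_1}\sqrt{A_2}) \le \textrm{Fid}(A_1, A_2)$. This mirrors the standard argument in the density-operator case.

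For step (i), set $H = \sqrt{A_1} - \sqrt{A_2}$ and $T = \sqrt{A_1} + \sqrt{A_2}$, both Hermitian with $T \ge 0$. A direct expansion yields the operator identity $A_1 - A_2 = \tfrac{1}{2}(HT + TH)$. Let $W = \mathrm{sign}(H)$, a partial isometry with $||W||_\infty \le 1$ satisfying $WH = HW = |H|$. Cyclicity of the trace then gives
\[
\textrm{tr}\bigl(W(A_1 - A_2)\bigr) = \tfrac{1}{2}\bigl[\textrm{tr}(WHT) + \textrm{tr}(WTH)\bigr] = \textrm{tr}(|H|\,T),
\]
and since $||W||_\infty \le 1$ this is bounded above by $||A_1 - A_2||_1$. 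To compare $\textrm{tr}(|H|T)$ with $\textrm{tr}(H^2)$, I would pass to the eigenbasis $\{|i\rangle\}$ of $H$ and set $x_i = \langle i|\sqrt{A_1}|i\rangle \ge 0$, $y_i = \langle i|\sqrt{A_2}|i\rangle \ge 0$. The eigenvalues of $H$ satisfy $\lambda_i = x_i - y_i$, and the scalar inequality $|x - y|(x + y) \ge (x - y)^2$ for nonnegative $x, y$ yields, on summing,
\[
\textrm{tr}(|H|\,T) = \sum_i |\lambda_i|(x_i + y_i) \ge \sum_i (x_i - y_i)^2 = \textrm{tr}(H^2) = ||\sqrt{A_1} - \sqrt{A_2}||_2^2.
\]

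For step (ii), $\textrm{tr}(\sqrt{A_1}\sqrt{A_2})$ is real by cyclicity of the trace, so $\textrm{tr}(\sqrt{A_1}\sqrt{A_2}) \le |\textrm{tr}(\sqrt{A_1}\sqrt{A_2})| \le ||\sqrt{A_1}\sqrt{A_2}||_1 = \textrm{Fid}(A_1, A_2)$. Expanding
\[
||\sqrt{A_1} - \sqrt{A_2}||_2^2 = \textrm{tr}(A_1) + \textrm{tr}(A_2) - 2\,\textrm{tr}(\sqrt{A_1}\sqrt{A_2})
\]
and chaining with step (i) gives the desired bound $||A_1 - A_2||_1 \ge \textrm{tr}(A_1) + \textrm{tr}(A_2) - 2\,\textrm{Fid}(A_1, A_2)$.

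The main obstacle is the Powers--Stormer inequality $\textrm{tr}(|H|T) \ge \textrm{tr}(H^2)$: one might naively hope to deduce this from an operator inequality $T \ge |H|$, but this fails for non-commuting $\sqrt{A_1}, \sqrt{A_2}$ (e.g.\ $\sqrt{A_1} = |0\rangle\langle 0|$, $\sqrt{A_2} = |+\rangle\langle +|$). The resolution is to diagonalize $H$ and invoke a pointwise scalar inequality, exploiting that $\sqrt{A_1}$ and $\sqrt{A_2}$ have nonnegative diagonal entries in every basis (since they are positive semidefinite).
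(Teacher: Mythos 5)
Your proof is correct. Every step checks out: the identity $A_1-A_2=\tfrac{1}{2}(HT+TH)$, the duality bound $\mathrm{tr}\bigl(W(A_1-A_2)\bigr)\le \|W\|_\infty\|A_1-A_2\|_1$, the reduction of $\mathrm{tr}(|H|T)\ge\mathrm{tr}(H^2)$ to the scalar inequality $|x-y|(x+y)\ge(x-y)^2$ via the nonnegativity of the diagonal entries of $\sqrt{A_1},\sqrt{A_2}$ in the eigenbasis of $H$, and the final chaining, where $\mathrm{tr}(\sqrt{A_1}\sqrt{A_2})\le\|\sqrt{A_1}\sqrt{A_2}\|_1=\mathrm{Fid}(A_1,A_2)$ enters with the correct sign because of the factor $-2$. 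The route is genuinely different from the paper's, however: the paper offers no explicit argument at all, merely asserting that the well-known bound $\|\rho_1-\rho_2\|_1\ge 2(1-\mathrm{Fid}(\rho_1,\rho_2))$ for density operators (citing Nielsen--Chuang and Watrous, where the proof goes through Uhlmann's theorem or through an optimal measurement reducing the claim to the classical inequality $\sum_m(\sqrt{p_m}-\sqrt{q_m})^2\le\sum_m|p_m-q_m|$) extends ``by the same argument'' to unnormalized positive operators. Your Powers--St{\o}rmer-style argument is self-contained, works for arbitrary positive operators from the outset with no normalization anywhere, and in fact proves the slightly stronger intermediate bound $\|A_1-A_2\|_1\ge\|\sqrt{A_1}-\sqrt{A_2}\|_2^2$; the cost is that you must handle the operator-ordering subtlety you correctly flag (one cannot use $T\ge|H|$ as an operator inequality), which the measurement-based textbook route sidesteps by working with probability vectors. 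Either approach is acceptable here; yours actually fills in the gap the paper leaves implicit.
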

We now provide the proof.

\begin{proof}(corollary \ref{cor:closenessofreductions})

According to lemma \ref{lemma:tracedistancefidelity},
\begin{equation}
\textnormal{Fid}(F_{1}^{(\mu)},F_{2}^{(\mu)}) \ge \frac{1}{2}\left(\textnormal{tr}(F_{1}^{(\mu)})+\textnormal{tr}(F_{2}^{(\mu)})-\|F_{1}^{(\mu)}-F_{2}^{(\mu)}\|\right),
\end{equation}
which implies
\begin{align*}
\sum_\mu \textnormal{Fid}(F_{1}^{(\mu)},F_{2}^{(\mu)})  \ge \frac{1}{2}&(\sum_\mu \textnormal{tr}(F_{1}^{(\mu)})
+\\ &\sum_\mu \textnormal{tr}(F_{2}^{(\mu)})- \sum_\mu\|F_{1}^{(\mu)}-F_{2}^{(\mu)}\|)\\
&=1-\frac{1}{2}\sum_\mu\|F_{1}^{(\mu)}-F_{2}^{(\mu)}\|,
\end{align*}
where we have used the fact that the sum of the traces of the elements of the reduction onto irreps is 1.  Combining this bound with theorem \ref{approximate_theorem}, we obtain the desired result.
\end{proof}

\begin{proof} (corollary \ref{cor:closenessintermsofcharfuncs})
According to the Fourier transform, Eq.~(\ref{Fourier}),
\begin{equation}
F_{1,2}^{(\mu)}=d_\mu\int dg\  U^{(\mu)}(g^{-1}) \chi_{\psi_{1,2}}(g).
\end{equation}
Therefore
\begin{align*}
\|F_{1}^{(\mu)}-F_{2}^{(\mu)}\|&=d_\mu\  \|\left|\int dg\ U^{(\mu)}(g^{-1}) \left[\chi_{\psi_{1}}(g)-\chi_{\psi_{2}}(g)\right]\right|\|\\ &\le  d_\mu \int dg\   \|U^{(\mu)}(g^{-1})\| \left| \chi_{\psi_{1}}(g)-\chi_{\psi_{2}}(g) \right|.
\end{align*}
Since $U^{(\mu)}(g^{-1})$ is a unitary acting on a $d_\mu$-dimensional space, $\|U^{(\mu)}(g^{-1})\|=d_\mu$. So we have
\begin{equation}
\|F_{1}^{(\mu)}-F_{2}^{(\mu)}\| \le  d^2_\mu \int dg \left|\chi_{\psi_{1}}(g)-\chi_{\psi_{2}}(g)\right|.
\end{equation}
Therefore we have
\begin{equation}
\sum_\mu \|F_{1}^{(\mu)}-F_{2}^{(\mu)}\| \le  \left(\sum_\mu d^2_\mu\right) \int dg |\chi_{\psi_{1}}(g)-\chi_{\psi_{2}}(g)|,
\end{equation}
where the summation is over all irreps in which  $\psi_1$ and $\psi_2$ have nonzero components.


The second bound on $\sum_\mu \|F_{1}^{(\mu)}-F_{2}^{(\mu)}\|$ is obtained as follows.

Recalling the definition of the $\mu$ component of  $\chi_{\psi_{1,2}}(g)$,
the orthonormality of matrix elements of different irreps implies
\begin{equation}
F_{1,2}^{(\mu)}=d_\mu\int dg\  U^{(\mu)}(g^{-1}) \chi^{(\mu)}_{\psi_{1,2}}(g).
\end{equation}
Therefore
\begin{align*}
\|F_{1}^{(\mu)}-F_{2}^{(\mu)}\|&= d_\mu\|\int dg\ U^{(\mu)}(g^{-1}) \left[\chi^{(\mu)}_{\psi_{1}}(g)-\chi^{(\mu)}_{\psi_{2}}(g)\right]\| \\ &\le  d_\mu \int dg\   \|U^{(\mu)}(g^{-1})\| \left| \chi^{(\mu)}_{\psi_{1}}(g)-\chi^{(\mu)}_{\psi_{2}}(g) \right|.
\end{align*}
Using the fact that $\|U^{(\mu)}(g^{-1})\|=d_\mu$ again, we have
\begin{equation}
\|F_{1}^{(\mu)}-F_{2}^{(\mu)}\| \le  d^2_\mu \int dg \left|\chi^{(\mu)}_{\psi_{1}}(g)-\chi^{(\mu)}_{\psi_{2}}(g)\right|.
\end{equation}
Therefore we have
\begin{equation}
\sum_\mu \|F_{1}^{(\mu)}-F_{2}^{(\mu)}\| \le  \sum_\mu d^2_\mu \int dg\  |\chi^{(\mu)}_{\psi_{1}}(g)-\chi^{(\mu)}_{\psi_{2}}(g)|,
\end{equation}
where the summation is over all irreps $\mu$ in which  $F_{1}^{(\mu)}$ or $F_{2}^{(\mu)}$ are nonzero.
\end{proof}

\end{document}